\documentclass[11pt]{article}
\usepackage{graphicx} 

\usepackage{mathpazo}
\usepackage{pdfpages}
\usepackage{bm}
\usepackage[margin=1in]{geometry}
\usepackage{amsmath,amssymb}

\usepackage[utf8]{inputenc}
\usepackage{amsmath,amsfonts,fullpage,amsthm,mathrsfs,fancyhdr,verbatim,framed,hyperref,xspace,mdframed,tabularx,booktabs,enumitem,comment,float}
\usepackage{xcolor}
\usepackage{cleveref}
\usepackage{color}
\usepackage{graphicx}
\usepackage{amsthm}
\usepackage{mathpazo}
\usepackage{amsmath}
\usepackage{amssymb}
\usepackage{mathtools}
\usepackage{thmtools,thm-restate}
\usepackage{fec} 
\usepackage{authblk}

\usepackage{algorithm}
\usepackage{algpseudocode}
\usepackage{amsmath}

\newtheorem{theorem}{Theorem}
\numberwithin{theorem}{section}

\newtheorem{lemma}[theorem]{Lemma}
\newtheorem{condition}[theorem]{Condition}

\newtheorem{proposition}[theorem]{Proposition}
\newtheorem{assumption}[theorem]{Assumption}
\newtheorem{corollary}[theorem]{Corollary}

\newtheorem{definition}[theorem]{Definition}

\usepackage{tikz}
\usetikzlibrary{arrows.meta,positioning,fit,calc,backgrounds}
\newcount\Comments
\newcommand{\kibitz}[2]{\ifnum\Comments=1{\leavevmode\color{#1}{#2}}\fi}

\newcommand{\review}[1]{\ifnum\Comments=1\par\noindent\textcolor{red}{\textbf{Review:} #1}\par\fi}

\definecolor{lightteal}{rgb}{0.5, 0.8, 0.8}
\usepackage[style=alphabetic, maxbibnames=99, backref=true, backrefstyle=three]{biblatex}
\bibliography{biblio}

\title{Super-Quadratic Quantum Speedups for Combinatorial Optimization via Tilted Walks}
\author[1]{Guneykan Ozgul\thanks{\texttt{guneykan.ozgul@jpmchase.com}.}}
\author[1]{Shouvanik Chakrabarti\thanks{\texttt{shouvanik.chakrabarti@jpmchase.com}.}}
\affil[1]{Global Technology Applied Research, JPMorganChase, New York, NY 10001, USA}
\date{\today}

\begin{document}
\maketitle

\begin{abstract}
We introduce \emph{quantum tilted walks}, a quantum algorithmic framework for solving exact combinatorial optimization problems. The framework applies an average of powers of a tilted Hamiltonian that biases the discriminant matrix of a base Markov chain (mixer) with the objective function. Our starting point is quantum short-path algorithms, which prepare the ground state of such a Hamiltonian and obtain super-quadratic speedups over exhaustive search for certain combinatorial optimization problems. Recently, Le Gall and Tamaki~(arXiv:2604.12131) developed a classical conditioning-and-search algorithm for weighted \textup{\textsc{MAX-E$k$-LIN2}} and weighted \textup{\textsc{MAX-$k$-CSP}}. Under the same assumptions, their algorithm is only sub-quadratically slower than quantum short-path algorithms. Consequently, existing short-path algorithms do not establish a super-quadratic speedup over this stronger classical baseline. For maximization problems, we give conditions under which tilted walks increase the amplitude on the target state with high objective value when initialized from a starting state with lower objective value. This framework captures conditioning-and-search and yields super-quadratic speedups over it for the same problems. While our framework recovers quantum short-path algorithms as a special case, it neither requires ground-state preparation nor initialization in the ground state of the base mixer. We demonstrate these advantages on a synthetic optimization problem for which tilted walks achieve a super-quadratic speedup whereas the short-path algorithms do not.
\end{abstract}

\section{Introduction}
\subsection{Motivation and background}
Let $f:\mathcal{X}\to\mathbb{R}$ be an objective function on a finite set of feasible configurations. We consider the problem of finding a maximizer of $f$. When $\mathcal{X}=\{0,1\}^n$ and no additional structure is available, exhaustive search costs $O^*(2^n)$\footnote{The notation $O^*$ suppresses polynomial factors.}, whereas quantum maximum finding costs $O^*(2^{n/2})$~\cite{durr1996minimum}, giving the optimal quadratic speedup in the black-box model.

Classical algorithms such as backtracking, branch-and-bound, dynamic programming, and Markov chain search can exploit structure in $\mathcal X$ or $f$ to run significantly faster than exhaustive search. Quantum algorithms can give further speedups by quadratically accelerating certain subroutines of the classical algorithm via quantum amplitude amplification, quantum walk search, or related techniques ~\cite{ambainis2019dynamicprogramming,magniez2011search,montanaro2020branchbound,chakrabarti2022universal, montanaro2016quantumwalkspeedupbacktracking, montanaro2015MonteCarlo}. Since these quantum algorithms obtain their advantage only by quadratically accelerating classical subroutines, their overall speedup over the corresponding classical algorithms is at most quadratic.

On the other hand, a quantum algorithm achieves a \emph{super-quadratic} speedup over a classical algorithm with runtime $T_{\mathrm{cl}}$ if the quantum runtime satisfies $T_q=O^*(T_{\mathrm{cl}}^{1/2-c})$ for some constant $c>0$ independent of the problem size. This stronger scaling is particularly important for fault-tolerant implementations: differences in clock rates, error-correction overhead, and limited quantum parallelism can offset the benefit of a quadratic reduction in runtime. Resource estimates show that improving the scaling beyond $O^*(T_{\mathrm{cl}}^{1/2})$ can substantially reduce the required resources and make quantum advantage more practical~\cite{babbush2021focus}. Beating the quadratic barrier, however, requires new quantum algorithmic ideas beyond quantization of classical subroutines via standard techniques.

Quantum short-path algorithms provide one such idea: before searching for an optimum, they prepare a quantum state with greater overlap with high-value configurations. The algorithm begins in the ground state of a simple Hamiltonian that moves amplitude between feasible configurations, called the mixer Hamiltonian. It adds a diagonal term derived from the objective so that configurations with larger objective values have lower energy, prepares the ground state of the resulting objective-biased Hamiltonian, and then amplifies its overlap with the optimum.  This algorithm was introduced by Hastings~\cite{hastings2018shortpathquantumalgorithm} as an alternative to quantum adiabatic optimization, which continuously tracks the ground state while interpolating from a mixer Hamiltonian to an objective Hamiltonian. Short path algorithms instead prepare the ground state without following the adiabatic interpolation up to the end. 

Subsequent work weakened the assumptions of the original paper and analyzed the short-path algorithm for different problems and settings~\cite{hastings2018weaker,hastings2019toy, Dalzell_2023,chakrabarti2025generalizedshortpathalgorithms}. Dalzell et al.~\cite{Dalzell_2023} proved runtimes of the form $O^*(2^{(1-c)n/2})$, for a constant $c>0$ independent of $n$, for several families of binary optimization problems. In these applications, the mixer Hamiltonian is $-X/n$, where $X=\sum_{i=1}^nX_i$ and $X_i$ flips the $i$-th bit. Thus $X/n$ moves amplitude between bit strings that differ in one coordinate.

Chakrabarti et al.~\cite{chakrabarti2025generalizedshortpathalgorithms} replaced the bit-flip mixer $X/n$ by an operator obtained from a reversible Markov chain $P$ with stationary distribution $\pi$. The resulting symmetric operator, called the discriminant of $P$, has matrix elements
\begin{equation}
    D(P)_{x,y}:=\sqrt{\frac{\pi(x)}{\pi(y)}}P(x,y)=\sqrt{P(x,y)P(y,x)},
\end{equation}
where the second equality follows from reversibility. The coherent stationary state $|\sqrt{\pi}\rangle:=\sum_x\sqrt{\pi(x)}|x\rangle$ is the top eigenstate of $D(P)$ and hence the ground state of the mixer Hamiltonian $-D(P)$. The generalized short-path algorithm starts in this state, adds a diagonal term derived from the objective, prepares the resulting ground state, and amplifies its overlap with the optimum. Under suitable regularity assumptions, this construction gives super-quadratic speedups over the associated Markov chain search algorithms that search directly by sampling from $\pi$.

Le Gall and Tamaki~\cite{gall2026dequantizingshortpathquantumalgorithms} showed that some spectral-density condition used to analyze the short-path algorithm also allows a different classical algorithm, called conditioning-and-search. The algorithm samples from a near-optimal threshold set and searches a local neighborhood around the sampled point. For weighted \textup{\textsc{MAX-E$k$-LIN2}} and weighted \textup{\textsc{MAX-$k$-CSP}}, their analysis identifies an exponentially large successful subset of the threshold set from which local search reaches an optimum. The resulting classical runtime is $2^{(1-c^{\mathrm{cl}})n+o(n)}$, and a standard quadratic quantum acceleration has runtime $2^{(1-c^{\mathrm{cl}})n/2+o(n)}$. The quadratically accelerated runtime has a smaller exponent than those proven for the short-path algorithm by Dalzell et al.~\cite{Dalzell_2023}. Consequently, the short-path algorithm does not establish a super-quadratic speedup over conditioning-and-search.

We argue that this comparison motivates a more fundamental question for quantum algorithms. Consider a classical algorithm that samples a set of candidate configurations and then searches for the optimum via a fixed postprocessing procedure. Let us call a candidate configuration ``successful'' if a desired optimum is successfully found by postprocessing from that candidate. The runtime of such an algorithm is partially dictated by the probability assigned to successful candidates. Conditioning-and-search algorithms naturally fit into this framework, and it also intuitively captures popular classical heuristics such as simulated annealing. A quantum algorithm can improve this runtime if, before postprocessing, it increases the amplitude of successful candidates in a coherent encoding of the candidate distribution.

Existing generalized short-path algorithms do not establish such an amplitude increase. Instead, they starts in the coherent stationary state of the mixer and amplify the amplitude of near-optimal solutions compared to that state. An application-specific candidate distribution need not be stationary for the chosen mixer. Using that distribution within the (generalized) short path algorithm would therefore require constructing a new mixer and establishing the spectral conditions and regularity properties needed for super-quadratic speedup. The existing analysis of short-path algorithms requires a mixer that has inverse-polynomial log-Sobolev constant (hence spectral gap), and requires the objective to be smooth under the action of the mixer. In general, it may be impossible to realize both conditions for the candidate distribution, exposing a structural limitation in the current algorithm.

We introduce \emph{quantum tilted walks}, a quantum algorithmic framework that addresses this shortcoming. We prove conditions under which tilted walks increase the amplitude on high-value target states from application-specific initial states, without requiring ground-state preparation or initialization in the mixer ground state. The framework recovers quantum short-path algorithms as a special case and gives super-quadratic speedups over conditioning-and-search for weighted \textup{\textsc{MAX-E$k$-LIN2}} and weighted \textup{\textsc{MAX-$k$-CSP}} under the same assumptions. We also identify some generic advantages of the framework over short-path algorithms and demonstrate them on a synthetic problem for which tilted walks have a better proven runtime than the standard generalized-short-path implementation using the same mixer. The following sections discuss our results in more detail.

\section{Results}
\label{sec:results}
Our results can be grouped into the following three themes.
\begin{enumerate}[leftmargin=*]
    \item \textbf{Quantum tilted walks.} We introduce a framework that biases a reversible Markov-chain walk toward configurations with larger objective value and transfers amplitude from an application-dependent starting state $|\psi_0\rangle$ to an application-dependent \emph{successful} state $|S\rangle$. Under explicit regularity conditions, the amplitude on $|S\rangle$ increases by an exponential factor compared to the starting amplitude $|\langle \psi_0|S \rangle|$, as long as the successful set has a higher minimum objective value than the initial state. This objective value condition will be made more precise in the subsequent section. Our framework can be viewed as a generalization of discrete time quantum walks, which are obtained when there is no biasing or tilting term. Generalized short-path algorithms arise as a special case when the initial state is the ground state of the mixer and the applied power is large enough to approximate the projector onto the ground state of the tilted Hamiltonian.
    \item \textbf{Applications to exact optimization.} We apply quantum tilted walks to the conditioning-and-search algorithm of Le Gall and Tamaki. For weighted \textsc{MAX-E$k$-LIN2} and weighted \textsc{MAX-$k$-CSP} under the same assumptions as their classical analysis, quantum tilted walks improve on the standard quadratic quantum acceleration and yield a super-quadratic speedup over conditioning-and-search. The comparison uses exactly the classical runtime exponents proved by Le Gall and Tamaki, rather than those of a weaker classical baseline. Moreover, because generalized short-path algorithms are a special case of quantum tilted walks, the applications established in previous short-path work are also included in our framework. Finally, if new conditioning-and-search algorithms are derived from these previous applications of short-path, we expect our results to extend also to super-quadratic speedups over those algorithms.
    \item \textbf{Further advantages of quantum tilted walks.} Quantum tilted walks allow the initial state to be chosen independently of the stationary state of the mixer. We prove an amplitude guarantee for nonnegative initial states whose amplitudes and objective values are sufficiently preserved by one mixer step, while retaining the original mixer. Quantum tilted walks also apply a bounded number of powers instead of preparing the ground state of the tilted Hamiltonian, so the guarantee does not require a lower bound on the spectral gap of the tilted Hamiltonian. We illustrate these advantages with an example in which the initial state has exponentially small overlap with the stationary state and the spectral gaps of both the mixer and the tilted Hamiltonian are exponentially small, yet the tilted walk produces exponential amplitude gain. The example separates the sufficient conditions for tilted walks from those used in the generalized short-path analysis.
\end{enumerate}
A more detailed discussion of these three themes appears in Sections~\ref{sec:quantum-tilted-walks-intro}, \ref{subsec:introduction-exact-optimization}, and~\ref{sec:further-advantages-intro}, respectively. This is followed by Section~\ref{sec:discussion} that discusses some consequences of our results, the connection to classical algorithms, and highlights some open questions for future work. We conclude this section with an overview of Related Work.

\subsection{Overview of Quantum Tilted Walks}
\label{sec:quantum-tilted-walks-intro}
Let $P$ be a reversible Markov chain on $\mathcal X$ with stationary distribution $\pi$. Its discriminant operator is defined by
\begin{equation}
    D_{x,y}
    :=\sqrt{\frac{\pi(x)}{\pi(y)}}P(x,y)
    =\sqrt{P(x,y)P(y,x)},
\end{equation}
where the second equality follows from reversibility: $\pi(x)P(x,y)=\pi(y)P(y,x)$ for all $x,y\in\mathcal X$. The operator $D$ is real symmetric, entrywise nonnegative, and has spectral norm one. Its principal eigenstate is the coherent stationary state $|\sqrt{\pi}\rangle=\sum_{x\in\mathcal X}\sqrt{\pi(x)}|x\rangle$. For the hypercube walk, $D=X/n$, where $X=\sum_{i=1}^nX_i$ is the sum of Pauli $X$ matrices acting on each coordinate.

Let $G$ be a diagonal operator whose entries are scaled objective values in $[1/2,1]$. For $b\geq0$, define
\begin{equation}
    D_b=\frac{D+bG}{E_b},
    \qquad
    E_b=\lambda_{\max}(D+bG),
\end{equation}
and, for an integer $M\geq0$, define the power average
\begin{equation}
    C_{b,M}
    =\frac{1}{M+1}\sum_{m=0}^{M}D_b^m.
    \label{eq:introduction-power-average}
\end{equation}
We call $D_b$ a tilted walk operator and $C_{b,M}$ an averaged tilted walk. The normalization by $E_b$ makes $D_b$ a contraction ($\|D_b\|\leq 1$) but does not make it a stochastic matrix.

Before applying the tilted walk, we choose an initial candidate state $|T\rangle$, which may be any quantum superposition over configurations in $\mathcal X$. Its amplitudes and support are application dependent. For example, $|T\rangle$ may be supported on a near-optimal threshold set, a feasible region, or a neighborhood selected by a classical approximation algorithm. We assume access to a state-preparation unitary $U_T$ satisfying $U_T|0\rangle=|T\rangle$ and to $U_T^\dagger$.  The analysis below considers initial states $|T\rangle$ that correspond to particular classes of starting distributions over candidate configurations, which will be sufficient for our applications. However, these assumptions on the amplitudes of $|T\rangle$ are not needed to implement the algorithm and apply $C_{b,M}$. The cost of $U_T$ and $U_T^\dagger$ is included in the total runtime and need not be polynomial.

Independent of the initial state, we choose a successful set $S\subseteq\mathcal X$. A configuration in $S$ may already be optimal, may have a sufficiently large objective value, or may allow a subsequent procedure to recover an optimum. The tilted walk algorithm then:
\begin{enumerate}[label=\arabic*.,leftmargin=*]
    \item applies $U_T$ to prepare the initial candidate state $|T\rangle$;
    \item applies a block encoding of $C_{b,M}$ to $|T\rangle$; and
    \item marks and amplifies the subspace spanned by the configurations in $S$, followed by any required postprocessing.
\end{enumerate}
The cost of the final amplification step is governed by the amplitude,
\begin{equation}
    \left\|\langle S| C_{b,M}|T\rangle\right\|.
    \label{eq:introduction-central-amplitude}
\end{equation}
The initial candidate state, mixer, and successful set can therefore be chosen independently. In the conditioning-and-search applications, $|T\rangle$ is the coherent threshold-set state $|T_\eta\rangle$, while $S$ consists of threshold configurations from which the application-specific neighborhood search reaches an optimum.

\paragraph{Amplitude gain} Our main amplitude-gain theorem considers coherent encodings of $\pi$ conditioned on sets, including the threshold states used in our applications. Section~\ref{sec:warm-starts} later treats more general nonnegative initial states. For a nonempty set $A$, write $\pi(A):=\sum_{x\in A}\pi(x)$, let $\pi_A$ denote $\pi$ conditioned on $A$, and let $|A\rangle$ denote its coherent encoding. The inverse overlap $|\langle S|C_{b,M}|T\rangle|^{-1}$ determines the runtime of the algorithm and it is larger than $|\langle S|T\rangle|$ by an exponential factor for our applications.

\begin{theorem}[Informal tilted-walk amplitude gain, formal statement in Theorem~\ref{thm:normalized-amplitude-gain}]
\label{thm:normalized-amplitude-gain-informal}
Assume that $1/2\leq G(x)\leq1$ for every $x\in\mathcal X$ and that $\max_xG(x)=1$, and define $\mu:=\mathbb E_\pi[G]$. Let $\eta,\rho\in(0,1)$ and define
\begin{equation}
    T_\eta:=\{x\in\mathcal X:G(x)\geq1-\eta\}.
    \label{eq:introduction-threshold-set}
\end{equation}
Let $T\supseteq T_\eta$ be nonempty, and let $S$ be a nonempty set such that $G(x)\geq1-\rho$ for every $x\in S$. Let $|T\rangle$ and $|S\rangle$ be the coherent encodings of $\pi_T$ and $\pi_S$, respectively. Assume that $G$ is pseudo-Lipschitz with constant $L\in(0,1]$, meaning that
\begin{equation}
    (PG)(x):=\sum_{y\in\mathcal X}P(x,y)G(y)\geq(1-L)G(x)
    \qquad\text{for every }x\in\mathcal X.
    \label{eq:introduction-score-stability}
\end{equation}
Let $\mathcal L_2(G)>0$ satisfy
\begin{equation}
    \max_{x\in\mathcal X}\sum_{y\in\mathcal X}P(x,y)(G(x)-G(y))^2
    \leq\mathcal L_2(G)^2,
\end{equation}
and assume that $D$ satisfies the log-Sobolev inequality in Condition~\ref{cond:log-Sobolev} with constant $\omega>0$. Finally, assume that $L^{-1}=\operatorname{poly}(n)$, that $\eta,\rho=\Theta(1)$, and that
\begin{equation}
    \eta-\rho-L=\Omega(1),
    \qquad
    1-\rho-\mu-L=\Omega(1).
\end{equation}
Then there are choices $M=O(1/L)=\operatorname{poly}(n)$ and $b^*\in(0,1]$ such that
\begin{align}
|\left\langle S\left| C_{b^{\star},M}\right|T \right\rangle|
&\geq
\frac{1}{\operatorname{poly}(n)}
\sqrt{\frac{\pi(S)}{\pi(T)}}
2^{\kappa n/2},
\label{eq:informal-general-warm-gain}
\end{align}
where
\begin{equation}
    \kappa
    =\frac{2b^*(1-\rho-\mu-L)}{3nL\log 2}
    \mathbb{E}_{x\sim\pi_S}\left[
    \log\left(\frac{G(x)}{1-\rho}\right)
    \right].
\end{equation}
\end{theorem}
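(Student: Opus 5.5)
The plan is to reduce the quantum amplitude to a classical Feynman--Kac expectation along a lazy tilted Markov chain, and then compare the resulting growth rate with $E_b$.

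\textbf{Reduction to a classical path expectation.} Write $|S\rangle=\sqrt{\pi}\,\mathbf 1_S/\sqrt{\pi(S)}$ and $|T\rangle=\sqrt{\pi}\,\mathbf 1_T/\sqrt{\pi(T)}$. Because $D(\sqrt\pi f)=\sqrt\pi\,(Pf)$ and $G$ is diagonal, we have $(D+bG)^m(\sqrt\pi f)=\sqrt\pi\,(P+bG)^m f$. Hence
\[
\langle S|D_b^m|T\rangle=\frac{1}{E_b^m\sqrt{\pi(S)\pi(T)}}\sum_{x\in S}\pi(x)\,\big[(P+bG)^m\mathbf 1_T\big](x)
=\sqrt{\tfrac{\pi(S)}{\pi(T)}}\;\mathbb E_{x\sim\pi_S}\!\left[\frac{[(P+bG)^m\mathbf 1_T](x)}{E_b^m}\right].
\]
All terms are nonnegative, so it suffices to lower bound each summand, and then average over $m\le M$. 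Next I factor $P+bG=(1+bG)Q$, where $Q(x,y)=\big(P(x,y)+bG(x)\delta_{xy}\big)/(1+bG(x))$ is a lazy stochastic matrix. This gives
\[
[(P+bG)^m\mathbf 1_T](x)=\mathbb E^Q_x\Big[\textstyle\prod_{k<m}(1+bG(Y_k))\,\mathbf 1_T(Y_m)\Big].
\]

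\textbf{Keeping the walk in the high-value region.} The pseudo-Lipschitz condition passes to $Q$: $QG\ge(1-L)G$, since the lazy part only helps. Iterating gives $\mathbb E^Q_x[G(Y_k)]\ge(1-L)^kG(x)$. Now fix $x\in S$ and take $k\le k_x:=\log(G(x)/(1-\rho))/L$. Then the expected value stays at least about $1-\rho$, which is the source of the factor $\mathbb E_{\pi_S}[\log(G/(1-\rho))]/L$ in $\kappa$.

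To turn expectations into a lower bound on the product, I will condition on the event $\{Y_m\in T\}$ and apply Jensen to $\log\prod(1+bG(Y_k))$. This requires bounding $\Pr[Y_m\in T]\ge 1/\mathrm{poly}(n)$ and controlling the conditional mean of $G(Y_k)$. For this I use:
\begin{itemize}
\item $\mathbf 1_T\ge\mathbf 1_{T_\eta}$;
\item Markov's inequality on $1-G(Y_m)$, using the margin $\eta-\rho-L=\Omega(1)$;
\item a second-moment (Doob martingale) bound built from $\mathcal L_2(G)$, so that $G(Y_k)$ concentrates around its drift over $M=O(1/L)$ steps.
\end{itemize}
The outcome should be a per-step growth of roughly $1+b(1-\rho-L)$ for about $k_x$ steps, with only polynomial losses.

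\textbf{Bounding $E_b$ and optimizing.} I will use the log-Sobolev inequality (Condition~\ref{cond:log-Sobolev}), via the standard Gibbs-variational / Herbst argument from the generalized short-path analysis, to show $E_b\le 1+b\mu+O(b^2/\omega)$. Then the ratio per step is $\frac{1+b(1-\rho-L)}{1+b\mu+O(b^2/\omega)}\ge\exp\big(c\,b(1-\rho-\mu-L)\big)$ for $b\le b^*$ small enough. Here the condition $1-\rho-\mu-L=\Omega(1)$ guarantees a positive exponent. Beyond $k_x$ steps I only need the factor not to decay, which follows because $D_b$ is a contraction and everything is nonnegative.

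Averaging over $m\in\{0,\dots,M\}$ with $M\asymp\max_x k_x=O(1/L)$ then loses a constant fraction of the exponent. Jensen over $x\sim\pi_S$ replaces $k_x$ by its mean, giving $2^{\kappa n/2}$ with the stated $\kappa$; I expect the constant $2/3$ to come from this averaging together with the choice of $b^*$.

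The main obstacle is the concentration step: showing that, conditioned on ending in $T$, the path keeps $G(Y_k)$ near $(1-L)^kG(x)$ on most steps. Drift alone gives only expectations, and the product weight could be dominated by atypical paths in the wrong direction. My first attempt is a change of measure, restricting to the good event $\{\min_{k\le m}G(Y_k)\ge 1-\rho-L-\epsilon\}$, and bounding its $Q$-probability below by $1/\mathrm{poly}(n)$ using the $\mathcal L_2$ martingale bound.
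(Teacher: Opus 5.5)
\textbf{Verdict: there is a genuine gap.} Your reduction $\langle S|D_b^m|T\rangle=\sqrt{\pi(S)/\pi(T)}\,\mathbb E_{x\sim\pi_S}\bigl[(P+bG)^m\mathbf 1_T(x)\bigr]/E_b^m$ is correct, and so are the lazy factorization and the check $QG\ge(1-L)G$. Your overall architecture also matches the paper's: a path-sum lower bound, a Herbst-type bound on $E_b$, a choice of $b^*$, one power per $x$, and Jensen over $\pi_S$. The gap is the step you flag as the main obstacle, which is the core of the theorem. The tools you list do not close it at the stated strength.

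Conditioning on $\{Y_m\in T\}$ and applying Jensen to $\sum_k\log(1+bG(Y_k))$ requires a lower bound on $\mathbb E[\log(1+bG(Y_k))\mid Y_m\in T]$. Since $u\mapsto\log(1+bu)$ is concave, the drift bound $\mathbb E[G(Y_k)]\ge(1-L)^kG(x)$ does not give one. Moreover, conditioning on an event of merely constant probability (all that Markov's inequality provides) can lower the conditional mean of $G(Y_k)$ by a constant, so the per-step rate $1+b(1-\rho-L)$ is lost. You would therefore need genuine pathwise concentration. But a Doob/second-moment bound built from $\mathcal L_2(G)$ makes the good event $\{\min_{k\le m}[G(Y_k)-(1-L)^kG(x)]\ge-\epsilon\}$ likely only when $\epsilon\gtrsim\sqrt m\,\mathcal L_2(G)$. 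The resulting per-step loss of order $b\epsilon$ compounds over $m\asymp1/L$ steps into a factor $\exp\bigl(-\Omega(b\,\mathcal L_2(G)L^{-3/2})\bigr)$. This is already $2^{-\Theta(\sqrt n)}$ in the hypercube applications, where $L\asymp\mathcal L_2(G)\asymp1/n$. In general it is not controlled by the hypotheses at all, since the statement imposes no relation between $\mathcal L_2(G)$ and $L$. Taking $\epsilon=O(L)$ at probability $1/\operatorname{poly}(n)$ would need a ballot-type lower bound that an $L^2$ estimate cannot supply. So this route gives at best $2^{\kappa n/2-o(n)}$ under extra assumptions, not the claimed $\frac{1}{\operatorname{poly}(n)}2^{\kappa n/2}$.

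\textbf{The missing idea is that no concentration is needed, because convexity absorbs the fluctuations.} Replace $\mathbf 1_T$ by the filter $\bar g_\eta(G)\le\mathbf 1_{T_\eta}\le\mathbf 1_T$. Then prove by induction on $m$ that $(P+bG)^m\bar g_\eta(G)\ge\Psi_m(G)$ pointwise, where $\Psi_m(u):=\bar g_\eta\bigl((1-L)^mu\bigr)\prod_{t=0}^{m-1}\bigl(1+b(1-L)^tu\bigr)$. Each $\Psi_m$ is a product of nonnegative, nondecreasing, convex functions, and hence is itself of that type. Jensen and pseudo-Lipschitzness give $P\Psi_{m-1}(G)\ge\Psi_{m-1}(PG)\ge\Psi_{m-1}\bigl((1-L)G\bigr)$. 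Monotonicity gives $bG\,\Psi_{m-1}(G)\ge bG\,\Psi_{m-1}\bigl((1-L)G\bigr)$. Adding the two closes the induction, since $\Psi_m(u)=(1+bu)\Psi_{m-1}\bigl((1-L)u\bigr)$. Your Markov estimate is exactly the $b=0$ case of this induction.

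At $m=m(x)=\lceil L^{-1}\log(G(x)/(1-\rho))\rceil$ one has $(1-L)^{m(x)}G(x)\ge1-\rho-L$. This yields $\frac{\eta-\rho-L}{\eta}\bigl(1+b(1-\rho-L)\bigr)^{m(x)}$ with only a constant prefactor. In this argument $\mathcal L_2(G)$ and $\omega$ enter solely through $E_b$. The paper obtains the same bound by expanding $(D+bG)^m$ into words, applying this Jensen step word by word, and summing with $\sum_w(1-L)^{I_w}=K_{m,s}(L)$. The proof of Theorem~\ref{thm:warm-start-gain} uses exactly the induction above, without the filter.

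\textbf{Smaller corrections.}
\begin{enumerate}
\item Averaging over $m$ costs only the factor $1/(M+1)$: keep the single term $m=m(x)$ for each $x$ and drop the rest by nonnegativity. It loses no fraction of the exponent. You also need no non-decay beyond $k_x$, which a contraction would not give anyway.
\item The normalization bound must be $E_b\le1+b\mu+b^2\mathcal L_2(G)^2/(8\omega^2)$. Your $O(b^2/\omega)$, with $\omega=1/n$, would force $b=O(1/n)$.
\item The $2/3$ in $\kappa$ comes from $\log\frac{1+b^*(1-\rho-L)}{E_{b^*}}\ge\frac{b^*(1-\rho-\mu-L)}{3}$ and from writing the gain as $2^{\kappa n/2}$, not from averaging.
\end{enumerate}
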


The bound in the theorem quantifies the increase relative to the amplitude already present on $S$. The separation assumptions imply $S\subseteq T_\eta\subseteq T$, so the initial overlap is
\begin{equation}
    |\langle S|T\rangle|=\sqrt{\frac{\pi(S)}{\pi(T)}}.
\end{equation}
Since $|S\rangle$ lies in the range of $\Pi_S$, the bound in the theorem also yields a lower bound on the marked amplitude $\|\Pi_SC_{b,M}|T\rangle\|$. Thus quantum tilted walks increase the amplitude already present on $S$ by the factor $2^{\kappa n/2}$, up to an inverse-polynomial prefactor.

The two objective-value levels $1-\eta$ and $1-\rho$ make the gain conditions concrete. The initial set $T$ may be any set containing $T_\eta=\{x:G(x)\geq1-\eta\}$, whereas every configuration in the successful set $S$ must have score at least $1-\rho$. Since $L$ is inverse polynomial, the condition $\eta-\rho-L=\Omega(1)$ requires the successful level to lie a constant distance above the cutoff defining $T_\eta$. Similarly, $1-\rho-\mu-L=\Omega(1)$ requires the successful level $1-\rho$ to exceed the mean $\mu$ under the stationary distribution by a constant margin. Finally, because $G(x)\geq1-\rho$ on $S$, the expectation
\begin{equation}
    \mathbb E_{x\sim\pi_S}\left[\log\left(\frac{G(x)}{1-\rho}\right)\right]
\end{equation}
is automatically nonnegative. It is bounded below by a positive constant when configurations in $S$ typically lie a constant multiplicative amount above the minimum value $1-\rho$. For example, it suffices that a constant fraction of the probability mass of $\pi_S$ lies on configurations satisfying $G(x)\geq(1+\epsilon)(1-\rho)$ for some constant $\epsilon>0$. If, in addition, $b^*=\Omega(1)$ and $L=O(1/n)$, then $\kappa=\Omega(1)$ and the amplitude gain is exponential in $n$. The application sections verify these conditions for the concrete successful sets used there. We also note some conditions on the relative objective values are necessary to get the super-quadratic speedup. For example as a sanity check, when $S=T$, these conditions are violated implying that the amplitude cannot be larger than $1$.  

This freedom in choosing $\eta$, $\rho$, $T$, and $S$ makes the guarantee offered by Theorem~\ref{thm:normalized-amplitude-gain-informal} more flexible than the guarantees offered by short-path algorithms. In particular, we can amplify the amplitude of $S$ by an exponential factor relative to any candidate state $|T\rangle$ that satisfies the assumptions of the theorem, including the threshold state $|T_\eta\rangle$ for every admissible choice of $\eta$. In contrast, the short-path algorithm measures amplification relative to the ground state $|\sqrt{\pi}\rangle$ of its mixer. This flexibility is crucial for obtaining our super-quadratic speedups over conditioning-
and-search.

The pseudo-Lipschitz constant $L$ determines how long a large objective value remains useful under the walk. By induction, $(PG)(x)\geq(1-L)G(x)$ implies $P^tG(x)\geq(1-L)^tG(x)$. Thus, for each $x\in S$, the lower bound on the expected objective value remains above the cutoff defining $T_\eta$ for all powers up to the point-dependent time
\begin{equation}
    m(x):=\left\lceil\frac{1}{L}\log\left(\frac{G(x)}{1-\rho}\right)\right\rceil.
\end{equation}
The amplitude lower bound accumulates over this entire interval. A smaller $L$ therefore both extends the useful range of powers and increases the resulting amplitude gain. This explains why $1/L$ appears in both the exponent $\kappa$ and the required maximum power $M=O(1/L)$. The pseudo-Lipschitz condition is imposed on all of $\mathcal X$ because contributing paths need not remain inside $S$ or $T$.

Averaging over powers is necessary because the useful walk length depends on each point in $S$. Intuitively, the starting from particular point in $T$, the walk might enter the set $S$ and escape from it later. A single power can be too short to gain a nontrivial amplitude for one point and too long to keep another point inside of $T$. Instead of controlling this behavior, averaging simply measures the average number of steps the walk spends in the set $T$. By choosing $M\geq\max_{x\in S}m(x)$, the average $C_{b,M}$ contains a useful power for every point in $S$.  Since all entries in the path expansion are nonnegative, these pointwise contributions can be aggregated without cancellation. Averaging costs only the factor $1/(M+1)$, which is inverse polynomial under the assumptions of the theorem. Its purpose here is therefore to preserve a range of useful walk lengths, not to approximate the projector onto the ground state of the tilted Hamiltonian.

The remaining quantities determine whether the amplitude gain survives after normalization by $E_b$. For the useful power $m=m(x)$, the path contribution grows at least as $[1+b(1-\rho-L)]^m$, whereas replacing $D+bG$ by $D_b$ divides this contribution by $E_b^m$. Thus a useful upper bound on $E_b$ must place it strictly below $1+b(1-\rho-L)$. A constant logarithmic separation between these two quantities produces exponential gain when $m=\Theta(1/L)$. Under the quadratic-variation and log-Sobolev assumptions,
\begin{equation}
    E_b\leq1+b\mu+\frac{b^2\mathcal L_2(G)^2}{8\omega^2}.
\end{equation}
Here $\mu$ determines the baseline normalization cost, while $\mathcal L_2(G)^2/\omega^2$ controls the additional cost caused by fluctuations of $G$. The choice $b=b^*$ balances the gain from increasing the tilt against this quadratic penalty. In particular, it ensures
\begin{equation}
    \log\left(\frac{1+b^*(1-\rho-L)}{E_{b^*}}\right)
    \geq\frac{b^*(1-\rho-\mu-L)}{3}>0.
\end{equation}

The bound on $E_b$ is the only place where the log-Sobolev and quadratic-variation assumptions enter the amplitude-gain theorem. They can be replaced by any other argument that yields a sufficiently strong upper bound on $E_b$, while pseudo-Lipschitzness remains necessary for the positive path bound. Generalized short-path algorithms use related concentration estimates both to control normalization and to prove a spectral gap for ground-state projection. Quantum tilted walks do not use the latter conclusion. The assumptions of the amplitude-gain theorem are therefore not substantially stronger than those used in short-path algorithms; its main regularity conditions resemble the stability and concentration conditions used there~\cite{Dalzell_2023,chakrabarti2025generalizedshortpathalgorithms}.

\paragraph{Connection to quantum walks}
The definition of $C_{b,M}$ has a direct connection to the Markov chain from which $D$ is constructed. This connection is exact when $b=0$ and provides a natural interpretation as a sum over weighted paths when $b>0$.

At zero bias, $E_0=1$ and $D_0=D$. By reversibility,
\begin{equation}
    \langle S|C_{0,M}|T\rangle
    =
    \sqrt{\frac{\pi(T)}{\pi(S)}}
    \frac{1}{M+1}
    \sum_{m=0}^{M}
    \mathbb E_{x\sim\pi_T}\!\left[P^m(x,S)\right],
    \qquad
    P^m(x,S):=\sum_{y\in S}P^m(x,y).
\end{equation}
After accounting for the normalization by $\pi(S)$ and $\pi(T)$, this matrix element is the probability that a classical walk initialized from $\pi_T$ lies in $S$ after a number of steps chosen uniformly from $0,\ldots,M$. The initial distribution, the marked set, and the choice of how long to run the walk are also central to quantum algorithms based on walks~\cite{szegedy2004quantum,magniez2011search}. In the present setting, they are specified by $T$, $S$, and the range $0,\ldots,M$.

For $b>0$, define $K_b:=P+bG$. The similarity relation
\begin{equation}
    D+bG
    =
    \operatorname{diag}(\pi)^{1/2}
    K_b
    \operatorname{diag}(\pi)^{-1/2}
\end{equation}
implies
\begin{equation}
    \langle S|C_{b,M}|T\rangle
    =
    \sqrt{\frac{\pi(T)}{\pi(S)}}
    \frac{1}{M+1}
    \sum_{m=0}^{M}
    \mathbb E_{x\sim\pi_T}\!\left[
        \frac{K_b^m(x,S)}{E_b^m}
    \right],
    \qquad
    K_b^m(x,S):=\sum_{y\in S}K_b^m(x,y).
    \label{eq:introduction-tilted-classical-kernel}
\end{equation}
The matrix $K_b$ is not a transition matrix because the entries in its row indexed by $x$ sum to $1+bG(x)$. Its powers can nevertheless be expressed as sums over paths. In the expansion of $(P+bG)^m$, selecting a factor $P$ advances the path according to the Markov chain, while selecting a factor $bG$ leaves the current configuration unchanged and multiplies the path weight by $bG(x)$. Hence $K_b^m(x,S)$ is a sum of nonnegative weights over paths from $x$ to $S$. Equation~\eqref{eq:introduction-tilted-classical-kernel} places these path sums in a uniform average over lengths $0,\ldots,M$. At zero bias, the summands are transition probabilities; at positive bias, the objective values encountered along a path modify its weight.

\paragraph{Short-path algorithms as a special case}
The same tilted operator $D+bG$ underlies short-path algorithms. The correspondence becomes exact after accounting for the filter and score normalization used in those algorithms.

Consider a minimization objective $H$ with optimum energy $E^\star<0$. For $u\geq-1$, define the short-path filter
\begin{equation}
    g_\eta(u):=\min\left\{0,\frac{u+1-\eta}{\eta}\right\},
    \qquad
    F_\eta:=-g_\eta\left(\frac{H}{|E^\star|}\right).
\end{equation}
Then $0\leq F_\eta\leq I$, and the generalized short-path Hamiltonian is
\begin{equation}
    H_b^{\mathrm{sp}}
    =-D+b\,g_\eta\left(\frac{H}{|E^\star|}\right)
    =-D-bF_\eta.
    \label{eq:introduction-short-path-hamiltonian}
\end{equation}
To place the score in the interval $[1/2,1]$ used above, set $\widetilde G_\eta:=(I+F_\eta)/2$. The identity
\begin{equation}
    D+2b\widetilde G_\eta=D+bF_\eta+bI
\end{equation}
shows that $D+2b\widetilde G_\eta$ and $D+bF_\eta$ have the same eigenvectors. Their principal eigenstate is therefore the ground state of $H_b^{\mathrm{sp}}$. This establishes the correspondence without omitting either the short-path filter or the normalization of $G$.

Let $|\psi_b\rangle$ denote this common eigenstate, and define
\begin{equation}
    \widetilde D_b
    :=\frac{D+2b\widetilde G_\eta}{\lambda_{\max}(D+2b\widetilde G_\eta)}.
\end{equation}
Suppose the eigenvalue $1$ of $\widetilde D_b$ is nondegenerate, and let
\begin{equation}
    \delta_b^{\mathrm{abs}}
    :=1-\max_{\substack{\lambda\in\operatorname{spec}(\widetilde D_b)\\\lambda\neq1}}|\lambda|.
\end{equation}
By the spectral decomposition of $\widetilde D_b$,
\begin{equation}
    \left\|\widetilde D_b^m-|\psi_b\rangle\!\langle\psi_b|\right\|
    \leq(1-\delta_b^{\mathrm{abs}})^m.
    \label{eq:introduction-power-projector}
\end{equation}
If $\delta_b^{\mathrm{abs}}$ and $|\langle\sqrt\pi|\psi_b\rangle|$ are inverse polynomial, then a polynomially large $m$ makes the right-hand side sufficiently small compared with this overlap. Applying $\widetilde D_b^m$ to $|\sqrt\pi\rangle$ and normalizing then prepares an approximation to $|\psi_b\rangle$. Short-path algorithms implement the corresponding projection through ground-state reflections rather than by literally applying this power, but the prepared state and the overlap that controls the runtime are the same.

For the hypercube walk, $D=X/n$ and $|\sqrt\pi\rangle=|+\rangle$, so this specialization coincides with the filtered construction analyzed by Dalzell et al.~\cite{Dalzell_2023}. When $D$ is the discriminant matrix of an arbitrary reversible Markov chain, it coincides with the generalized short-path algorithms by Chakrabarti et al.~\cite{chakrabarti2025generalizedshortpathalgorithms}. Taking $T=\mathcal X$ fixes the initial state to $|\sqrt\pi\rangle$, and the final amplification step targets the optimal subspace.

No averaging is required in this special case because we can just take one sufficiently large power to isolate $|\psi_b\rangle$. The amplitude-gain theorem instead uses powers that need not approximate the projector onto the ground state of the tilted Hamiltonian. Its guarantee relies on the average $C_{b,M}$ because the useful power $m(x)$ can vary with $x\in S$. The tilted operator is therefore common to both analyses, but averaging permits the set-to-set amplitude bound proved here and can retain contributions that a sufficiently large single power would suppress.

\subsection{Applications to Exact Optimization: Super-Quadratic Speedups over Conditioning-and-Search}
\label{subsec:introduction-exact-optimization}
We first briefly describe the classical algorithm. Conditioning-and-search separates exact optimization into sampling and local search. Recall the threshold set $T_\eta$ from~\eqref{eq:introduction-threshold-set}. The classical algorithm draws a uniformly random configuration from $\mathcal{X}$ and accepts it when it lies in $T_\eta$. Repeating this rejection-sampling step produces a uniform sample from $T_\eta$. For each accepted configuration, the algorithm searches an application-dependent neighborhood.

A configuration $x\in T_\eta$ is successful if the neighborhood searched from $x$ contains a global optimum. Let $S\subseteq T_\eta$ denote the set of successful configurations. As in the motivation above, $S$ is used only to analyze the probability that local search succeeds; the algorithm does not need to know this set. In both applications below, $S$ is itself a Hamming ball around an optimum, and the size of the neighborhood searched from any center (and thereby the runtime of the local search) is $|S|$ up to polynomial factors.

In the concrete applications discussed below, $\mathcal{X} = \{0,1\}^n$. One accepted sample from $T_\eta$ requires $2^n/|T_\eta|$ rejection-sampling trials in expectation. The probability that this sample belongs to $S$ is $|S|/|T_\eta|$, so the expected number of accepted samples before local search succeeds is $|T_\eta|/|S|$. Multiplying the number of accepted samples by the cost of rejection sampling and local search, the classical runtime is, up to polynomial factors,
\begin{equation}
    \frac{2^n}{|S|}+|T_\eta|.
    \label{eq:introduction-classical-conditioning-runtime}
\end{equation}
If
\begin{equation}
    |T_\eta|\leq2^{(1-\gamma)n},
    \qquad
    |S|\geq2^{\sigma n-o(n)},
\end{equation}
then this runtime is at most
\begin{equation}
    2^{(1-\min\{\gamma,\sigma\})n+o(n)}.
\end{equation}
Here $\gamma$ characterizes the rarity of the threshold set and $\sigma$ measures the number of successful configurations around an optimum. In the applications below, $\min\{\gamma,\sigma\}$ equals the exponent $c^{\mathrm{cl}}$ of the corresponding classical conditioning-and-search algorithm.

The standard quantum acceleration of a fixed conditioning-and-search procedure uses the same threshold set and local neighborhoods as its classical counterpart. We use amplitude amplification to prepare the threshold state $|T_\eta\rangle$, and use quantum maximum finding to search the neighborhood associated with each center. Without the tilted walk, the amplitude on $S$ is $\sqrt{|S|/|T_\eta|}$. A ``Grover-like'' quantum acceleration that does not leverage tilted walks therefore has runtime
\begin{equation}
    \operatorname{poly}(n)
    \left(
        \sqrt{\frac{2^n}{|S|}}+\sqrt{|T_\eta|}
    \right),
\end{equation}
where each term is the square root of its classical counterpart.

We further accelerate conditioning-and-search by inserting $C_{b,M}$ between threshold-state preparation and the coherent neighborhood search. Suppose applying $C_{b,M}$ increases the amplitude on $S$ by $2^{\kappa n/2}$, up to an inverse-polynomial factor, as established by Theorem~\ref{thm:normalized-amplitude-gain}. The number of amplitude-amplification iterations decreases by the same factor, and Theorem~\ref{thm:optimization-from-amplitude} bounds the runtime by
\begin{equation}
    \operatorname{poly}(n)2^{-\kappa n/2}
    \left(
        \sqrt{\frac{2^n}{|S|}}+\sqrt{|T_\eta|}
    \right).
    \label{eq:introduction-quantum-conditioning-runtime}
\end{equation}
Substituting the bounds on $|T_\eta|$ and $|S|$, the quantum runtime is at most
\begin{equation}
    2^{(1-\min\{\gamma,\sigma\}-\kappa)n/2+o(n)},
    \label{eq:introduction-conditioning-exponents}
\end{equation}
where $\kappa$ measures the additional amplitude produced by the tilted walk. Since $c^{\mathrm{cl}}=\min\{\gamma,\sigma\}$, a constant $\kappa>0$ reduces the exponent beyond the ordinary quadratic quantum speedup.

Proving the claimed runtime therefore reduces to bounding $\gamma$, $\sigma$, and $\kappa$. The threshold exponent $\gamma$ follows from concentration of the centered objective $H$. Under a uniformly random assignment, $H$ has mean zero, and the change caused by flipping one variable is controlled by its weighted degree. Applying McDiarmid's inequality bounds the size of the threshold set $\{x:H(x)\geq H_{\max}/2\}$. The exponent $\sigma$ is determined by the cardinality of an explicit set of successful configurations near an optimum.

Both applications construct this successful set in the same way. Fix a global optimum $x^\star$ and choose the coordinates on which local search may move. For \textup{\textsc{MAX-E$k$-LIN2}}, local search may flip any coordinate. For \textup{\textsc{MAX-$k$-CSP}}, it is restricted to coordinates with sufficiently small weighted degree. Let $S$ be the radius-$r$ Hamming ball around $x^\star$ on the allowed coordinates, and search the corresponding radius-$r$ ball around every sampled configuration $x$. By symmetry of Hamming distance, the neighborhood of every $x\in S$ contains $x^\star$. Thus every configuration in $S$ is successful. It is enough to analyze this explicit subset because any additional successful configurations can only increase the success probability and marked amplitude.

In both applications, we choose $S$ so that
\begin{equation}
    G(x)\geq1-\frac{\rho}{2}
    \qquad\text{for every }x\in S,
\end{equation}
and apply Theorem~\ref{thm:normalized-amplitude-gain} with parameter $\rho$. Since $\rho/2<\eta$, this stronger score bound also implies $S\subseteq T_\eta$. Moreover,
\begin{equation}
    \mathbb E_{x\sim\pi_S}
    \left[
        \log\left(\frac{G(x)}{1-\rho}\right)
    \right]
    \geq
    \log\left(\frac{1-\rho/2}{1-\rho}\right)>0.
\end{equation}
The logarithmic factor in $\kappa$ is therefore bounded below by a positive constant. The application-specific estimates below ensure that $S$ remains exponentially large and that the exact conditioning-and-search exponent is preserved. 

We note that the choice of $S$ typically does not significantly affect the runtime, since the second term in~\eqref{eq:introduction-quantum-conditioning-runtime} is dominant. For example, in Max-$k$-CSP, flipping one bit changes $G$ by $O(k/n)$, so $S$ can be chosen as a Hamming ball of radius $\Theta(n/k)$, giving $2^n/|S|=\exp((\log 2-\Theta((\log k)/k))n+o(n))$. On the other hand, McDiarmid's inequality gives $|T_\eta|\leq\exp((\log 2-\Omega(1/k^2))n)$. Hence, for sufficiently large constant $k$, the runtime exponent is controlled by the $\sqrt{|T_\eta|}$ term.

After normalization, the same bit-flip bounds imply $L=O(1/n)$ and $\mathcal L_2(G)=O(1/n)$. The hypercube walk has log-Sobolev constant $\omega=1/n$. Together, these estimates ensure that $\kappa$ is bounded below by a positive constant.

\paragraph{Weighted \textup{\textsc{MAX-E$k$-LIN2}}.}
The conditioning-and-search analysis of Le Gall and Tamaki uses the same threshold $H(x)\geq H_{\max}/2$~\cite{gall2026dequantizingshortpathquantumalgorithms}. The concentration bound has exponent $\gamma_A$, and the resulting classical runtime exponent is exactly $c_A^{\mathrm{cl}}=\gamma_A$. The analysis of the quantum algorithm uses this threshold and the same concentration bound. Let $d_i$ be the total weight of equations containing variable $i$, let $d_{\max}=\max_i d_i$, and define
\begin{equation}
    \Delta:=\frac{kH_{\max}}{nd_{\max}}.
\end{equation}
The parameter $\Delta$ compares the optimum value with the largest change caused by one coordinate. When $\Delta=\Theta(1)$, the tilted-walk analysis uses a Hamming ball of radius $\Theta(n)$ around an optimum, chosen so that every center satisfies $G(x)\geq1-\rho/2$. The same weighted-degree estimates establish the pseudo-Lipschitz and quadratic-variation bounds. The entropy exponent $s_A$ of this ball satisfies $s_A\geq\gamma_A$, so the threshold-set term remains the bottleneck and the classical exponent in the quantum runtime is exactly $c_A^{\mathrm{cl}}$. The tilted walk lowers the quantum exponent by the additional constant $\kappa_A$.

\begin{theorem}[Informal \textup{\textsc{MAX-E$k$-LIN2}} runtime]
For every fixed $k\geq2$, consider weighted \textup{\textsc{MAX-E$k$-LIN2}} instances satisfying the assumptions of Theorem~\ref{thm:tilted-walk-eklin2}. There is a constant $\kappa_A>0$, independent of $n$, for which the quantum tilted walk algorithm finds an optimum in time
\begin{equation}
    2^{(1-c_A^{\mathrm{cl}}-\kappa_A)n/2+o(n)},
\end{equation}
where $2^{(1-c_A^{\mathrm{cl}})n+o(n)}$ is the runtime of the corresponding classical conditioning-and-search algorithm. The resulting speedup over the classical algorithm is therefore super-quadratic.
\end{theorem}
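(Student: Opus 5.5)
The argument instantiates the general pipeline: Theorem~\ref{thm:normalized-amplitude-gain} supplies the amplitude gain, Theorem~\ref{thm:optimization-from-amplitude} turns it into a runtime, and the instance-specific work is to verify the hypotheses for weighted \textsc{MAX-E$k$-LIN2} on the hypercube walk.

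\textbf{Setup.} Take $\mathcal X=\{0,1\}^n$, with $P$ the lazy or single-bit-flip hypercube walk. Then $D=X/n$, $\pi$ is uniform, and $\mu=\mathbb E_\pi[G]$. Write $H$ for the centered objective, with $\mathbb E_\pi[H]=0$ and maximum $H_{\max}$. Normalize it as $G:=\tfrac12\bigl(1+H_+/H_{\max}\bigr)$, possibly after clipping at $0$ so that $G\in[1/2,1]$. With this choice $\max_x G(x)=1$ and $\mu\le \tfrac12+O(\text{small})$. The threshold $H\ge H_{\max}/2$ corresponds to $G\ge 3/4$, so I set $\eta=1/4$ and take $T=T_\eta$.

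\textbf{Concentration and regularity.} Flipping coordinate $i$ changes $H$ by at most $2d_i\le 2d_{\max}$, so $G$ changes by at most $d_{\max}/H_{\max}=k/(n\Delta)=O(1/n)$ when $\Delta=\Theta(1)$.
\begin{itemize}
\item McDiarmid's inequality gives $|T_\eta|\le 2^{(1-\gamma_A)n}$, with the same exponent $\gamma_A$ used by Le Gall and Tamaki.
\item Pseudo-Lipschitzness follows from $(PG)(x)\ge G(x)-\max_i|\partial_iG|\ge (1-2\max_i|\partial_i G|)G(x)$, using $G\ge1/2$. Hence $L=O(1/n)$.
\item The quadratic variation satisfies $\mathcal L_2(G)=O(1/n)$.
\item The hypercube walk has $\omega=1/n$, so $\mathcal L_2^2/\omega^2=O(1)$.
\end{itemize}

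\textbf{Successful set and constants.} Let $S$ be the Hamming ball of radius $r=\beta n$ around an optimum $x^\star$, with $\beta$ a small constant chosen so that $G(x)\ge 1-\rho/2$ on $S$. The Lipschitz bound gives $G\ge 1-r k/(n\Delta)$, so this holds whenever $\beta\le\rho\Delta/(2k)$. Pick $\rho$ so that $\eta-\rho-L=\Omega(1)$ and $1-\rho-\mu-L=\Omega(1)$, for example $\rho=1/8$ with $\mu\le 1/2+o(1)$. Then Theorem~\ref{thm:normalized-amplitude-gain} applies with $M=O(n)$.

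Next, $b^*$ is obtained from balancing $b(1-\rho-\mu-L)$ against $b^2\mathcal L_2^2/(8\omega^2)=O(b^2)$, so $b^*=\Omega(1)$. The log factor is at least $\log\frac{1-\rho/2}{1-\rho}$, and $nL=O(1)$. Together these give $\kappa_A=\Omega(1)$, independent of $n$.

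Finally, the entropy of the ball is $|S|=2^{H_2(\beta)n-o(n)}$. I need $s_A:=H_2(\beta)\ge\gamma_A$ so that the $\sqrt{|T_\eta|}$ term dominates in \eqref{eq:introduction-quantum-conditioning-runtime}. This yields the runtime $2^{(1-\gamma_A-\kappa_A)n/2+o(n)}$, and $c_A^{\mathrm{cl}}=\gamma_A$.

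\textbf{Main obstacle.} I expect the bottleneck to be the comparison $H_2(\beta)\ge\gamma_A$. It requires tracking McDiarmid's exponent, roughly $\gamma_A\approx \Delta^2/(8k\ln 2)$ in terms of $\Delta$ and $k$, against the entropy of radius $\rho\Delta n/(2k)$. The entropy behaves like $\Delta\log(k/\Delta)/k$, while $\gamma_A$ is quadratic in $\Delta$ and should therefore be smaller. I would verify this inequality directly from the explicit formulas. If it fails for some parameters, I would fall back on the stated assumptions of Theorem~\ref{thm:tilted-walk-eklin2}.

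A secondary check is that $\mu$ stays a constant below $1-\rho$. This holds because $H_+$ has mean $o(H_{\max})$ by the same concentration argument.
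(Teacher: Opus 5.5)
Your route is the paper's. You bound $|T_\eta|$ at the threshold $H\ge H_{\max}/2$ by McDiarmid, take a Hamming ball of radius $\Delta n/(16k)$ around $x^\star$ on which $G\ge 1-\rho/2$, get $L,\mathcal L_2(G)=O(1/n)$ with $\omega=1/n$, and feed Theorem~\ref{thm:normalized-amplitude-gain} into Theorem~\ref{thm:optimization-from-amplitude}. The one real variation is your clipped normalization $G=\tfrac12(1+H_+/H_{\max})$ instead of the paper's affine $G=\tfrac12\bigl(1+(H+W)/(H_{\max}+W)\bigr)$. This is legitimate, because the amplitude-gain and eigenvalue theorems never use affineness and you derive the bit-flip bounds directly. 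It also gives you $\mu=\tfrac12+O(n^{-1/2})$ for free, so the absolute constants $\eta=1/4$, $\rho=1/8$ work. (Only the non-lazy walk has $D=X/n$; a lazy walk changes $D$ and $\omega$ by a harmless factor.)

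The step you leave open is essential, and you cannot delegate it to the hypotheses, because Theorem~\ref{thm:tilted-walk-eklin2} does not assume $s_A\ge\gamma_A$. Without that inequality, the preparation term $2^{(1-s_A-\kappa_A)n/2}$ could dominate and the comparison with $c_A^{\mathrm{cl}}$ would fail. Your heuristic that a term linear in $\Delta$ beats one quadratic in $\Delta$ needs an upper bound on $\Delta$, and that bound is the missing ingredient. Since $\sum_i d_i=kW$ and $H_{\max}\le W$, you have $d_{\max}\ge kW/n$, hence $0<\Delta\le 1$. Taking $\beta=\Delta/(16k)$ and using $h(t)\ge 2t$ on $[0,1/2]$ then gives
\[
h\!\left(\frac{\Delta}{16k}\right)\ \ge\ \frac{\Delta}{8k}\ \ge\ \frac{\Delta^2}{8k^2\log 2}\ =\ \gamma_A .
\]
Note that the McDiarmid exponent has $k^2$, not $k$, in the denominator, because $H_{\max}/d_{\max}=n\Delta/k$.

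Your assertion $c_A^{\mathrm{cl}}=\gamma_A$ is also unproved. For the stated upper bound you only need $c_A^{\mathrm{cl}}\le\gamma_A$, which holds because the classical exponent is a minimum that includes the same concentration exponent $\gamma_A$. The paper goes further and establishes equality, by showing that Le Gall--Tamaki's successful-set exponent satisfies $h\bigl((1-2^{-1/k})/2\bigr)\ge 1/(2k)>\gamma_A$.
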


\paragraph{Weighted \textup{\textsc{MAX-$k$-CSP}}.}
The quantum analysis also uses the light-coordinate construction from the analysis of the conditioning-and-search algorithm by Le Gall and Tamaki~\cite{gall2026dequantizingshortpathquantumalgorithms}. If $\Sigma$ is the sum of the weighted degrees, then at least half of the variables have weighted degree at most $2\Sigma/n$. Restricting local search to these coordinates still permits flips on at least $n/2$ coordinates, while every permitted flip changes the objective by a controlled amount.

The tilted-walk analysis uses the same threshold set as the classical algorithm but reduces the radius of the restricted Hamming ball by a constant factor. This smaller radius ensures that every successful center satisfies $G(x)\geq1-\rho/2$. Let $\gamma_B$ denote the threshold-set exponent and $\sigma_B$ the exponent of this smaller restricted ball. When $\sigma_B\geq\gamma_B$, threshold sampling remains the bottleneck in both analyses, and $c_B^{\mathrm{cl}}=\gamma_B$ is exactly the conditioning-and-search exponent. The same weighted-degree estimates establish the pseudo-Lipschitz and quadratic-variation bounds, and the tilted walk lowers the quantum exponent by the additional constant $\kappa_B$. For weighted exact-$k$ CSPs, the inequality $\sigma_B\geq\gamma_B$ holds automatically when $k\geq6$.

\begin{theorem}[Informal \textup{\textsc{MAX-$k$-CSP}} runtime]
For every fixed $k$, consider weighted \textup{\textsc{MAX-$k$-CSP}} instances satisfying the assumptions of Theorem~\ref{thm:tilted-walk-maxkcsp} and the comparison condition $\sigma_B\geq\gamma_B$ stated there. There is a constant $\kappa_B>0$, independent of $n$, for which the quantum tilted walk algorithm finds an optimum in time
\begin{equation}
    2^{(1-c_B^{\mathrm{cl}}-\kappa_B)n/2+o(n)},
\end{equation}
where $2^{(1-c_B^{\mathrm{cl}})n+o(n)}$ is the runtime of the corresponding classical conditioning-and-search algorithm. The comparison condition holds, in particular, for weighted exact-$k$ CSPs with $k\geq6$.
\end{theorem}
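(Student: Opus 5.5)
The plan is to instantiate the generic pipeline: the amplitude-gain Theorem~\ref{thm:normalized-amplitude-gain} combined with the runtime reduction Theorem~\ref{thm:optimization-from-amplitude}. The setup is as follows.

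\begin{itemize}
\item Take the hypercube walk, so $D=X/n$, $\pi$ is uniform and $\omega=1/n$.
\item Define the normalized score $G(x):=\tfrac12+\tfrac{H(x)}{2H_{\max}}$, clipped from below if necessary so that $G\in[1/2,1]$ with $\max G=1$.
\item Take $T=T_\eta$, the Le Gall--Tamaki threshold set $\{x:H(x)\geq H_{\max}/2\}$ restricted to the light-coordinate structure. This corresponds to $\eta=1/4$ in the $G$-normalization.
\end{itemize}

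Since $H$ has mean zero under uniform sampling, $\mu=\mathbb E_\pi[G]\le 1/2+o(1)$. Hence $1-\rho-\mu-L=\Omega(1)$ once $\rho<1/2$ is a constant. The condition $\eta-\rho-L=\Omega(1)$ holds for any constant $\rho<\eta$, say $\rho=\eta/2$.

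For the regularity estimates, let $d_i$ be the weighted degree of variable $i$ and $\Sigma=\sum_i d_i$.

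\begin{itemize}
\item Flipping bit $i$ changes $H$ by at most $O(d_i)$.
\item Averaging over the uniformly chosen flipped coordinate gives $(PG)(x)\ge G(x)-O(\Sigma/(nH_{\max}))$. Because $G\ge1/2$, this yields $L=O(\Sigma/(nH_{\max}))$.
\item Similarly, $\mathcal L_2(G)^2\le \frac1n\sum_i O(d_i^2/H_{\max}^2)$.
\end{itemize}

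Under the theorem's assumptions, which I take to include $kH_{\max}=\Omega(\Sigma)$ (a constant-ratio condition analogous to $\Delta=\Theta(1)$) and a bounded ratio of maximum to average degree, we get $L=O(1/n)$ and $\mathcal L_2(G)=O(1/n)$. Then $\mathcal L_2^2/\omega^2=O(1)$ and $b^*=\Omega(1)$. I expect this to be the delicate point for general CSPs. An unbounded $d_{\max}$ could blow up $\mathcal L_2$, since the walk flips heavy coordinates too. If the stated assumptions do not control $d_{\max}$, my first fallback is to run the walk only on the light coordinates (those with $d_i\le 2\Sigma/n$). This is still a reversible product chain with $\omega=\Theta(1/n)$, and on it both $L$ and $\mathcal L_2$ are $O(1/n)$ by construction.

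For the successful set, let $S$ be the Hamming ball of radius $r=\theta n$ around $x^\star$ on the light coordinates. Each light flip changes $H$ by at most $2\Sigma/n$ times a constant, so $G(x)\ge 1-O(r\Sigma/(nH_{\max}))\ge 1-\rho/2$ once $\theta$ is a small enough constant. This gives $S\subseteq T_\eta$ and the log factor $\log\frac{1-\rho/2}{1-\rho}$. The entropy bound $|S|\ge 2^{(n/2)h(2\theta)-o(n)}$ defines $\sigma_B$. By symmetry of Hamming distance, local search in the radius-$r$ light ball around any $x\in S$ finds $x^\star$.

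Applying Theorem~\ref{thm:normalized-amplitude-gain} gives gain $2^{\kappa_B n/2}/\operatorname{poly}(n)$ with $\kappa_B=\Omega(1)$. Plugging into Theorem~\ref{thm:optimization-from-amplitude} and using $|T_\eta|\le 2^{(1-\gamma_B)n}$ from McDiarmid, the runtime is $2^{(1-\min\{\gamma_B,\sigma_B\}-\kappa_B)n/2+o(n)}$. This equals the claimed bound when $\sigma_B\ge\gamma_B$. For exact-$k$ CSPs with $k\ge 6$, I would compare $\gamma_B=\Theta(1/k^2)$-type McDiarmid exponents against $\frac12 h(2\theta)$ with $\theta=\Theta(1/k)$ to verify the inequality numerically. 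I expect that constant-chasing to be the main remaining obstacle.
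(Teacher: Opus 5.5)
Your route is the paper's: the hypercube walk with $\omega=1/n$, the threshold state for $H\geq H_{\max}/2$, a light-coordinate Hamming ball around $x^\star$ as $S$, Theorem~\ref{thm:normalized-amplitude-gain} fed into Theorem~\ref{thm:optimization-from-amplitude}, McDiarmid for $\gamma_B$, and the entropy estimate on $\lfloor n/2\rfloor$ light coordinates for $\sigma_B$. The one structural difference is the score.

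The paper uses the affine score $G=\frac12(1+(H+B)/R)$ with $R=H_{\max}+B$. This already lies in $[1/2,1]$ because $H+B=\sum_j\Gamma_jw_jP_j\geq0$, so Lemma~\ref{lem:normalization-regularity} applies verbatim. The price is the hypothesis $1-\mu=H_{\max}/(2R)=\Theta(1)$.

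Your clipped score $G=\frac12+\max\{H,0\}/(2H_{\max})$ removes the need for that hypothesis, and your first-moment pseudo-Lipschitz bound avoids the $\sqrt\nu$ lost to Cauchy--Schwarz. However, now $\mu=\frac12+\mathbb{E}[H^+]/(2H_{\max})$, so ``mean zero'' alone does not give $\mu\leq\frac12+o(1)$. You need a concentration step: $\mathbb{E}[H^+]\leq(\mathbb{E}H^2)^{1/2}$ together with Efron--Stein gives $O(\Gamma_{\max}\sqrt{\nu}\,\Sigma/\sqrt n)=O(H_{\max}/\sqrt n)$.

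With $\eta=1/4$ and $\rho=1/8$, your levels coincide with the paper's when expressed in terms of $H$. Two small fixes: set $\theta=H_{\max}/(16\Gamma_{\max}\Sigma)$ exactly rather than ``small enough'', because the hypothesis $\sigma_B\geq\gamma_B$ refers to that radius. And take $T=T_\eta$ with no light-coordinate restriction, because Theorem~\ref{thm:normalized-amplitude-gain} needs $T\supseteq T_\eta$.

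Your worry about $d_{\max}$ is misplaced, and the fallback would fail. The one-step quadratic variation averages over the flipped coordinate, so $\mathcal{L}_2(G)^2\leq\frac{\Gamma_{\max}^2}{4nH_{\max}^2}\sum_id_i^2=\frac{\Gamma_{\max}^2\nu\Sigma^2}{4n^2H_{\max}^2}$, as your own display shows. What you need is $\nu=O(1)$ and $\Sigma=O(H_{\max})$. For fixed $k$, both follow from the actual hypotheses of Theorem~\ref{thm:tilted-walk-maxkcsp}, namely $1-\mu=\Theta(1)$ and $\Gamma_{\max}^2\nu\Sigma^2/R^2=O(1)$.

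A bound on $d_{\max}/d_{\mathrm{avg}}$ is strictly stronger: a single variable of degree $\sqrt n\,d_{\mathrm{avg}}$ keeps $\nu=O(1)$. As written, you would therefore cover only a subclass of the instances in the statement.

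Restricting the walk to light coordinates is not a valid repair. That chain is reducible on $\{0,1\}^n$, so no log-Sobolev inequality with $\omega>0$ holds: a function of one heavy coordinate has zero Dirichlet form but positive entropy. Moreover, $D+bG$ splits into blocks indexed by the heavy assignment, and $E_b$ is the largest block eigenvalue. The block containing $x^\star$ can have mean of $G$ far above $1/2$, which can destroy $1-\rho-\mu-L>0$.

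Finally, two claims in the statement are left unproved.

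\emph{The identification $c_B^{\mathrm{cl}}=\gamma_B$.} The Le Gall--Tamaki exponent at $\eta=1/2$ is $\min\{\gamma_B,\frac12h(H_{\max}/(2\Gamma_{\max}\Sigma))\}$. Its entropy argument is four times yours, and both lie in $[0,1/2]$. Monotonicity of $h$ therefore gives $\frac12h(H_{\max}/(2\Gamma_{\max}\Sigma))\geq\sigma_B\geq\gamma_B$.

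\emph{The $k\geq6$ claim.} This is a two-line inequality, not numerical constant-chasing.
\begin{itemize}
\item Set $x=H_{\max}/(\Gamma_{\max}\Sigma)$. An exact-$k$ instance has $\Sigma=kW\geq kH_{\max}$, so $x\leq1/k$.
\item Since $h(t)\geq2t$ on $[0,1/2]$, you get $\sigma_B=\frac12h(x/8)\geq x/8$.
\item Meanwhile $\gamma_B=x^2/(2\ln2\,\nu)\leq x/8$ if and only if $x\leq(\ln2)\nu/4$. This holds because $\nu\geq1$ and $1/k\leq(\ln2)/4$ for $k\geq6$.
\end{itemize}
The point is to compare linear against quadratic dependence on the same instance-dependent $x$, uniformly over $x\in[0,1/k]$. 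Comparing $\Theta(1/k)$ against $\Theta(1/k^2)$ at a single point does not capture this.
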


Section~\ref{sec:application-to-optimization} contains the complete proofs corresponding to this section.

\subsection{Further Advantages of Quantum Tilted Walks}
\label{sec:further-advantages-intro}
The applications above start from the threshold state $|T_\eta\rangle$, which is the coherent encoding of the stationary distribution conditioned on $T_\eta$. Quantum tilted walks also allow the initial state to be chosen independently of the stationary distribution. This freedom is useful when a classical algorithm, prior computation, or property of the instance provides a promising distribution that is not stationary for the available mixer. Rather than replacing the mixer, we continue to use $D$ and measure directly how it acts on the chosen initial amplitudes.

A classical motivation for this choice is provided by the phenomenon of ``favorable metastability''. It is well known that classical heuristic algorithms based on Markov chains, such as simulated annealing or the metropolis process, can sometimes find a global optimum effectively even when their low-temperature mixing is exponentially obstructed. Chen, Mikulincer, Reichman, and Wein~\cite{chen_et_al:LIPIcs.APPROX-RANDOM.2025.47} give an example of this phenomenon, observing that the Metropolis process for maximum independent set on $K_{n,n}$ has exponential mixing time at every temperature, yet at an appropriate temperature it finds a maximum independent set in polynomial time when initialized at the empty set. Independent sets concentrated on the two sides of $K_{n,n}$ form regions separated by a bottleneck. Moving between these regions is necessary for mixing, but it is unnecessary for optimization because either side is a maximum independent set. The process can therefore benefit from remaining in one region long enough to reach an optimum. This example motivates quantum algorithms that can start from a state concentrated in a favorable region, even when that state has little overlap with the stationary state.

\paragraph{Amplitude gain from a nonstationary initial state}
Theorem~\ref{thm:normalized-amplitude-gain-informal} applies when the initial state is the coherent encoding of $\pi$ conditioned on a set $T$ containing $T_\eta$. To analyze more general initial states, we replace this requirement by conditions that compare the chosen amplitudes directly with their image under $D$. Theorem~\ref{thm:warm-start-gain} itself assumes neither a log-Sobolev inequality nor a spectral-gap bound. Its lower bound depends on $E_b$, so each application must control this normalization separately.

Let
\begin{equation}
    |\varphi\rangle=\sum_{x\in\mathcal X}\varphi(x)|x\rangle,
    \qquad
    \Omega_\varphi=\{x:\varphi(x)>0\},
\end{equation}
be a normalized state with nonnegative amplitudes. Define
\begin{equation}
    \lambda_\varphi
    =\min_{x\in\Omega_\varphi}
      \frac{(D\varphi)(x)}{\varphi(x)}.
    \label{eq:introduction-warm-compatibility}
\end{equation}
Thus $\lambda_\varphi$ is the largest constant for which $D\varphi\geq\lambda_\varphi\varphi$ holds pointwise on $\Omega_\varphi$. The stationary state satisfies $\lambda_{\sqrt\pi}=1$. A nonstationary state can also have $\lambda_\varphi$ close to one when one application of the mixer approximately preserves its amplitudes.

Assume $\lambda_\varphi>0$. Then $(D\varphi)(x)>0$ throughout $\Omega_\varphi$, and the effect of the mixer on objective values is described by the stochastic kernel
\begin{equation}
    K_\varphi(x,y)
    =\frac{D_{x,y}\varphi(y)}{(D\varphi)(x)},
    \qquad x\in\Omega_\varphi.
\end{equation}
We assume that, for some $L_\varphi\in[0,1]$,
\begin{equation}
    (K_\varphi G)(x)
    \geq(1-L_\varphi)G(x)
    \qquad\text{for every }x\in\Omega_\varphi.
    \label{eq:introduction-warm-score-regularity}
\end{equation}
This condition controls how quickly the objective can decrease under the transitions induced by $D$ and $|\varphi\rangle$. Unlike the pseudo-Lipschitz condition in Theorem~\ref{thm:normalized-amplitude-gain-informal}, it is imposed only on $\Omega_\varphi$ and uses the kernel $K_\varphi$ rather than the transition matrix $P$.

\begin{theorem}[Informal amplitude gain from a nonstationary initial state, formal statement in Theorem~\ref{thm:warm-start-gain}]
\label{thm:warm-start-gain-informal}
Assume that $\lambda_\varphi>0$ and that~\eqref{eq:introduction-warm-score-regularity} holds. Let $b\geq0$, let $M\geq0$ be an integer, and let $S\subseteq\Omega_\varphi$ be nonempty. If $g\in[0,1]$ and $G(x)\geq g$ for every $x\in S$, then
\begin{equation}
    \left\|\Pi_SC_{b,M}|\varphi\rangle\right\|
    \geq
    \frac{\left\|\Pi_S|\varphi\rangle\right\|}{M+1}
    \prod_{t=0}^{M-1}
    \frac{\lambda_\varphi+b(1-L_\varphi)^tg}{E_b}.
    \label{eq:introduction-warm-marked-norm-gain}
\end{equation}
The same inequality holds with $C_{b,M}$ replaced by $D_b^M$ and the factor $1/(M+1)$ removed from the right-hand side.
\end{theorem}

The product in~\eqref{eq:introduction-warm-marked-norm-gain} separates the three effects that determine the gain. The factor $\lambda_\varphi$ is the minimum fraction of the initial amplitude retained under one application of $D$. The factor $(1-L_\varphi)^tg$ tracks the objective value retained after $t$ steps of $K_\varphi$, while $E_b$ is the normalization cost. If $M=\Theta(n)$, $L_\varphi=O(1/n)$, and every ratio in the product exceeds one by a constant, then the output marked amplitude exceeds the initial marked amplitude by an exponential factor, up to the factor $1/(M+1)$. The same product bound holds for the single power $D_b^M$, so averaging is not needed for this theorem. Averaging remains useful for the point-dependent walk lengths in Theorem~\ref{thm:normalized-amplitude-gain-informal}.

\paragraph{Conditioned initial states}
The conditioned states used in the preceding subsection provide an important special case. For a nonempty set $A$, let $|A\rangle$ be the coherent encoding of $\pi(\cdot\mid A)$. Then
\begin{equation}
    \lambda_A=\min_{x\in A}P(x,A).
\end{equation}
If $P(x,A)>0$ for every $x\in A$, then
\begin{equation}
    K_A(x,y)=\frac{P(x,y)}{P(x,A)}\mathbf1_{\{y\in A\}}.
    \label{eq:introduction-conditioned-warm-state}
\end{equation}
Thus $\lambda_A$ is the minimum one-step probability of remaining in $A$, and $K_A$ is the original chain conditioned to remain in $A$ for that step. The original mixer can therefore be retained instead of constructing a new reversible chain with stationary distribution $\pi(\cdot\mid A)$. To apply Theorem~\ref{thm:warm-start-gain}, one must have $P(x,A)>0$ for every $x\in A$ and verify $(K_AG)(x)\geq(1-L_A)G(x)$ on $A$ for some $L_A\in[0,1]$. By contrast, Theorem~\ref{thm:normalized-amplitude-gain-informal} can apply to a threshold state with vanishing retention when its threshold-containment and global regularity assumptions hold.

\paragraph{Avoiding ground-state preparation}
The use of bounded powers rather than ground-state projection is a second advantage of quantum tilted walks. The following inequality first shows why nonstationary initial states are particularly relevant when the mixer itself has a small gap. Suppose, for some $\delta>0$, that the largest eigenvalue of $D$ on $|\sqrt\pi\rangle^\perp$ is at most $1-\delta$. Then
\begin{equation}
    1-|\langle\sqrt\pi|\varphi\rangle|^2
    \leq\frac{1-\lambda_\varphi}{\delta}.
    \label{eq:introduction-warm-gap-tradeoff}
\end{equation}
If $\delta$ is bounded away from zero, a state with $\lambda_\varphi$ close to one must also have large overlap with the stationary state. A state can have small stationary overlap while $\lambda_\varphi$ remains close to one only when this top-eigenvalue gap is small. This is precisely the regime in which requiring preparation of the mixer ground state would discard the information contained in the chosen initial state.

Theorem~\ref{thm:warm-start-gain} also imposes no lower bound on the spectral gap of the tilted Hamiltonian. A block encoding of $C_{b,M}$ uses $O(M)$ calls to a block encoding of $D_b$, independent of this spectral gap. The output need not approximate the ground state of the tilted Hamiltonian; only its amplitude on $S$ enters the runtime. In contrast, the implementation of the generalized short-path algorithm by Chakrabarti et al.~\cite{chakrabarti2025generalizedshortpathalgorithms} uses ground-state reflections whose analyzed cost depends on the inverse spectral gap of the tilted Hamiltonian.

\paragraph{Choosing the tilt without empirical tuning}
The absence of ground-state projection provides an additional advantage. In the two applications above, the analysis determines $b^*$ and $M$ explicitly from bounds on the objective and mixer. More generally, suppose the analysis identifies a polynomial-size candidate set containing a useful pair $(b,M)$ and a common running-time bound sufficient for that pair. The algorithm may run every pair up to this bound, evaluate the returned objective values, and retain the best solution. An unsuccessful pair may fail to increase the marked amplitude, but it cannot exceed the common budget. The useful pair suffices for the overall procedure to succeed, so enumeration changes the stated running time by only a polynomial factor. For tilted walks, the cost of each trial is fixed by $M$ and the common cutoff without requiring an estimate of the spectral gap of the tilted Hamiltonian. Certifying a generalized short-path trial at a particular value of $b$ additionally requires a lower bound on the spectral gap of the corresponding tilted Hamiltonian. This advantage is particularly important when running these algorithms in practice. It has been observed numerically~\cite{Dalzell_2023,chakrabarti2025generalizedshortpathalgorithms}, that short-path algorithms provide the best performance when $b$ is chosen to be significantly larger than that predicted by the theory. We expect the same to be true of tilted walks. The advantage of the tilted walk framework is that we can obtain the performance corresponding to the best choice of $b$ with only polynomial overhead, without the need for expensive hyperparameter tuning or other heuristics.

\subsubsection{An example with exponentially small spectral gaps}
We now use the new amplitude-gain guarantee to construct an example that combines the two main advantages discussed above. The initial state has squared overlap $2^{-n}$ with the stationary state of the mixer, yet Theorem~\ref{thm:warm-start-gain} proves exponential amplitude gain using $M=\Theta(n)$. At the same time, the spectral gap of the tilted Hamiltonian is exponentially small and therefore violates the inverse-polynomial gap condition used in the analysis of the generalized short-path algorithm by Chakrabarti et al.~\cite{chakrabarti2025generalizedshortpathalgorithms}. The example thus establishes a concrete separation between the analytical guarantees for quantum tilted walks and generalized short-path algorithms built from the same mixer.

The construction assigns separate roles to a sector register and a work register. Transitions between sector states are exponentially weak, while the work register carries the objective and the successful set. The initial state is confined to one sector and is uniform on the work register. It has exponentially small stationary overlap, but one mixer step decreases its amplitudes by only an exponentially small amount. Because the objective acts only on the work register, tilting does not remove the bottleneck between sectors.

\begin{theorem}[Informal example with a nonstationary initial state]
For every sufficiently large $n$ divisible by $4$, there are a reversible-chain discriminant $D_n$, an objective $G_n$, a nonnegative initial state $|\varphi_n\rangle$, and a successful set $S_n$ with the following properties. The initial state satisfies
\begin{equation}
    |\langle\sqrt{\pi_n}|\varphi_n\rangle|^2=2^{-n},
    \qquad
    \lambda_{\varphi_n}=1-O(2^{-n}/n),
\end{equation}
the spectral gap of the mixer $D_n$ equals $2^{-n}/n$, and the spectral gap of the tilted Hamiltonian $D_n+\frac12G_n$ is at most $2^{-n}/n$. Nevertheless, for $M=\lfloor n/20\rfloor$ and a constant $\Gamma_0>0$,
\begin{equation}
    \left\|\Pi_{S_n}C_{1/2,M}|\varphi_n\rangle\right\|
    \geq
    2^{\Gamma_0n-o(n)}
    \left\|\Pi_{S_n}|\varphi_n\rangle\right\|.
\end{equation}
\end{theorem}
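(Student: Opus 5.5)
We will construct the example explicitly and verify each claimed property. The construction is a product chain on $\mathcal X=\{1,2\}\times\{0,1\}^n$: a sector register and a work register. We take
\[
P_n=(1-\epsilon)\,I_{\mathrm{sec}}\otimes P_{\mathrm{hc}}+\epsilon\,Q\otimes I_{\mathrm{work}}.
\]
Here $P_{\mathrm{hc}}$ is the lazy hypercube walk on the work register, with discriminant $\tfrac12(I+X/n)$. The sector kernel $Q$ swaps the two sectors. The weight $\epsilon$ is tuned so that the gap of $D_n$ is exactly $2^{-n}/n$. If the literal squared overlap $2^{-n}$ is required, we instead use $n$ sector bits with an exponentially weak sector-flip kernel in place of $Q$. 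In that version the overlap of a single sector state with the uniform sector state is $2^{-n}$.

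The stationary distribution is uniform, so $D_n$ is symmetric. Its spectrum is the sum of the hypercube spectrum and $\epsilon\cdot\operatorname{spec}(Q)$. The second eigenvalue comes from the sector direction and equals $1-2\epsilon$ times a constant. We set $\epsilon$ so that this gives the stated gap.

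The objective $G_n$ acts only on the work register and takes values in $[1/2,1]$. A natural choice is $G_n(s,w)=\tfrac12+\tfrac12\,|w|/n$, perhaps capped near $3n/4$ to use $n\equiv 0 \pmod 4$. The initial state is $|\varphi_n\rangle=|s_0\rangle\otimes|+^n\rangle$, and the successful set is $S_n=\{s_0\}\times\{w:|w|\ge 3n/4\}$.

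\textbf{Step 1 (overlap and retention).} The overlap follows directly from the product structure. Since $|+^n\rangle$ is the top eigenvector of the work part, $(D_n\varphi_n)(x)=(1-\epsilon)\varphi_n(x)$ on $\Omega_\varphi$. This gives $\lambda_{\varphi_n}=1-O(2^{-n}/n)$.

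\textbf{Step 2 (regularity of $K_\varphi$).} We compute $K_{\varphi}$ explicitly. Inside the sector it is the lazy hypercube walk, and $G$ drops by at most $1/(2n)$ per flipped bit. Together with $G\ge 1/2$, this gives $L_\varphi=O(1/n)$. Concretely, $(K_\varphi G)(x)\ge G(x)-\tfrac1{4n}\ge (1-\tfrac1{2n})G(x)$.

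\textbf{Step 3 (tilted gap).} $G_n$ commutes with the sector part, so $D_n+\tfrac12 G_n=(1-\epsilon)(I\otimes A)+\epsilon\,Q\otimes I$ with $A$ acting on the work register. The eigenvectors therefore split into symmetric and antisymmetric sector combinations of the top eigenvector of $A$. These two eigenvalues differ by $2\epsilon$ times a constant, which bounds the tilted gap by $2^{-n}/n$.

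\textbf{Step 4 (amplitude gain; main obstacle).} We apply Theorem~\ref{thm:warm-start-gain} with $b=1/2$, $g=7/8$, $M=\lfloor n/20\rfloor$ and $L_\varphi\le 1/(2n)$. For $t\le M$ we have $(1-L_\varphi)^t\ge e^{-1/40}$. Each numerator is then at least $1-o(1)+\tfrac12\cdot\tfrac78e^{-1/40}$.

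The hard part is an upper bound on $E_{1/2}=\lambda_{\max}(D_n+\tfrac12G_n)$, which must sit strictly below this numerator by a constant factor. By Step 3, $E_{1/2}$ is essentially $\lambda_{\max}(\tfrac12(I+X/n)+\tfrac12G)$ on the work register. For the linear $G$ this is a sum of single-qubit $2\times2$ problems, so it can be computed exactly. Alternatively, we can bound it with the log-Sobolev estimate $E_b\le 1+b\mu+b^2\mathcal L_2^2/(8\omega^2)$, where $\mu=3/4$. The gain then needs $1+\tfrac12\cdot\tfrac78e^{-1/40}>E_{1/2}$ by a constant.

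If the margin fails with these constants, we will adjust the objective: either sharpen $G$ near the top or use a lazier walk to raise the required ratio. This tuning is the main obstacle. Once each ratio exceeds $1+c$, the product is $(1+c)^{M}=2^{\Gamma_0 n}$, and the factor $1/(M+1)$ is absorbed into $2^{-o(n)}$.
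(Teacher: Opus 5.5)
Your overall architecture matches the paper's: a weakly coupled sector register, a lazy hypercube with the Hamming-weight objective on the work register, Theorem~\ref{thm:warm-start-gain}, single-qubit diagonalization for $E_b$, and sector eigenvectors for the tilted gap. However, two of the claimed properties are not established.

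\textbf{The overlap claim and the choice of sector kernel.} Your primary two-sector chain violates the overlap claim. With uniform $\pi_n$ on $\{1,2\}\times\{0,1\}^n$ you get $|\langle\sqrt{\pi_n}|\varphi_n\rangle|^2=1/2$, not $2^{-n}$. The fallback with $n$ sector bits is left unspecified, and the sector kernel cannot be chosen freely. By Lemma~\ref{lem:warm-gap-tradeoff}, squared overlap $2^{-n}$ and gap $\delta=2^{-n}/n$ force $1-\lambda_{\varphi_n}\geq(1-2^{-n})2^{-n}/n$. So the claim $1-\lambda_{\varphi_n}=O(2^{-n}/n)$ requires saturating that tradeoff up to a constant. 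With a sector kernel $Q$ of weight $\epsilon$, you get $1-\lambda_{\varphi_n}=\epsilon(1-Q_{s_0s_0})$ but gap $\epsilon(1-\lambda_2(Q))$. For the natural bit-flip kernel $Q=X/n$ (lazy or not), this ratio is $n/2$. Pinning the gap at $2^{-n}/n$ then forces $1-\lambda_{\varphi_n}=\Theta(2^{-n})$, and the claim fails. The paper instead uses the uniform-resampling kernel $R_n=(1-\epsilon_n)I+\epsilon_n|+\rangle\!\langle+|$ with weight $\tau_n=1/n$. Its entire non-top spectrum equals $1-\epsilon_n$, so $1-\lambda_{\varphi_n}=\tau_n\epsilon_n(1-2^{-n})$ saturates the lemma exactly.

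\textbf{The margin in Step 4.} Step 4 is the substance of the statement, and you leave it as a contingency (``if the margin fails, we will adjust''). It must be computed, and the margin is thin. Write $\tau$ for the sector weight. The work-register operator is $(1-\tau)D_{\mathrm{hc}}+\tfrac12\widetilde G_n=\bigl(\tfrac{1-\tau}{2}+\tfrac38\bigr)I+\tfrac1n\sum_i\bigl(\tfrac{1-\tau}{2}X_i-\tfrac18Z_i\bigr)$. Adding the top sector eigenvalue $\tau$ gives $E_{1/2}=(7+\sqrt{17})/8+O(\tau)\approx1.3904$. Every numerator is at least $1+\tfrac7{16}e^{-1/40}-o(1)\approx1.4267$. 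The resulting ratio $\approx1.026$ is what produces $\Gamma_0>0$; this is exactly the paper's computation, and no retuning is needed.

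Your log-Sobolev alternative works only because you first reduce to the work register. Any log-Sobolev constant of $D_n$ itself is at most half its exponentially small gap, which would make $b^2\mathcal L_2(G)^2/(8\omega^2)$ exponentially large. On the work register (lazy hypercube, $\omega=1/(2n)$, $\mathcal L_2(G)^2=1/(8n^2)$) it gives $E_{1/2}\leq1+\tfrac38+\tfrac1{64}+o(1)\approx1.3906$, which still clears the numerator.

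Finally, with your $L_\varphi=1/(2n)$, the bound $(1-L_\varphi)^t\geq e^{-1/40}$ holds only up to a $1-O(1/n)$ factor. This is harmless given the margin, and the paper's sharper $L_{\varphi_n}=1/(4n)$ avoids it.
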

For the minimization objective $H_n=-G_n$ and filter parameter $\eta=1/2$, the generalized short-path Hamiltonian with tilt $1/4$ satisfies
\begin{equation}
    D_n+\frac12G_n=-H_{1/4}^{\mathrm{sp}}+\frac14I.
\end{equation}
The two operators therefore have the same spectral gap. Within the occupied sector, the work-register walk satisfies~\eqref{eq:introduction-warm-score-regularity} with $L_{\varphi_n}=O(1/n)$, so Theorem~\ref{thm:warm-start-gain} proves exponential amplitude gain using only $M=\Theta(n)$ powers. For the same mixer and objective, the exponentially small spectral gap of the tilted Hamiltonian prevents the analysis of Chakrabarti et al.~\cite{chakrabarti2025generalizedshortpathalgorithms} from certifying efficient ground-state reflection.

Equation~\eqref{eq:introduction-warm-gap-tradeoff} also shows that the exponentially small mixer gap is unavoidable for a state with the displayed stationary overlap and value of $\lambda_{\varphi_n}$. The example therefore separates the combined sufficient conditions of the two analyses rather than the condition on the spectral gap of the tilted Hamiltonian alone. We note that the goal of this example is to contrast the tilted-walk and short-path frameworks. An element of $S_n$ can be written down directly, so this is not a quantum speedup for a hard optimization problem. Proposition~\ref{prop:metastable-warm-separation} gives more details about this construction and proves each of the above claims.

\subsection{Discussion}
\label{sec:discussion}

\paragraph{Comparison to Classical Algorithms and the Possibility of Further Dequantization} In this paper, we do not focus on computing the exact numerical values of the exponent in our super-quadratic speedups and instead only show that the exponent is lower-bounded by a constant independent of the problem size. The reasons for this are two-fold: firstly, the actual exponents of the speedup from short-path algorithms tend to be very small. However, it has been observed numerically that the true speedup obtained when the value of $b$ is appropriately tuned can be much larger than the analytical prediction~\cite{Dalzell_2023,chakrabarti2025generalizedshortpathalgorithms}. As we have already argued, the tilted walks algorithm can be augmented to obtain the performance of the best $b$ out of a number of possibilities, at an overhead that is proportional to the number of possibilities. Determining the value of the exponent with this ideal choice of $b$ requires a significant numerical study, which we defer to future work.

Secondly, our main focus in this paper is on the mechanism for quantum speedup afforded by the quantum tilted walk algorithm. We note that the quantum algorithms presented here may not offer a super-quadratic speedup over \emph{all} classical algorithms. In fact, for \textsc{MAX-$k$-CSP}, classical conditioning-and-search is not the best known classical algorithm. The dequantization of Le Gall and Tamaki is notable in part because it was based on the same spectral density assumption used by Dalzell et al to obtain a super-quadratic speedup over unstructured search. It therefore raised the possibility that the apparent mechanism for quantum speedup was fully exploitable classically. We show that this is not the case for quantum tilted walks, since we can offer a super-quadratic speedup over the classical algorithm that is designed to exploit the spectral density condition. In fact, the spectral density condition is never assumed in our analysis as it is a provable consequence of our pseudo-Lipschitzness assumption. In fact, a more important role is played by assumptions that govern the \emph{smoothness} of the objective function under the action of the mixer. This includes the pseudo-Lipschitzness assumption in Theorem~\ref{thm:normalized-amplitude-gain-informal} but such assumptions are also required for Theorem~\ref{thm:warm-start-gain-informal}. We therefore propose that smoothness of the objective function under the action of the mixer is in fact the central mechanism driving the speedup offered by this class of algorithms.

Finally, the authors of~\cite{chakrabarti2025generalizedshortpathalgorithms} identify a setting where the generalized-short path algorithm can be shown to offer a super-quadratic speedup over all known algorithms for a particular class of maximum independent set problems. There is currently no conditioning-and-search version of the classical Markov Chain search algorithm analyzed in~\cite{chakrabarti2025generalizedshortpathalgorithms}, although it is reasonable that one can be derived. If such an algorithm is found, the analysis in this paper offers a natural path to a super-quadratically accelerated quantum version. The identification of such an algorithm and the analysis of the corresponding tilted walk are natural open questions for future work.

\paragraph{Future Work}
Several other questions remain open. The first is to find hard optimization families with useful nonstationary initial states, such as states produced by classical approximation algorithms or states localized in metastable regions. The second is to improve the upper bound on $E_b$. Our present bound uses a log-Sobolev inequality and one-step quadratic variation; problem-specific spectral structure may yield a stronger normalization certificate, especially when the global chain mixes slowly. It is also an interesting direction to identify different structured problem classes where the tilted walk framework yields larger advantage over the classical baselines.

\subsection{Related work}
Short-path algorithms are the main inspiration for this work. Hastings~\cite{hastings2018shortpathquantumalgorithm,hastings2018weaker,hastings2019toy} introduced the framework and subsequently studied weaker assumptions and a toy model. Dalzell et al.~\cite{Dalzell_2023} introduced a filtered objective and proved constant-exponent super-Grover speedups for several families of binary optimization problems. Chakrabarti et al.~\cite{chakrabarti2025generalizedshortpathalgorithms} replaced the hypercube mixer by discriminant operators of reversible Markov chains, allowing the quantum runtime to be compared with search from problem-dependent stationary distributions. Quantum tilted walks use the same class of tilted operators but analyze the amplitude produced by $C_{b,M}$ directly rather than preparing the ground state of the tilted Hamiltonian.

Quantum-walk search provides a second line of related work. Continuous-time spatial search combines a graph walk Hamiltonian with a marking potential \cite{childs2004spatial}. The frameworks of Szegedy~\cite{szegedy2004quantum} and Magniez et al.~\cite{magniez2011search} quantize reversible-chain search in terms of hitting, setup, update, and checking costs. Apers et al.~\cite{apers_et_al:LIPIcs.STACS.2021.6} extended quantum-walk search to arbitrary initial distributions, while Apers and Sarlette~\cite{apers2019fastforwarding} used polynomial transformations of a discriminant operator to accelerate transient dynamics. Related quantum sampling and simulated-annealing algorithms obtain quadratic improvements in mixing or spectral-gap dependence \cite{somma2008annealing,wocjan2008speedup}.

Le Gall and Tamaki~\cite{gall2026dequantizingshortpathquantumalgorithms} introduced conditioning-and-search by extracting an explicit threshold-set and local-search mechanism from the analysis of previous short-path algorithms. Their classical algorithm and its standard quadratic quantum speedup are the central baselines for our concrete applications. Quantum tilted walks insert the transformation $C_{b,M}$ between threshold-state preparation and local search. The resulting amplitude gain is the source of the additional exponent $\kappa$.

\subsection{Organization}
Section~\ref{sec:tilted-walk-algorithm} defines the tilted-walk framework, its assumptions, and the access model. Section~\ref{sec:tilted-walk-analysis} proves the amplitude gain, controls the normalization factor (top eigenvalue of the tilted operator) through a log-Sobolev inequality, and balances these two effects. Section~\ref{sec:application-to-optimization} converts these results into an exact-optimization algorithm and applies it to \textsc{MAX-E$k$-LIN2} and \textsc{MAX-$k$-CSP}. Section~\ref{sec:warm-starts} proves an amplitude bound for nonstationary initial states and analyzes an example with an exponentially small mixer gap. Section~\ref{sec:implementation-details} constructs the required block encodings, accounts for approximation errors, and resolves the unknown normalization and optimum values.

\section{The Quantum Tilted Walk Framework}
\label{sec:tilted-walk-algorithm}
The introduction described the quantum tilted-walk algorithm and stated its main amplitude guarantee. We now fully describe the algorithm and state the assumptions used in the amplitude analysis. We formally define the normalized objective $G$, the tilted operator $D_b$, the operator $C_{b,M}$, and the coherent encodings of conditioned stationary distributions. We then state the assumptions on one-step objective variation, pseudo-Lipschitzness, the log-Sobolev constant of the mixer, and quantum access to the objective and transition rule, that are required for the analysis of the algorithm.

Let $\mathcal{X}=\{0,1\}^n$ index the computational basis, and let $P$ be the transition matrix of a reversible random walk on $\mathcal{X}$ with stationary distribution $\pi$, and let $D=D(P)$ be its discriminant operator, so that
\begin{equation}
    D_{x,y}=\frac{\sqrt{\pi(x)}}{\sqrt{\pi(y)}}P(x,y)
    =\sqrt{P(x,y)P(y,x)}.
\end{equation}
Let $H_c$ be the diagonal operator encoding the objective, i.e., $H_c|z\rangle=f(z)|z\rangle$. Choose a shift $A_c\in\mathbb R$ and a scale $R_c>0$ such that
\begin{equation}
    \widehat G:=\frac{H_c+A_cI}{R_c}
\end{equation}
satisfies $\widehat G(z)\in[0,1]$ and $\max_z\widehat G(z)=1$, and define
\begin{equation}
    G:=\frac{I+\widehat G}{2}.
    \label{eq:shifted-objective}
\end{equation}
Thus $G(z)\in[1/2,1]$, $\max_zG(z)=1$, and $G$ has the same maximizers as $H_c$, equivalently as the objective function $f$. For $b\geq0$, we define the tilted operator
\begin{equation}
    H_b=D+bG
\end{equation}
and let $E_b=\lambda_{\max}(H_b)$. Since $H_b$ is a nonzero entrywise nonnegative matrix, $E_b>0$, and a corresponding eigenvector may be chosen entrywise nonnegative. We define the normalized tilted operator $D_b=H_b/E_b$. Although $D$ is the discriminant of a Markov chain, $D_b$ is not generally a stochastic matrix. Fix $b\geq0$ and an integer $M\geq0$, and define the operator
\begin{equation}
    C_{b,M}=\frac{1}{M+1}\sum_{m=0}^{M}D_b^m,
    \qquad
    D_b=\frac{D+bG}{E_b}.
\end{equation}
so that the operator $C_{b,M}$ has spectral norm at most $1$. For any nonempty set $T\subseteq\mathcal X$, define
\begin{equation}
    \pi(T):=\sum_{z\in T}\pi(z),
    \qquad
    |T\rangle:=\frac{1}{\sqrt{\pi(T)}}
    \sum_{z\in T}\sqrt{\pi(z)}\,|z\rangle.
\end{equation}
In particular, this defines $|T_\eta\rangle$; for the hypercube walk it is the uniform superposition over $T_\eta$. The application theorems specify $b$ and $M$ explicitly. If the analysis instead identifies a polynomial-size set of candidate pairs, the algorithm can run every pair with a fixed time limit and retain the best verified output. For a fixed pair, it applies
\begin{equation}
     C_{b,M} |T \rangle
\end{equation}
and then projects onto a target quantum state $|S\rangle$. Depending on the application, the resulting state may be processed further. When $S$ is a set of points where $G$ achieves relatively larger values than $T$, the overlap 
\begin{equation}
|\left\langle S| C_{b,M} |T \right\rangle|
\end{equation}
can be bigger than $|\langle S|T\rangle|>0$ by a nontrivial factor since the operator $C_{b,M}$ tilts the state onto higher energy subspace of $G$. To understand why we should expect such a nontrivial amplitude gain, consider the hypercube specialization and take $S=\{x^{\star}\}$ and $|T\rangle=|+\rangle$. Consider a single term $D_b^m$ in the sum
\begin{align}
    \langle x^*|D_b^m|+\rangle
    &=\frac{1}{2^{n/2}}\frac{1}{E_b^m}\sum_{s=0}^m\sum_{\substack{p_0,\ldots,p_s\geq0\\p_0+\cdots+p_s=m-s}} b^s 
    \sum_{y, y_1,\ldots,y_s\in\{0,1\}^n}
    P^{p_0}(x^{\star},y_1)G(y_1)P^{p_1}(y_1,y_2)G(y_2)\cdots \\
    &\hspace{4cm}\cdots P^{p_{s-1}}(y_{s-1},y_s)G(y_s)P^{p_s}(y_s,y) 
\end{align}
Since there are exponentially many paths in this sum, it is possible that for problems that are relatively stable under each application of $P$, each path contributes a non-negligible amount to the sum possibly, giving an exponentially large improvement. However, the catch is that there is an $E_b^m$ term that is sitting in the denominator which is exponentially large as well when $m=\text{poly}(n)$. Therefore, the structure should give a non-vacuous tradeoff where the amplitude gain is not dominated by the energy term. For certain cases, in fact it is possible to show that 
\begin{equation}
|\left\langle x^{*}| C_{b,M} |+ \right\rangle| = \Omega^{*}(2^{-(1-c)n/2}).
\end{equation}
Then, quantum maximum finding over the normalized state $C_{b,M}|T\rangle$ gives a super-quadratic speedup over brute force search for a choice of $b=b^{*}$ and $M$. 

For this choice of $S$ and $T$, our algorithm should recover the amplitude gain by the original short path algorithm. In fact a single term in this operator approximates the projector onto the ground state of the tilted Hamiltonian analyzed in~\cite{Dalzell_2023} for large $m$. However, for more general $T$, a single term in the expansion is not sufficient since coupling between every point in $S$ and $T$ requires a  different value of $m$. On the other hand, we cannot set $m$ to a very large value, since a large $m$ value means the walk operator is applied too many times and most points in the expansion will be distributed according to the stationary distribution of $P$.

Since the amplitude gain is not possible for completely unstructured problems, we make the following assumptions on the objective function. We note that these assumptions are proved explicitly for the problem families considered in this paper. 

More specifically, in Theorems~\ref{thm:tilted-walk-eklin2} and~\ref{thm:tilted-walk-maxkcsp}, we derive the required bound on $\mathcal L_2(f)$ and then obtain the corresponding bound for the constructed operator $G$. In that case, the log-Sobolev constant $\omega=1/n$ is well known for the hypercube walk. The same properties are also used in the analysis of  previous quantum short path algorithms and also classical conditioning-and-search.

We first use a bound on the one-step quadratic variation of the input objective $f$:
\begin{condition}[One-step quadratic variation]
    \label{cond:quadratic-variation}
Let $\mathcal L_2(f)$ be an upper bound on the one-step quadratic variation of $f$, so that
\begin{equation}
    \max_{x\in\mathcal X}\sum_{y\in\mathcal X}
    P(x,y)\bigl(f(x)-f(y)\bigr)^2
    \leq \mathcal L_2(f)^2.
\end{equation}
We assume that $0<\mathcal L_2(f)=O(1)$.
\end{condition}

{

\begin{definition}[Pseudo-Lipschitzness]
We say that $G$ is pseudo-Lipschitz with constant $L\in[0,1]$ if
\begin{equation}
    (PG)(x)\geq(1-L)G(x)
    \qquad\text{for every }x\in\mathcal X.
\end{equation}
\end{definition}

\begin{lemma}[Regularity after normalization]
    \label{lem:normalization-regularity}
Let $f$ satisfy Condition~\ref{cond:quadratic-variation}, and let $G$ be defined by~\eqref{eq:shifted-objective}. Assume that the optimum of $f$ is extensive:
\begin{equation}
    H_{\max}:=\max_{x\in\mathcal X}f(x)=\Theta(n).
\end{equation}
Define
\begin{equation}
    \mathcal L_2(G):=\frac{\mathcal L_2(f)}{2R_c},
    \qquad
    L:=2\mathcal L_2(G)=\frac{\mathcal L_2(f)}{R_c}.
\end{equation}
Then $\mathcal L_2(G)=O(1/n)$ is an upper bound on the one-step quadratic variation of $G$, and $G$ is pseudo-Lipschitz with constant $L=O(1/n)$.
\end{lemma}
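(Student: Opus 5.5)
The plan is to use that $G$ is an affine function of $f$, read off both regularity statements directly, and then check that the scale $R_c$ is large enough. From~\eqref{eq:shifted-objective} we have $G=\tfrac12\bigl(I+(H_c+A_cI)/R_c\bigr)$, so for all $x,y$
\begin{equation}
    G(x)-G(y)=\frac{f(x)-f(y)}{2R_c}.
\end{equation}

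\textbf{Step 1: quadratic variation.} Substituting this identity into the one-step quadratic variation and using Condition~\ref{cond:quadratic-variation} gives
\begin{equation}
    \max_x\sum_y P(x,y)\bigl(G(x)-G(y)\bigr)^2
    =\frac{1}{4R_c^2}\max_x\sum_yP(x,y)\bigl(f(x)-f(y)\bigr)^2
    \leq\Bigl(\frac{\mathcal L_2(f)}{2R_c}\Bigr)^2
    =\mathcal L_2(G)^2 .
\end{equation}
So $\mathcal L_2(G)$ is a valid upper bound.

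\textbf{Step 2: pseudo-Lipschitzness.} Since $P(x,\cdot)$ is a probability distribution, we can write
\begin{equation}
    (PG)(x)-G(x)=\sum_yP(x,y)\bigl(G(y)-G(x)\bigr).
\end{equation}
By Cauchy--Schwarz (equivalently, Jensen) with respect to $P(x,\cdot)$, the absolute value of this sum is at most $\bigl(\sum_yP(x,y)(G(x)-G(y))^2\bigr)^{1/2}\leq\mathcal L_2(G)$. Therefore $(PG)(x)\geq G(x)-\mathcal L_2(G)$.

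Next we convert this additive loss into a multiplicative one using the floor $G(x)\geq 1/2$, which gives $1\leq 2G(x)$. Hence
\begin{equation}
    \mathcal L_2(G)\leq 2\mathcal L_2(G)\,G(x)=L\,G(x),
\end{equation}
and so $(PG)(x)\geq(1-L)G(x)$ for every $x\in\mathcal X$. This is precisely why the shift into $[1/2,1]$ in~\eqref{eq:shifted-objective} is convenient.

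\textbf{Step 3: showing $R_c=\Omega(n)$.} Both bounds then follow from $\mathcal L_2(f)=O(1)$, once we know $R_c=\Omega(n)$. This also gives $L\leq1$ for large $n$, as the definition requires.

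This is the step I expect to need the most care, because it depends on how the shift and scale are chosen. The normalization $\max_z\widehat G(z)=1$ forces $R_c=H_{\max}+A_c$. The constraint $\widehat G\geq0$ forces $A_c\geq-\min_zf(z)$. Hence $R_c\geq H_{\max}-\min_zf(z)$.

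I would argue that $\min_z f(z)\leq 0$ in the setting of interest. For the centered objectives used later, $f$ has mean zero under the uniform distribution. Then $A_c\geq0$, and so $R_c\geq H_{\max}=\Theta(n)$. Any admissible choice with $R_c=\Theta(n)$ works equally well. More generally, it suffices that the range $H_{\max}-\min f$ is $\Omega(n)$.

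With $R_c=\Omega(n)$ in hand, we obtain $\mathcal L_2(G)=\mathcal L_2(f)/(2R_c)=O(1/n)$ and $L=2\mathcal L_2(G)=O(1/n)$, which completes the proof.
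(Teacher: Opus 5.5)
Your proof is correct and follows the paper's argument. Steps~1 and~2 coincide with it: the affine identity $G(x)-G(y)=(f(x)-f(y))/(2R_c)$, Cauchy--Schwarz against $P(x,\cdot)$, and the floor $G\geq 1/2$, which turns the additive loss $\mathcal L_2(G)$ into the factor $1-L$.

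In Step~3 you are more careful than the paper, which simply asserts that the extensive optimum gives $R_c=H_{\max}+A_c=\Theta(n)$. Only $R_c=\Omega(n)$ is needed, and it does require an extra condition such as your $\min f\leq 0$. For example, take $f(x)=n-1+(x_1\oplus\cdots\oplus x_n)$ on the hypercube walk with $A_c=-(n-1)$ and $R_c=1$: every stated hypothesis holds, yet $L\geq 1/2$ is forced. Your condition holds wherever the paper invokes the lemma, since there $A_c=W/d_{\max}\geq 0$ and $A_c=B/\Lambda_B\geq 0$.
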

\begin{proof}
For the affine normalization in~\eqref{eq:shifted-objective}, the extensive optimum implies
\begin{equation}
    R_c=H_{\max}+A_c=\Theta(n).
\end{equation}
Since
\begin{equation}
    G(x)-G(y)=\frac{f(x)-f(y)}{2R_c},
\end{equation}
the stated bound on $\mathcal L_2(G)$ follows directly from Condition~\ref{cond:quadratic-variation} and $R_c=\Theta(n)$. For every $x\in\mathcal X$, Cauchy--Schwarz gives
\begin{align}
    G(x)-(PG)(x)
    &\leq
    \left(\sum_yP(x,y)(G(x)-G(y))^2\right)^{1/2}\\
    &\leq\mathcal L_2(G).
\end{align}
Since $G(x)\geq1/2$, it follows that
\begin{equation}
    (PG)(x)\geq G(x)-\mathcal L_2(G)
    \geq(1-L)G(x).
\end{equation}
Finally, Condition~\ref{cond:quadratic-variation} and $R_c=\Theta(n)$ give $\mathcal L_2(G)=O(1/n)$ and $L=O(1/n)$.
\end{proof}
\par}
Section~\ref{subsec:resolving-unknowns} discusses the unknown optimum and normalization parameters. We will show that the problems we consider satisfies this condition when $D$ is chosen to be Pauli mixer.
\begin{condition}[Log-Sobolev Inequality]
\label{cond:log-Sobolev}
We assume that $D(P)$ satisfies the following log-Sobolev inequality with constant $\omega>0$: for every unit vector $|\psi\rangle$, if
\begin{equation}
    \mu_\psi(z):=|\psi(z)|^2,
    \qquad
    r_\psi:=\frac{\mu_\psi}{\pi},
\end{equation}
then
\begin{equation}
    \langle \psi|D|\psi\rangle
    \leq
    1-\omega\operatorname{Ent}_\pi(r_\psi),
\end{equation}
where $\pi$ is the stationary distribution of $P$. 
\end{condition}
This assumption is used to control the maximum eigenvalue $E_b$ in the denominator. While this can be seen as an assumption for general walk operators, for the hypercube walk this assumption holds true with $\omega = \frac{1}{n}$. We make the following assumption on the access model.
\begin{assumption}[Access model]
\label{ass:access-model}
The algorithm has coherent access to an exact binary-value oracle for $f$ and to the quantum update oracle of $P$. More precisely, controlled versions of the following unitaries and their adjoints are available:
\begin{equation}
\begin{aligned}
    O_f|x\rangle|0^p\rangle
        &=|x\rangle|f(x)\rangle,\\
    U_P|x\rangle|0\rangle
        &=\sum_{y\in\mathcal X}\sqrt{P(x,y)}\,|x\rangle|y\rangle,
\end{aligned}
\label{eq:primitive-access-oracles}
\end{equation}
where $|f(x)\rangle$ is an exact signed $p$-bit binary representation and $p=\operatorname{poly}(n)$.
\end{assumption}

Throughout the paper, we will assume that we have the ideal block-encoding of $C_{b,M}$ with no error and each step of the algorithm can be done ideally with no error. Furthermore, we assume the quantities $E_b, H_{\text{max}}$ are known to the algorithm.  We do this to ease the presentation but Section~\ref{sec:implementation-details} gives the construction, normalization, and error analysis that results from the actual implementation. In practice, $E_b$ and  $H_{\text{max}}$ are not known in advance. We show in~\ref{sec:implementation-details}, this does not pose a problem for the algorithm.  

The next section uses these assumptions to prove the amplitude bound. The positive path expansion supplies a lower bound before normalization, while the log-Sobolev inequality supplies an upper bound on $E_b$.

\section{Amplitude Gain from Quantum Tilted Walks}
\label{sec:tilted-walk-analysis}
The path expansion in Section~\ref{sec:tilted-walk-algorithm} identifies two quantities that determine the normalized amplitude. The numerator depends on the nonnegative path contributions to $(D+bG)^m$, while the denominator contains the factor $E_b^m$. We bound these quantities separately. First, pseudo-Lipschitzness yields a lower bound on the amplitude before normalization. Next, the log-Sobolev inequality and one-step quadratic variation yield an upper bound on $E_b$. We then combine the two estimates, choose $b$, and average over powers so that each configuration in the successful set contributes at an appropriate walk length.

\subsection{Lower Bound on the Amplitude Before Normalization}
For convenience, we make the following definition:
\begin{definition}[$\eta$-filter]
For $\eta\in(0,1)$, define the filter $\bar{g}_\eta:[0,1]\to[0,1]$ by
\begin{equation}
\bar{g}_\eta(u):=
\begin{cases}
0, & 0\leq u\leq 1-\eta,\\
\dfrac{u-1+\eta}{\eta}, & 1-\eta<u\leq 1,
\end{cases}
\end{equation}
This function is nonnegative, nondecreasing, and convex. Moreover, for $T_\eta=\{z:G(z)\geq 1-\eta\}$, we have
\begin{equation}
    \bar{g}_\eta(G(z))\leq \mathbf{1}_{\{z\in T_\eta\}}.
\end{equation}
\end{definition}

As opposed to the algorithm presented in~\cite{Dalzell_2023,chakrabarti2025generalizedshortpathalgorithms}, we will not use any filter in our algorithm and this will only be used for analysis purposes.

We start with the following theorem:
\begin{theorem}
\label{thm:amplitude-gain}
Assume that $G$ is pseudo-Lipschitz with constant $L\in[0,1]$. Let $T\subseteq\mathcal X$ be nonempty, let $\eta\in(0,1)$ satisfy $T_\eta\subseteq T$, and let $S\subseteq \mathcal X$ be nonempty. Then for every integer $m\geq 0$ and every $b\geq0$,
\begin{equation}
|\langle T| (D + bG)^m |S \rangle| \geq
\sqrt{\frac{\pi(S)}{\pi(T)}}
\sum_{s=0}^m K_{m,s}(L) b^s \mathbb{E}_{x\sim \pi_S}\left[G(x)^s\bar{g}_\eta\left((1-L)^{m-s}G(x)\right)\right].
\end{equation}
where 
\begin{equation}
    K_{m,s}(L):= \begin{cases}
 \prod_{j=1}^s \frac{1-(1-L)^{m-s+j}}{1-(1-L)^j} & \text{if } 0< L\leq 1,\\
\binom{m}{s} & \text{if } L=0.
\end{cases}
\end{equation}
and $\pi_S(x)=\pi(x)/\pi(S)$ for $x\in S$ is the stationary distribution conditioned on $S$.
\end{theorem}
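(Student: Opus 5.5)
The plan is to convert the quantum amplitude into a classical path sum and then prove a pointwise lower bound on that sum by induction on $m$. The induction uses only positivity, Jensen's inequality and pseudo-Lipschitzness.

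\textbf{Step 1: reduction to a positive kernel.} Write $\Pi=\operatorname{diag}(\pi)$ and $K_b=P+bG$, so that $D+bG=\Pi^{1/2}K_b\Pi^{-1/2}$ and $|S\rangle=\Pi^{1/2}\mathbf 1_S/\sqrt{\pi(S)}$. Then
\begin{equation}
\langle T|(D+bG)^m|S\rangle=\frac{1}{\sqrt{\pi(T)\pi(S)}}\sum_{x}\mathbf 1_T(x)\pi(x)(K_b^m\mathbf 1_S)(x).
\end{equation}
Reversibility makes $\pi(x)K_b(x,y)$ symmetric, so we may move $K_b^m$ to the other side. This gives
\begin{equation}
\langle T|(D+bG)^m|S\rangle=\sqrt{\pi(S)/\pi(T)}\;\mathbb E_{y\sim\pi_S}\bigl[(K_b^m\mathbf 1_T)(y)\bigr].
\end{equation}
All quantities are nonnegative, so the absolute value can be dropped. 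Since $T_\eta\subseteq T$, we have $\mathbf 1_T\geq\mathbf 1_{T_\eta}\geq\bar g_\eta\circ G$ pointwise. Because $K_b$ is entrywise nonnegative, it suffices to lower bound $K_b^m(\bar g_\eta\circ G)$ pointwise.

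\textbf{Step 2: the function class and Jensen step.} Set $q=1-L$ and, for integers $s,a\geq0$, let $\phi_{s,a}(u)=u^s\bar g_\eta(q^a u)$ on $[0,1]$. Each $\phi_{s,a}$ is a product of nonnegative, nondecreasing, convex functions on $[0,1]$, hence itself nonnegative, nondecreasing and convex. Since each row of $P$ is a probability distribution, Jensen's inequality gives $P(\phi_{s,a}\circ G)\geq\phi_{s,a}(PG)$. Monotonicity together with pseudo-Lipschitzness $PG\geq qG$ then gives
\begin{equation}
P(\phi_{s,a}\circ G)\geq\phi_{s,a}(qG)=q^s\,\phi_{s,a+1}\circ G.
\end{equation}
For the diagonal factor we have trivially $G\cdot(\phi_{s,a}\circ G)=\phi_{s+1,a}\circ G$.

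\textbf{Step 3: induction.} The claim to prove is the pointwise bound
\begin{equation}
K_b^m(\bar g_\eta\circ G)\geq\sum_{s=0}^m K_{m,s}(L)\,b^s\,\phi_{s,m-s}\circ G .
\end{equation}
The case $m=0$ is an identity. For the inductive step, apply $P+bG$ to the bound for $m$; this is legitimate because the operator is positive. The $P$-part maps the $s$-term into $q^sK_{m,s}\,b^s\,\phi_{s,m+1-s}\circ G$. The $bG$-part maps the $(s-1)$-term into $K_{m,s-1}\,b^s\,\phi_{s,m-s+1}\circ G$. Both land at the correct index $a=(m+1)-s$. It then remains to verify the $q$-Pascal identity $K_{m+1,s}=q^sK_{m,s}+K_{m,s-1}$. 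For $0<L\leq1$ the quantity $K_{m,s}(L)$ is exactly the Gaussian binomial $\binom{m}{s}_q$; when $L=1$ we have $q=0$ and the product is read as its limit. For $L=0$ it reduces to the ordinary Pascal rule. Taking $\mathbb E_{y\sim\pi_S}$ of the pointwise bound yields the theorem, since $\phi_{s,m-s}(G(x))=G(x)^s\bar g_\eta((1-L)^{m-s}G(x))$.

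I expect the main obstacle to be the bookkeeping in Step 3. The $P$-steps and $G$-steps interleave in arbitrary order, and the ordering determines which power of $q$ appears. The plan is to absorb this through the recursion $K_{m+1,s}=q^sK_{m,s}+K_{m,s-1}$ rather than expanding over words. This requires applying the new factor on the \emph{left} of $K_b^m$, so that the $P$-step acts on a function already carrying $G^s$ and therefore picks up $q^s$. One should also double-check that the convexity of $\phi_{s,a}$ is used only on $[0,1]$, where $G$ and $PG$ take their values.
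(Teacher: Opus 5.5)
Your proof is correct. It uses the same analytic ingredients as the paper but organizes the argument differently.

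\textbf{The paper's route.} The paper expands $(D+bG)^m$ over all words in $\{bG,D\}^m$. It bounds each word's nested walk expectation by iterated Jensen and pseudo-Lipschitz steps, obtaining $(1-L)^{I_w}G(x)^s\bar g_\eta((1-L)^{m-s}G(x))$. Here $I_w=\sum_j r_j$ counts, for each $bG$, the $D$-steps that act before it. It then evaluates $\sum_{w}(1-L)^{I_w}$ in a separate combinatorial lemma (Lemma~\ref{lem:inversion-count}). That lemma uses the recurrence $K_{m,s}=q^{m-s}K_{m-1,s-1}+K_{m-1,s}$, obtained by conditioning on the last-acting symbol.

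\textbf{Your route.} You run a single induction on $m$ at the level of functions, prepending one factor of $P+bG$ so the test function $\bar g_\eta\circ G$ stays fixed. This produces the other $q$-Pascal rule, $K_{m+1,s}=q^sK_{m,s}+K_{m,s-1}$. It removes the word-by-word bookkeeping and the inversion-count lemma entirely. What remains is the Gaussian-binomial identity for the product formula, which is checked by the same factor comparison the paper uses: $F_{m,s}=F_{m+1,s}\frac{1-q^{m+1-s}}{1-q^{m+1}}$ and $F_{m,s-1}=F_{m+1,s}\frac{1-q^{s}}{1-q^{m+1}}$. The analytic input is identical in both arguments: Jensen for the rows of $P$ on $[0,1]$, convexity and monotonicity of $u\mapsto u^s\bar g_\eta(cu)$, and $PG\geq qG$.

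\textbf{What each buys.} The paper's version exhibits each path's weight explicitly, which fits its path-sum interpretation of $K_b^m$. Yours is shorter and essentially matches the induction the paper later uses for Theorem~\ref{thm:warm-start-gain}.

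One small point: at $L=1$ no limiting argument is needed. Every factor of the product has exponents at least $1$, so each factor equals $1$ and $K_{m,s}(1)=1$ for $0\leq s\leq m$. This is consistent with your recurrence under the convention $0^0=1$.
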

\begin{proof}
We expand $(D + bG)^m$ as a sum over all words $w\in \{bG,D\}^m$ where $w = w_1 w_2 \ldots w_m$ and $w_i \in \{bG,D\}$. That is, we can write
\begin{equation}
    \langle T |(D + bG)^m |S \rangle = \sum_{w\in \{bG,D\}^m} \langle T| w|S \rangle
\end{equation}
Fix a word $w$ with $s$ occurrences of $bG$. More explicitly, write the word from right to left as
\begin{equation}
    w=D^{p_s}(bG)D^{p_{s-1}}\cdots D^{p_1}(bG)D^{p_0},
    \qquad \sum_{i=0}^s p_i=m-s,
\end{equation}
and set $r_j=p_0+p_1+\cdots+p_{j-1}$ for $1\leq j\leq s$. Reversibility gives
\begin{equation}
    \langle y|D^t|x\rangle
    =\sqrt{\frac{\pi(x)}{\pi(y)}}P^t(x,y).
\end{equation}
Thus the square-root factors telescope when the matrix element is expanded in the computational basis, giving
\begin{align}
    \langle T|w|x\rangle
    &=b^s\sqrt{\frac{\pi(x)}{\pi(T)}}
    \sum_{y_1,\ldots,y_s\in\mathcal X}\sum_{y\in T}
    P^{p_0}(x,y_1)G(y_1)P^{p_1}(y_1,y_2)G(y_2)\cdots \\
    &\hspace{4cm}\cdots P^{p_{s-1}}(y_{s-1},y_s)G(y_s)P^{p_s}(y_s,y) \\
    &=b^s\sqrt{\frac{\pi(x)}{\pi(T)}}
    \mathbb{E}_x\left[\prod_{j=1}^s G(Z_{r_j})\mathbf{1}_{\{Z_{m-s}\in T\}}\right].
\end{align}
Since $T_\eta\subseteq T$, we have $\bar{g}_\eta(G(y))\leq\mathbf{1}_{\{y\in T_\eta\}}\leq\mathbf{1}_{\{y\in T\}}$ for every $y$. We can therefore lower bound the indicator by the filtered $G$ operator:
\begin{equation}
    \langle T|w|x\rangle
    \geq b^s\sqrt{\frac{\pi(x)}{\pi(T)}}
    \mathbb{E}_x\left[\prod_{j=1}^s G(Z_{r_j})\bar{g}_\eta(G(Z_{m-s}))\right].
\end{equation}
Here the expectation is over the random walk $Z_t$. We now apply the bound $(PG)(z)\geq(1-L)G(z)$ iteratively. Let $q=1-L$. Since $P^tG\geq q^tG$ pointwise, Jensen's inequality and the monotonicity and convexity of $\bar{g}_\eta$ give
\begin{equation}
    P^t(\bar{g}_\eta(G))(x)\geq \bar{g}_\eta(P^tG(x))\geq \bar{g}_\eta(q^tG(x)).
\end{equation}
More generally, for every integer $a\geq0$ and every $c\in[0,1]$, the function $u\mapsto u^a\bar{g}_\eta(cu)$ is nonnegative, nondecreasing, and convex on $[0,1]$, so
\begin{equation}
    P^t\left(G^a\bar{g}_\eta(cG)\right)(x)
    \geq (q^tG(x))^a\bar{g}_\eta(cq^tG(x)).
\end{equation}
Applying this bound successively to the nested expectation gives
\begin{equation}
    \mathbb{E}_x\left[\prod_{j=1}^s G(Z_{r_j})\bar{g}_\eta(G(Z_{m-s}))\right]
    \geq q^{I_w}G(x)^s\bar{g}_\eta(q^{m-s}G(x)),
\end{equation}
where $I_w=\sum_{j=1}^sr_j$. Hence,
\begin{align}
    |\langle T| (D + bG)^m |x \rangle|
    &\geq \sqrt{\frac{\pi(x)}{\pi(T)}}
    \sum_{s=0}^m \sum_{w: |w|_G = s} b^s (1-L)^{I_w} G(x)^s\bar{g}_\eta((1-L)^{m-s}G(x)).
\end{align}
Averaging over $x\sim \pi_S$ gives 
\begin{align}
    |\langle T| (D + bG)^m |S \rangle|
    &\geq \sqrt{\frac{\pi(S)}{\pi(T)}}
    \sum_{s=0}^m \sum_{w: |w|_G = s} b^s (1-L)^{I_w} \mathbb{E}_{x\sim \pi_S}\left[G(x)^s\bar{g}_\eta((1-L)^{m-s}G(x))\right]
\end{align}
Combining with~\cref{lem:inversion-count} proves the theorem.
\end{proof}
The following lemma bounds the number $K_{m,s}(L)$ in a word $w$ with $s$ occurrences of $bG$ used in the theorem above.

\begin{lemma}
    \label{lem:inversion-count}
    For a word $w$, read the symbols from right to left, i.e., in the order in which they act on $|x\rangle$. Let $r_j$ be the number of $D$ steps that have occurred before the $j$-th occurrence of $bG$ in this order. Then
    \begin{equation}
    \sum_{w:|w|_G=s}(1-L)^{I_w}= \sum_{w:|w|_G=s}\prod_{j=1}^s (1-L)^{r_j} = K_{m,s}(L)
    \end{equation}
    where $K_{m,s}(L)$ is defined as
\begin{equation}
    K_{m,s}(L):= \begin{cases}
 \prod_{j=1}^s \frac{1-(1-L)^{m-s+j}}{1-(1-L)^j} & \text{if } 0< L\leq 1,\\
\binom{m}{s} & \text{if } L=0.
\end{cases}
\end{equation}
\end{lemma}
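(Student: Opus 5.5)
The plan is to recognize the sum as the generating function of a multiset (Gaussian binomial) count. Write $q=1-L$. A word $w\in\{bG,D\}^m$ with $s$ occurrences of $bG$, read right to left, is determined by the nondecreasing sequence $0\le r_1\le r_2\le\cdots\le r_s\le m-s$. Here $r_j$ is the number of $D$'s preceding the $j$-th $bG$. This correspondence is a bijection between such words and nondecreasing $s$-tuples with entries in $\{0,\dots,m-s\}$. Equivalently, it is a bijection with partitions fitting in an $s\times(m-s)$ box, where $I_w=\sum_j r_j$ is the size of the partition. Hence
\begin{equation}
\sum_{w:|w|_G=s} q^{I_w}=\sum_{0\le r_1\le\cdots\le r_s\le m-s} q^{r_1+\cdots+r_s}=\binom{m}{s}_q ,
\end{equation}
the Gaussian binomial coefficient. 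I would invoke the classical fact that the box-partition generating function equals the $q$-binomial. For self-containment, I would instead prove it by induction on $m$.

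For the induction, let $F(m,s)$ denote the left-hand sum. Split on the first symbol acting on $|x\rangle$, i.e., the rightmost letter. If it is $bG$, then $r_1=0$, and the remaining $m-1$ letters contribute $F(m-1,s-1)$. If it is $D$, then every $r_j$ is increased by one relative to the shorter word, which gives $q^{s}F(m-1,s)$. This yields the recursion
\begin{equation}
F(m,s)=F(m-1,s-1)+q^sF(m-1,s),
\end{equation}
with boundary values $F(m,0)=1$ and $F(m,m)=1$. This is exactly the $q$-Pascal recurrence.

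It then remains to check that $K_{m,s}(L)$ satisfies the same recursion and boundary values. For $0<L\le 1$, so $0\le q<1$, the closed form is
\begin{equation}
K_{m,s}=\prod_{j=1}^s\frac{1-q^{m-s+j}}{1-q^j}=\frac{(1-q^m)\cdots(1-q^{m-s+1})}{(1-q)\cdots(1-q^s)},
\end{equation}
which is the standard product formula for $\binom{m}{s}_q$. Verifying the recurrence is a routine algebraic identity. After factoring out the common product, it reduces to
\begin{equation}
(1-q^s)+q^s(1-q^{m-s})=1-q^m .
\end{equation}
For $L=0$ we have $q=1$, and every word contributes $1$. So the sum counts the words, which is $\binom{m}{s}$; this is also the $q\to1$ limit of the product formula.

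The only delicate point is bookkeeping. One has to make sure the right-to-left reading convention matches the definition of $r_j$ used in the proof of Theorem~\ref{thm:amplitude-gain}, so that adding a leading $D$ shifts all $r_j$ by one. One must also treat the degenerate case $q=0$ (i.e., $L=1$) with the convention $0^0=1$. In that case only the word with all $bG$'s first contributes, giving $1$, which agrees with the product formula. I expect no real obstacle beyond this.
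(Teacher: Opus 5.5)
Your proof is correct and follows essentially the paper's route: you peel one letter off the word to get a $q$-Pascal recurrence, then check by induction that the product formula satisfies it with boundary values $K_{m,0}=K_{m,m}=1$. The only differences are that you peel the first-acting (rightmost) letter, giving $F(m,s)=F(m-1,s-1)+q^sF(m-1,s)$, whereas the paper peels the last-acting letter and gets the mirror recurrence $K_{m,s}=q^{m-s}K_{m-1,s-1}+K_{m-1,s}$; your identification of the sum as a Gaussian binomial (partitions in an $s\times(m-s)$ box) is an extra the paper does not state.
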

\begin{proof}
    If the final symbol in this right-to-left order is $bG$, then the preceding $m-1$ symbols contain $s-1$ copies of $bG$ and $m-s$ copies of $D$. The contribution in this case is
    \begin{equation}
    (1-L)^{m-s} K_{m-1,s-1}(L)
    \end{equation}
    since this final $bG$ contributes a factor of $(1-L)^{m-s}$ and the preceding $m-1$ symbols contribute $K_{m-1,s-1}$. If the final symbol in this right-to-left order is $D$, the contribution is simply $K_{m-1,s}$. Hence, we have the recursion
    \begin{equation}
    K_{m,s}(L) = (1-L)^{m-s} K_{m-1,s-1}(L) + K_{m-1,s}(L)
    \end{equation}
    where $K_{m,0}(L) = 1$ and $K_{m,m}(L) = 1$. Instead of solving the recurrence directly, we use induction. Set $q=1-L$ and define
    \begin{equation}
        F_{m,s}:=\prod_{j=1}^s\frac{1-q^{m-s+j}}{1-q^j}.
    \end{equation}
    The boundary conditions hold since $F_{m,0}=1$ by the empty product convention and
    \begin{equation}
        F_{m,m}=\prod_{j=1}^m\frac{1-q^j}{1-q^j}=1.
    \end{equation}
    Assume inductively that $K_{m-1,t}=F_{m-1,t}$ for all relevant $t$. For $1\leq s\leq m-1$, comparing factors gives
    \begin{equation}
        F_{m-1,s-1}=F_{m,s}\frac{1-q^s}{1-q^m},
        \qquad
        F_{m-1,s}=F_{m,s}\frac{1-q^{m-s}}{1-q^m}.
    \end{equation}
    Therefore the recurrence gives
    \begin{align}
        K_{m,s}
        &=q^{m-s}K_{m-1,s-1}+K_{m-1,s} \\
        &=q^{m-s}F_{m-1,s-1}+F_{m-1,s} \\
        &=F_{m,s}\left(q^{m-s}\frac{1-q^s}{1-q^m}+\frac{1-q^{m-s}}{1-q^m}\right) \\
        &=F_{m,s}\frac{q^{m-s}-q^m+1-q^{m-s}}{1-q^m}
        =F_{m,s}.
    \end{align}
    This proves the claimed closed form for $L\neq 0$. When $L=0$, the same induction reduces to Pascal's recurrence with $K_{m,0}=K_{m,m}=1$, and hence gives $K_{m,s}=\binom{m}{s}$.
\end{proof}

\subsection{Upper Bound on the Maximum Eigenvalue}
The preceding subsection lower-bounds the numerator of the normalized matrix element. We now upper-bound $E_b$, which controls the normalization cost accumulated by each power of $D_b$.

\begin{lemma}[Concentration from the log-Sobolev inequality]
\label{lem:lsi-concentration}
Let $P$ be a reversible Markov chain on $\mathcal X$, let $D(P)$ be its discriminant matrix, and let $\pi$ be its stationary distribution. Let $\mathcal L_2(G)$ be any upper bound on the one-step quadratic variation of $G$. Assume that $D$ satisfies the log-Sobolev inequality with constant $\omega$. Then, for every $\lambda\geq0$,
\begin{equation}
    \log\mathbb{E}_\pi\left[e^{\lambda(G-\mathbb{E}_\pi[G])}\right]
    \leq \frac{\lambda^2\mathcal{L}_2(G)^2}{8\omega}
    \leq \frac{\lambda^2\mathcal{L}_2(G)^2}{2\omega},
\end{equation}
\end{lemma}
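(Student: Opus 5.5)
We will prove this with the Herbst argument. The log-Sobolev inequality in Condition~\ref{cond:log-Sobolev} will be translated into a differential inequality for the log moment generating function of $G$ under $\pi$.

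Fix $\lambda\geq0$ and set $\bar G:=G-\mathbb E_\pi[G]$ and $Z(\lambda):=\mathbb E_\pi[e^{\lambda\bar G}]$. Apply Condition~\ref{cond:log-Sobolev} to the unit vector
\begin{equation}
\psi(z):=\sqrt{\pi(z)}\,e^{\lambda\bar G(z)/2}/\sqrt{Z(\lambda)}.
\end{equation}
Then $r_\psi=e^{\lambda\bar G}/Z$, and a direct computation gives
\begin{equation}
\operatorname{Ent}_\pi(r_\psi)=\frac{\lambda Z'(\lambda)}{Z(\lambda)}-\log Z(\lambda).
\end{equation}
By reversibility, the left side becomes
\begin{equation}
1-\langle\psi|D|\psi\rangle=\frac{1}{2Z}\sum_{x,y}\pi(x)P(x,y)\bigl(e^{\lambda\bar G(x)/2}-e^{\lambda\bar G(y)/2}\bigr)^2,
\end{equation}
which is the Dirichlet form $\mathcal E(e^{\lambda \bar G/2},e^{\lambda\bar G/2})/Z$ of $P$.

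The main step is to bound this Dirichlet form. Use the elementary inequality $(e^{a/2}-e^{c/2})^2\leq \tfrac{(a-c)^2}{8}(e^a+e^c)$, which follows from the mean value theorem together with the convexity estimate $\bigl(\tfrac{e^{a/2}+e^{c/2}}{2}\bigr)^2\leq\tfrac{e^a+e^c}{2}$. Applying it and symmetrizing with reversibility gives
\begin{equation}
\mathcal E\leq\frac{\lambda^2}{8}\sum_x\pi(x)e^{\lambda\bar G(x)}\sum_yP(x,y)(G(x)-G(y))^2\leq\frac{\lambda^2\mathcal L_2(G)^2}{8}\,Z.
\end{equation}
The constant needs care here, since the prefactor $\tfrac12$ and the symmetrization both enter. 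If the bookkeeping yields a constant of $1/4$ instead of $1/8$, the weaker displayed bound with $1/(2\omega)$ still follows, so that bound serves as a safe fallback. This is the step I expect to require the most care.

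Combining the two estimates gives
\begin{equation}
\omega\Bigl(\frac{\lambda Z'}{Z}-\log Z\Bigr)\leq\frac{\lambda^2\mathcal L_2(G)^2}{8}.
\end{equation}
Set $H(\lambda):=\lambda^{-1}\log Z(\lambda)$. The left side equals $\omega\lambda^2H'(\lambda)$, so $H'(\lambda)\leq\mathcal L_2(G)^2/(8\omega)$. Since $\bar G$ is centered, $H(\lambda)\to\mathbb E_\pi[\bar G]=0$ as $\lambda\to0^+$. Integrating from $0$ to $\lambda$ gives $H(\lambda)\leq\lambda\mathcal L_2(G)^2/(8\omega)$, which is the first bound $\log Z\leq\lambda^2\mathcal L_2(G)^2/(8\omega)$. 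The second inequality is trivial because $1/8\leq1/2$. The case $\lambda=0$ holds immediately.
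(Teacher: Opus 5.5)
Your proposal is correct and follows the paper's proof essentially step for step. Both arguments apply the log-Sobolev inequality to the vector with amplitudes $\propto\sqrt{\pi}\,e^{\lambda G/2}$, bound the Dirichlet form via $(e^{u/2}-e^{v/2})^2\le\frac{(u-v)^2}{8}(e^u+e^v)$ and reversibility by $\frac{\lambda^2\mathcal L_2(G)^2}{8}\mathbb E_\pi[e^{\lambda G}]$, and integrate $\frac{d}{d\lambda}(\lambda^{-1}\log Z)\le\mathcal L_2(G)^2/(8\omega)$; centering $G$ and normalizing the vector explicitly instead of using the paper's homogeneity step are cosmetic differences. The constant really is $1/8$, so the $1/4$ fallback is unnecessary, but the mean value theorem alone only gives $e^{\xi}\le\max(e^u,e^v)$, and the sharper $e^{\xi/2}\le\frac12(e^{u/2}+e^{v/2})$ needs the trapezoid (Hermite--Hadamard) bound for the convex function $t\mapsto e^{t/2}$.
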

\begin{proof}
For a nonnegative function $h$, apply the assumed inequality to the normalized vector with amplitudes proportional to $\sqrt{\pi(z)}h(z)$. By homogeneity,
\begin{equation}
    \operatorname{Ent}_\pi(h^2)
    \leq \frac{1}{\omega}\mathcal{D}_P(h,h),
    \qquad
        \mathcal{D}_P(h,h)
        :=\langle h,(I-P)h\rangle_\pi
        =\frac12\sum_{x,y\in\mathcal X}\pi(x)P(x,y)
            \left(h(x)-h(y)\right)^2.
\end{equation}
Set $h(z)=e^{\lambda G(z)/2}$. For all real $u,v$,
\begin{equation}
    \left(e^{u/2}-e^{v/2}\right)^2
    \leq \frac{(u-v)^2}{8}\left(e^u+e^v\right),
\end{equation}
which follows by writing the difference as an integral and applying Cauchy--Schwarz and convexity. Therefore, using reversibility,
\begin{align}
    \mathcal{D}_P(h,h)
        &\leq \frac{\lambda^2}{16}
            \sum_{x,y\in\mathcal X}\pi(x)P(x,y)
            \left(G(x)-G(y)\right)^2
            \left(e^{\lambda G(x)}+e^{\lambda G(y)}\right)\\
        &=\frac{\lambda^2}{8}\mathbb{E}_{x\sim\pi}\left[
            e^{\lambda G(x)}\sum_{y\in\mathcal X}P(x,y)
            \left(G(x)-G(y)\right)^2
            \right]\\
    &\leq \frac{\lambda^2\mathcal{L}_2(G)^2}{8}
      \mathbb{E}_\pi\left[e^{\lambda G}\right].
\end{align}
Let $Z(\lambda)=\mathbb{E}_\pi[e^{\lambda G}]$ and $F(\lambda)=\log Z(\lambda)$. Since
\begin{equation}
    \operatorname{Ent}_\pi(e^{\lambda G})
    =Z(\lambda)\left(\lambda F'(\lambda)-F(\lambda)\right),
\end{equation}
the preceding estimates imply
\begin{equation}
    \frac{d}{d\lambda}\left(\frac{F(\lambda)}{\lambda}\right)
    =\frac{\lambda F'(\lambda)-F(\lambda)}{\lambda^2}
    \leq\frac{\mathcal{L}_2(G)^2}{8\omega}.
\end{equation}
Integrating from $0$ to $\lambda$ and using $\lim_{\lambda\to0}F(\lambda)/\lambda=\mathbb{E}_\pi[G]$ proves the claim.
\end{proof}

\begin{theorem}[Maximum Eigenvalue Bound]
\label{thm:energy-bound}
Let $H_b=D+bG$, and let
\begin{equation}
    E_b:=\lambda_{\max}(H_b).
\end{equation}
Let $\mathcal L_2(G)$ be any upper bound on the one-step quadratic variation of $G$. Assume that $D$ satisfies the log-Sobolev inequality with constant $\omega>0$. Then, for any $b\geq 0$,
\begin{equation}
    E_b
    \leq
    1+b\mathbb{E}_\pi[G]+\frac{b^2\mathcal{L}_2(G)^2}{8\omega^2}
\end{equation}
\end{theorem}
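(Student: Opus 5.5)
Since $H_b=D+bG$ is real symmetric, $E_b=\max_{\|\psi\|=1}\langle\psi|H_b|\psi\rangle$. By Perron--Frobenius (entrywise nonnegativity of $H_b$) we may restrict to unit vectors with nonnegative amplitudes. For such $\psi$ we set $\mu_\psi=\psi^2$ and $r_\psi=\mu_\psi/\pi$, and we aim for a Gibbs-variational bound on $\langle\psi|D|\psi\rangle+b\,\mathbb E_{\mu_\psi}[G]$. Condition~\ref{cond:log-Sobolev} gives $\langle\psi|D|\psi\rangle\leq 1-\omega\operatorname{Ent}_\pi(r_\psi)$, and $\operatorname{Ent}_\pi(r_\psi)=\mathrm{KL}(\mu_\psi\|\pi)$. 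Hence
\begin{equation}
    \langle\psi|H_b|\psi\rangle\leq 1+b\,\mathbb E_{\mu_\psi}[G]-\omega\,\mathrm{KL}(\mu_\psi\|\pi).
\end{equation}

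Next we apply the Donsker--Varadhan duality $\mathbb E_\mu[F]\leq \mathrm{KL}(\mu\|\pi)+\log\mathbb E_\pi[e^{F}]$ with $F=\lambda(G-\mathbb E_\pi[G])$ and $\lambda=b/\omega$. Multiplying by $\omega$ gives
\begin{equation}
    b\,\mathbb E_{\mu_\psi}[G]-\omega\,\mathrm{KL}(\mu_\psi\|\pi)\leq b\,\mathbb E_\pi[G]+\omega\log\mathbb E_\pi\!\left[e^{(b/\omega)(G-\mathbb E_\pi G)}\right],
\end{equation}
so the entropy term cancels exactly. We bound the remaining log-moment generating function by Lemma~\ref{lem:lsi-concentration} with $\lambda=b/\omega$:
\begin{equation}
    \omega\cdot\frac{(b/\omega)^2\mathcal L_2(G)^2}{8\omega}=\frac{b^2\mathcal L_2(G)^2}{8\omega^2}.
\end{equation}
Combining the two displays and maximizing over $\psi$ gives the claimed bound on $E_b$.

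The main point to check is that the form of Condition~\ref{cond:log-Sobolev} matches what is used. Its entropy is $\operatorname{Ent}_\pi(r_\psi)=\sum\pi\, r\log r$, which is exactly the KL divergence, so the Donsker--Varadhan step applies directly. The one care needed is supports: if $\mu_\psi$ charges points where $\pi=0$, the entropy is infinite and the bound is trivial. Otherwise the argument is routine. The reduction to nonnegative $\psi$ is in fact unnecessary, because $\mu_\psi$ depends only on $|\psi|^2$ and the log-Sobolev condition is stated for every unit vector.
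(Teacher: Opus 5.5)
Your proposal is correct and follows the paper's proof essentially step for step. Both arguments bound the Rayleigh quotient via the log-Sobolev inequality, cancel the entropy term using the Gibbs/Donsker--Varadhan inequality with tilt $b/\omega$ (the paper calls it the entropy or Young's inequality), and finish with Lemma~\ref{lem:lsi-concentration} at $\lambda=b/\omega$. The only differences are cosmetic: you bound every unit vector and then maximize, whereas the paper evaluates at the Perron eigenvector, and you center $F$ before applying the duality.
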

\begin{proof}
Set $\mu = \mathbb{E}_\pi[G]$. Let $|\psi_b\rangle$ be a unit eigenvector of $H_b$ with eigenvalue $E_b$. Since $H_b$ is entrywise nonnegative, $|\psi_b\rangle$ may be chosen to be the Perron--Frobenius eigenvector. Set
\begin{equation}
    \mu_b(z):=|\psi_b(z)|^2,
    \qquad
    r_b:=\frac{\mu_b}{\pi}.
\end{equation}
Then $r_b\geq0$ and $\mathbb{E}_\pi r_b=1$. By the Rayleigh principle and the log-Sobolev inequality,
\begin{align}
    E_b
    &=\langle \psi_b|D+bG|\psi_b\rangle \\
    &\leq
    1-\omega\operatorname{Ent}_\pi(r_b)
    +b\mathbb{E}_\pi[r_bG].
\end{align}
We use the entropy (Young's) inequality
\begin{equation}
    \mathbb{E}_\pi[rg]
    \leq
    \operatorname{Ent}_\pi(r)+\log\mathbb{E}_\pi[e^g]
\end{equation}
valid for $r\geq0$ with $\mathbb{E}_\pi r=1$. Applying it with $g=bG/\omega$ gives
\begin{equation}
    b\mathbb{E}_\pi[r_bG]
    =\omega\mathbb{E}_\pi\left[r_b\frac{bG}{\omega}\right]
    \leq
    \omega\operatorname{Ent}_\pi(r_b)
    +\omega\log\mathbb{E}_\pi\left[\exp\left(\frac{bG}{\omega}\right)\right].
\end{equation}
The entropy terms cancel, and therefore
\begin{equation}
    E_b
    \leq
    1+\omega\log\mathbb{E}_\pi\left[\exp\left(\frac{bG}{\omega}\right)\right].
\end{equation}
By~\cref{lem:lsi-concentration}, for every $\lambda\geq0$,
\begin{equation}
    \log\mathbb{E}_\pi\left[\exp\left(\lambda(G-\mu)\right)\right]
    \leq \frac{\lambda^2 \mathcal{L}_2(G)^2}{8\omega}.
\end{equation}
Equivalently,
\begin{equation}
    \mathbb{E}_\pi\left[\exp(\lambda G)\right]
    \leq
    \exp\left(\lambda\mu+\frac{\lambda^2 \mathcal{L}_2(G)^2}{8\omega}\right).
\end{equation}
Taking $\lambda=b/\omega$ yields
\begin{equation}
    \omega\log\mathbb{E}_\pi\left[\exp\left(\frac{bG}{\omega}\right)\right]
    \leq
    b\mu+\frac{b^2\mathcal{L}_2(G)^2}{8\omega^2}.
\end{equation}
Substituting this into the variational bound proves
\begin{equation}
    E_b
    \leq
    1+b\mu+\frac{b^2\mathcal{L}_2(G)^2}{8\omega^2}.
\end{equation}
\end{proof}

\subsection{Combining the Amplitude and Normalization Bounds}
We now combine the lower bound on the unnormalized amplitude with the upper bound on $E_b$. The choice of $b$ balances the increase produced by the objective term against the normalization cost.

Now we can use this result to obtain the final bound for the amplitude gain.
\begin{theorem}
    \label{thm:normalized-amplitude-gain}
        Assume that $G$ is pseudo-Lipschitz with constant $L\in(0,1]$, let $\mathcal L_2(G)>0$ be an upper bound on the one-step quadratic variation of $G$, and assume that Condition~\ref{cond:log-Sobolev} is satisfied. Let $T\subseteq\mathcal X$ be nonempty, and let $\eta,\rho \in (0,1) = \Theta(1)$ satisfy $\eta-\rho-L>0$, $1-\rho-\mathbb{E}_\pi[G]-L>0$, and $T_\eta\subseteq T$. Suppose there exists a nonempty subset $S$ such that $G(x)\geq 1-\rho$ for all $x\in S$. Choose
    \begin{equation}
            M:=\max_{x\in S} \left\lceil \frac{\log(G(x)/(1-\eta))}{L}\right\rceil.
    \end{equation}
    and
    \begin{equation}
        b=b^*:=\min\left\{1,
        \frac{
        \sqrt{1+\frac{8\omega^2(1-\rho-L)(1-\rho-\mathbb{E}_\pi[G] -L)}
        {\mathcal{L}_2(G)^2}}-1}
        {1-\rho-L}
        \right\},
    \end{equation}
    Then,
    \begin{align}
    |\left\langle S\left| C_{b,M} \right|T \right\rangle|
    &\geq
    \left(\frac{\eta-\rho-L}{\eta}\right)\frac{1}{M+1}
    \sqrt{\frac{\pi(S)}{\pi(T)}}
    2^{\kappa n/2}.
    \end{align}
    where 
    \begin{equation}
        \kappa:=\frac{2b^*(1-\rho-\mathbb{E}_\pi[G]-L)}{3nL\log 2}
        \mathbb{E}_{x\sim\pi_S}\left[\log\left(\frac{G(x)}{(1-\rho)}\right)\right].
    \end{equation}
\end{theorem}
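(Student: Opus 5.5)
The plan is to bound the matrix element pointwise in $x\in S$, keep for each $x$ only one well-chosen power from the average, and then aggregate over $S$. This is legitimate because every term involved is nonnegative.

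\emph{Reduction to single points.} Since $D_b$ is real symmetric, $\langle S|C_{b,M}|T\rangle=\langle T|C_{b,M}|S\rangle$. Writing $|S\rangle=\sum_{x\in S}\sqrt{\pi(x)/\pi(S)}\,|x\rangle$ and using that all matrix elements of $(D+bG)^m$ are nonnegative, I get
\begin{equation}
\langle T|C_{b,M}|S\rangle\geq\frac{1}{M+1}\sum_{x\in S}\sqrt{\frac{\pi(x)}{\pi(S)}}\,E_b^{-m(x)}\langle T|(D+bG)^{m(x)}|x\rangle
\end{equation}
for any choice $m(x)\in\{0,\dots,M\}$. I will take $m(x)=\lfloor \log(G(x)/(1-\rho))/L\rfloor$ (possibly shifted by one). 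Since $1-\eta<1-\rho$, this satisfies $m(x)\leq M$ for the $M$ in the statement.

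\emph{Unnormalized pointwise bound.} The proof of Theorem~\ref{thm:amplitude-gain} gives, before averaging over $\pi_S$,
\begin{equation}
\langle T|(D+bG)^m|x\rangle\geq\sqrt{\frac{\pi(x)}{\pi(T)}}\sum_{s=0}^m K_{m,s}(L)\,b^sG(x)^s\,\bar g_\eta\bigl((1-L)^{m-s}G(x)\bigr).
\end{equation}
The filter is nondecreasing, so each filter term is at least $\bar g_\eta((1-L)^mG(x))$. With $m=m(x)$ we have $(1-L)^{m}G(x)\geq (1-\rho)(1-L)\geq 1-\rho-L$. This uses $(1-L)^{1/L}\approx e^{-1}$ carefully; otherwise I take $m(x)$ slightly smaller. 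Hence the filter term is at least $(1-\rho-L-1+\eta)/\eta=(\eta-\rho-L)/\eta$.

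For the remaining sum I use the $q$-binomial identity $\sum_s K_{m,s}(L)c^s=\prod_{j=0}^{m-1}(1+c\,q^j)$ with $q=1-L$. This identity is what the recursion in Lemma~\ref{lem:inversion-count} encodes. Each factor is at least $1+bG(x)q^{m}\geq 1+b(1-\rho-L)$, so the sum is at least $(1+b(1-\rho-L))^{m(x)}$.

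\emph{Normalization.} By Theorem~\ref{thm:energy-bound}, $E_b\leq 1+b\mu+b^2\mathcal L_2(G)^2/(8\omega^2)$ with $\mu=\mathbb E_\pi[G]$. The key claim is that at $b=b^*$
\begin{equation}
\log\frac{1+b^*(1-\rho-L)}{E_{b^*}}\geq\frac{b^*(1-\rho-\mu-L)}{3}.
\end{equation}
This is the step I expect to be the main computational obstacle. The formula for $b^*$ is the positive root of a quadratic chosen so that the penalty $b^2\mathcal L_2^2/(8\omega^2)$ is at most a fixed fraction of the linear gain $b(1-\rho-\mu-L)$. I would first show that $(1+b(1-\rho-L))/E_b\geq 1+\frac{b(1-\rho-\mu-L)}{2(1+b)}$ or similar. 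I would then use $b\leq 1$ and $\log(1+y)\geq 2y/3$ for $y\in[0,1/2]$ to land on the constant $1/3$, tracking constants case by case depending on which branch of the minimum defines $b^*$.

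\emph{Aggregation.} Granting this, each $x$ contributes at least
\begin{equation}
\frac{\eta-\rho-L}{\eta}\sqrt{\frac{\pi(x)}{\pi(T)}}\exp\Bigl(\frac{b^*(1-\rho-\mu-L)}{3}m(x)\Bigr).
\end{equation}
Summing with the weights $\sqrt{\pi(x)/\pi(S)}$ gives $\sqrt{\pi(S)/\pi(T)}$ times $\mathbb E_{\pi_S}$ of the exponential. Jensen's inequality moves the expectation into the exponent. Substituting $m(x)\approx \log(G(x)/(1-\rho))/L$ gives exactly $2^{\kappa n/2}$, since $\kappa n\log2/2=\frac{b^*(1-\rho-\mu-L)}{3L}\mathbb E_{\pi_S}[\log(G/(1-\rho))]$. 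The floor rounding costs at most a factor $e^{-b^*(\cdots)/3}=\Omega(1)$, which I will either absorb or avoid by adjusting the choice of $m(x)$.
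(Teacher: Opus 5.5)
Your skeleton matches the paper's proof: keep one power $m(x)$ per point by nonnegativity, use the pointwise filter bound from Theorem~\ref{thm:amplitude-gain}, lower-bound $\sum_s K_{m,s}(L)b^sG(x)^s$ by $(1+b(1-\rho-L))^{m(x)}$, invoke Theorem~\ref{thm:energy-bound}, and finish with Jensen.

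\textbf{The gap.} It is the step you flag yourself as the main obstacle. You do not prove $\log\bigl((1+b^*(1-\rho-L))/E_{b^*}\bigr)\geq b^*(1-\rho-\mu-L)/3$, and the route you sketch cannot reach the constant $1/3$ that enters $\kappa$.

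Write $\mu=\mathbb E_\pi[G]$, $\alpha=1-\rho-L$, $\Delta=1-\rho-\mu-L$, $\beta=\mathcal L_2(G)^2/(8\omega^2)$ and $R(b)=(1+\alpha b)/(1+\mu b+\beta b^2)$. The number inside the minimum defining $b^*$ is exactly the positive root $b_0$ of $\beta b(2+\alpha b)=\Delta$, which is the stationarity equation of $R$.

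When $b^*=b_0<1$ this equation holds with equality, and then $R(b^*)-1=b^*\Delta/(2+\mu b^*)<b^*\Delta/2$. So any estimate $\log(1+y)\geq 2y/3$ applied to $y=R-1$ gives strictly less than $b^*\Delta/3$. Your intermediate bound $y\geq b\Delta/(2(1+b))$ gives only $b^*\Delta/6$ at $b^*=1$, i.e.\ exponent $\kappa/2$.

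The missing idea is to use the optimality relation as an inequality and bound $1/R$ rather than $R-1$. Since $b\mapsto\beta b(2+\alpha b)$ is increasing and $b^*\leq b_0$, one has $\beta b^{*2}\leq b^*\Delta/(2+\alpha b^*)$. Substituting $\mu=\alpha-\Delta$ gives
\[
\frac{E_{b^*}}{1+\alpha b^*}\leq\frac{1+\mu b^*+\beta b^{*2}}{1+\alpha b^*}\leq 1-\frac{b^*\Delta}{2+\alpha b^*}.
\]
Then $-\log(1-z)\geq z$ and $\alpha b^*\leq 1$ yield $\log\bigl((1+\alpha b^*)/E_{b^*}\bigr)\geq b^*\Delta/(2+\alpha b^*)\geq b^*\Delta/3$.

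\textbf{Two smaller fixes.}

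First, your $q$-binomial ``identity'' is wrong. With $q=1-L$, $K_{m,s}(L)$ is the Gaussian binomial coefficient and
\[
\prod_{j=0}^{m-1}(1+cq^j)=\sum_s q^{\binom{s}{2}}K_{m,s}(L)c^s.
\]
So you only get $\sum_s K_{m,s}(L)c^s\geq\prod_{j=0}^{m-1}(1+cq^j)$. Luckily that is the direction you need. The paper instead uses $K_{m,s}(L)\geq\binom{m}{s}\bigl((1-q^m)/(mL)\bigr)^s$ and the binomial theorem.

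Second, the statement has an exact prefactor and exponent, so a lost factor $e^{-b^*\Delta/3}$ from the floor cannot be absorbed. You must take $m(x)=\lceil \ell(x)/L\rceil$ with $\ell(x)=\log(G(x)/(1-\rho))$, as the paper does; this is at most $M$. With the ceiling, your intermediate claim $(1-L)^{m(x)}G(x)\geq(1-\rho)(1-L)$ need not hold. The bound you actually need still survives: from $(1-L)^{1/L}\geq e^{-1}(1-L)$ and $m(x)\leq \ell(x)/L+1$ you get
\[
(1-L)^{m(x)}G(x)\geq(1-\rho)(1-L)^{1+\log(1/(1-\rho))}\geq(1-\rho)\bigl(1-L(1+\log\tfrac{1}{1-\rho})\bigr)\geq 1-\rho-L.
\]
The last two steps use Bernoulli's inequality and $u(1+\log(1/u))\leq 1$ on $(0,1]$.
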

\begin{proof}
First, we consider a single point $x\in S$. For any $x\in S$, define
\begin{equation}
    m(x):= \left\lceil \frac{\log(G(x)/(1-\rho))}{L}\right\rceil.
\end{equation}
Since $\eta-\rho-L>0$, every $x\in S$ belongs to $T_\eta\subseteq T$. When $m(x)=0$, $\langle T|D_{b}^0|x\rangle= \sqrt{\pi(x)/\pi(T)}$. For $m(x)\geq 1$, the definition of $m(x)$ gives
\begin{equation}
    G(x)(1-L)^{m(x)}
    >(1-L)G(x)\left(\frac{1-\rho}{G(x)}\right)^{-\log(1-L)/L}
    \geq1-\rho-L.
\end{equation}
    Therefore, for every $0\leq s\leq m(x)$, monotonicity of $\bar{g}_\eta$ implies
    \begin{equation}
    \bar{g}_\eta((1-L)^{m(x)-s}G(x))
    \geq \bar{g}_\eta((1-L)^{m(x)}G(x))
    \geq \frac{\eta-\rho-L}{\eta}
    \end{equation}
    Combining this with~\cref{thm:amplitude-gain} gives
    \begin{equation}
        \langle T|(D+bG)^{m(x)}|x\rangle
        \geq \left(\frac{\eta-\rho-L}{\eta}\right)
        \sqrt{\frac{\pi(x)}{\pi(T)}}
        \sum_{s=0}^{m(x)}K_{m(x),s}(L)b^sG(x)^s.
    \end{equation}
For $0\leq s\leq m(x)$, the product formula in~\cref{lem:inversion-count} gives
\begin{equation}
    K_{m(x),s}(L)
    =\prod_{j=1}^s\frac{1-(1-L)^{m(x)-s+j}}{1-(1-L)^j}
    \geq
    \binom{m(x)}{s}\left(\frac{1-(1-L)^{m(x)}}{m(x)L}\right)^s.
\end{equation}
For every $1\leq k\leq m(x)$ we have
\begin{equation}
    1-(1-L)^k\geq \frac{k}{m(x)}(1-(1-L)^{m(x)}),
\end{equation}
because the averages $k^{-1}\sum_{i=0}^{k-1}(1-L)^i$ are decreasing in $k$. Hence,
\begin{equation}
    \prod_{j=1}^s(1-(1-L)^{m(x)-s+j})
    \geq
    \frac{s!\binom{m(x)}{s}}{{m(x)}^s}(1-(1-L)^{m(x)})^s,
\end{equation}
while $\prod_{j=1}^s(1-(1-L)^j)\leq L^s s!$. Therefore
\begin{align}
    \sum_{s=0}^{m(x)} K_{m(x),s}(L)b^{s}G(x)^{s}
    &\geq \sum_{s=0}^{m(x)}\binom{m(x)}{s}
    \left(\frac{bG(x)(1-(1-L)^{m(x)})}{m(x)L}\right)^s \\
    &= \left(1+\frac{bG(x)(1-(1-L)^{m(x)})}{m(x)L}\right)^{m(x)}\\
    &\geq  \left(1+b(1-\rho-L)\right)^{m(x)}.
\end{align}
We now consider averaging the points in $S$. Since $M$ is larger than the corresponding $m(x)$ for each $x$, we have $\frac{1}{M+1}\sum_{m=0}^M D_b^m$ includes the right $m$ for each $x$ at least once. By~\cref{thm:energy-bound},
\begin{equation}
    E_b\leq1+b\mathbb{E}_\pi[G]+\frac{b^2\mathcal{L}_2(G)^2}{8\omega^2}.
\end{equation}
Combining these bounds gives
\begin{equation}
|\left\langle T\left|C_{b, M}\right|S \right\rangle|
\geq
\left(\frac{\eta-\rho-L}{\eta}\right)\frac{1}{M+1}
\sqrt{\frac{\pi(S)}{\pi(T)}}
\mathbb{E}_{x\sim\pi_S}\left[
\left(
\frac{1+b(1-\rho-L)}
{1+b\mathbb{E}_\pi[G]+\frac{b^2\mathcal{L}_2(G)^2}{8\omega^2}}
\right)^{m(x)}
\right].
\end{equation}
Maximizing the base of the exponent over $0\leq b\leq1$ gives the $b^*$ in the theorem statement. The optimality equation for $b$, including the case where the optimum is attained at $b=1$, implies
\begin{equation}
    \frac{b^*\mathcal{L}_2(G)^2}{8\omega^2}
    \left(2+b^*(1-\rho-L)\right)\leq 1-\rho-\mathbb{E}_\pi[G] -L.
\end{equation}
Consequently,
\begin{equation}
    \log\left(
    \frac{1+b^*(1-\rho-L)}
    {1+b^*\mathbb{E}_\pi[G]+\frac{{b^*}^2\mathcal{L}_2(G)^2}{8\omega^2}}
    \right)
    \geq\frac{b^*( 1-\rho-\mathbb{E}_\pi[G] -L)}{3}.
\end{equation}
Jensen's inequality and the fact that $C_{b,M}$ is a symmetric operator yield
\begin{equation}
|\left\langle S\left|C_{b, M}\right|T \right\rangle|
\geq
\left(\frac{\eta-\rho-L}{\eta}\right)\frac{1}{M+1}
\sqrt{\frac{\pi(S)}{\pi(T)}}
\exp\left(\frac{b^* ( 1-\rho-\mathbb{E}_\pi[G] -L)\mathbb{E}_{\pi_S}[m(x)] }{3}\right).  
\end{equation}
\end{proof}
Theorem~\ref{thm:normalized-amplitude-gain} implies that if $\eta-\rho=\Theta(1)$, $1-\rho-\mathbb{E}_\pi[G]=\Theta(1)$, $L=O(1/n)$, and
\begin{equation}
\mathbb E_{x\sim\pi_S}\left[
\log\left(\frac{G(x)}{1-\rho}\right)
\right]=\Theta(1),
\end{equation}
then the amplitude $|\langle S|T\rangle|$ is amplified by a factor $\exp(\Theta(n))$. This gives an exponential improvement over direct search from the warm state $|T\rangle$ for an element of $S$. In Section~\ref{sec:application-to-optimization}, we specialize to $T=T_\eta$ and initialize the parameters from the problem instance.

\subsection{Discussion}
In the high $M$ limit, the average $\frac{1}{M+1}\sum_{m=0}^M D_b^m$ converges to the projector onto the eigenvalue-$1$ eigenspace of $D_b$, equivalently the projector onto the ground state of the tilted Hamiltonian. However, this does not mean that its matrix element $\langle T |\psi_b\rangle\langle \psi_b|S\rangle$ satisfies the finite-$M$ lower bound. In fact, this is likely false because $|\psi_b\rangle$ is very close to the $|+\rangle$ state, which is why the algorithms in~\cite{Dalzell_2023,hastings2018shortpathquantumalgorithm} include a $2^{n/2}$ dependence in the denominator. We avoid this term because, for polynomial $M$, the power average can retain a substantial contribution from the other eigenstates of $H_b$. The algorithm therefore relies on the walk-like behavior of the operator rather than on projection. This is also why a single term with large $m$, which more closely approximates an efficient projector, does not give the same result.



\section{Applications to Optimization}
\label{sec:application-to-optimization}
\subsection{Overview of the Conditioning-and-Search Algorithm}
Le Gall and Tamaki~\cite{gall2026dequantizingshortpathquantumalgorithms} introduced conditioning-and-search by turning the threshold-set and local-search arguments from the analysis of short-path algorithms into a classical algorithm. We state the algorithm in their minimization convention; replacing $H$ by $-H$ gives the maximization convention used for the tilted-walk algorithm.

Let $\mathcal X=\{0,1\}^n$, let $H:\mathcal X\to\mathbb R$ have optimum $H_{\min}<0$, and fix $\eta\in(0,1)$. The conditioning set is
\begin{equation}
    T_\eta=\{x\in\mathcal{X}:H(x)\leq(1-\eta)H_{\min}\}.
    \label{eq:condition-search-threshold}
\end{equation}
For each $x\in T_\eta$, let $\mathcal N_\eta(x)$ be a searchable neighborhood. A set $S_\eta\subseteq T_\eta$ is successful if there is an optimum $x^\star$ such that $x^\star\in\mathcal N_\eta(x)$ for every $x\in S_\eta$. The algorithm repeatedly samples $x$ uniformly from $T_\eta$ and searches $\mathcal N_\eta(x)$. The set $S_\eta$ is used only to analyze the success probability and need not be found by the algorithm.

If one sample from a distribution $\mu$ on $T$ costs $\tau_{\mathrm{samp}}$, one neighborhood search costs $\tau_{\mathrm{search}}$, and $\mu(S)\geq\beta$, repetition finds an optimum in expected time
\begin{equation}
    O\!\left(\frac{\tau_{\mathrm{samp}}+\tau_{\mathrm{search}}}
    {\beta}\right).
    \label{eq:condition-search-abstract-cost}
\end{equation}
This is the abstract conditioning-and-search principle \cite[Theorem~2.1]{gall2026dequantizingshortpathquantumalgorithms}. For the uniform distribution on $T_\eta$, we have $\beta=|S_\eta|/|T_\eta|$. Rejection sampling from $\{0,1\}^n$ then gives
\begin{equation}
    O^*\!\left(
        \frac{2^n}{|S_\eta|}
        +\frac{|T_\eta|}{|S_\eta|}\tau_{\mathrm{search}}
    \right)
    \label{eq:condition-search-rejection-cost}
\end{equation}
expected time~\cite[Corollaries~2.2 and~2.3]{gall2026dequantizingshortpathquantumalgorithms}. 
\subsection{Quantum Conditioning-and-Search}
We now specialize Theorem~\ref{thm:normalized-amplitude-gain} to the tight choice $T=T_\eta$. The theorem then compares the tilted amplitude on a successful set with the direct amplitude from the threshold state.

We take $D = X/n$ so that $\pi$ is uniform. The set states are therefore uniform superpositions, and the factor $\pi(S)/\pi(T)$ in the general amplitude bound becomes $|S|/|T|$.

Fix a set of coordinates $J\subseteq[n]$. For a center $x$ and radius $r$, define the restricted Hamming ball
\begin{equation}
    B_J(x,r)
    :=\left\{z\in\{0,1\}^n:
        d_H(x,z)\leq r,\ 
        \{i:x_i\neq z_i\}\subseteq J\right\}.
    \label{eq:restricted-hamming-ball}
\end{equation}
The choice $J=[n]$ gives the ordinary radius-$r$ Hamming ball. Let $x^\star$ be a maximizer of $G$. The successful centers are
\begin{equation}
    S_r:=\{x:x^\star\in B_J(x,r)\}=B_J(x^\star,r),
    \label{eq:successful-centers}
\end{equation}
where the equality follows from symmetry of the restricted Hamming distance. Define the local optimum
\begin{equation}
    g_{J,r}(x):=\max_{z\in B_J(x,r)}G(z).
\end{equation}
Every $x\in S_r$ satisfies $g_{J,r}(x)=1$.

\begin{algorithm}[H]
\caption{Quantum Conditioning-and-Search}
\label{alg:coherent-short-path-filter}
\small
\setlength{\abovedisplayskip}{3pt}
\setlength{\belowdisplayskip}{3pt}
\begin{algorithmic}[1]
\Require Binary oracle access to the objective $f$ and access to the transformation $C_{b,M}$; parameters $\eta$, $J$, and $r$; a lower bound $a$ on the marked amplitude; and failure probability $\delta$.
\State Prepare $|T_\eta\rangle$ from the uniform state by amplitude amplification, marking $x$ when $G(x)\geq1-\eta$.
\State Apply $C_{b,M}$ to $|T_\eta\rangle$.
\State Coherently compute $g_{J,r}(x)$ by quantum maximum finding over $B_J(x,r)$ and mark the state when $g_{J,r}(x)=1$.
\State Apply fixed-point amplitude amplification to the complete state preparation using the lower bound $a$ and failure probability $\delta$.
\State Measure a marked center $x$, run quantum maximum finding once more on $B_J(x,r)$, and output a maximizer.
\end{algorithmic}
\end{algorithm}

The following is the main theoretical guarantee for the Quantum Conditioning-and-Search algorithm.

\begin{theorem}[Optimization from a tilted-walk amplitude]
\label{thm:optimization-from-amplitude}
Given binary oracle access to $f$, let $T_\eta\subseteq\{0,1\}^n$ be nonempty, let $x^\star$ maximize $G$. Fix $J\subseteq[n]$ and $r$, and let $S_r$ be the successful-center set in \eqref{eq:successful-centers}. Let $b\in[0,1]$ and $M=\operatorname{poly}(n)$. If, for some $\kappa\geq0$ and polynomial $p$,
\begin{equation}
    \left|\langle S_r|C_{b,M}|T_\eta\rangle\right|
    \geq \frac{1}{p(n)}2^{\kappa n/2}
       \frac{|S_r|^{1/2}}{|T_\eta|^{1/2}},
    \label{eq:optimization-amplitude-hypothesis}
\end{equation}
then Algorithm~\ref{alg:coherent-short-path-filter} finds a maximizer of $G$ with bounded error using
\begin{equation}
    \operatorname{poly}(n)2^{-\kappa n/2}
    \frac{|T_\eta|^{1/2}}{|S_r|^{1/2}}
    \left(
        \frac{2^{n/2}}{|T_\eta|^{1/2}}+|S_r|^{1/2}
    \right)
    \label{eq:optimization-runtime}
\end{equation}
oracle calls.
\end{theorem}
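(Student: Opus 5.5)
The plan is to view Algorithm~\ref{alg:coherent-short-path-filter} as one state-preparation unitary $\mathcal A$ followed by fixed-point amplitude amplification. The runtime is then the cost of $\mathcal A$ multiplied by the inverse of the marked amplitude. The cost of $\mathcal A$ has three pieces, which we bound in turn.

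\textbf{Step 1: preparing $|T_\eta\rangle$.} Starting from $|+\rangle^{\otimes n}$, the amplitude on $T_\eta$ is $\sqrt{|T_\eta|/2^n}$. We mark $x$ by evaluating $G(x)\geq 1-\eta$ with $O(1)$ calls to $O_f$. Exact amplitude amplification (with the known value of $|T_\eta|$, or fixed-point amplification otherwise, as discussed in Section~\ref{sec:implementation-details}) then prepares $|T_\eta\rangle$ using $O(2^{n/2}/|T_\eta|^{1/2})$ oracle calls.

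\textbf{Step 2: applying $C_{b,M}$.} By the standing assumption we have an ideal block encoding of $C_{b,M}$. It costs $O(M)=\operatorname{poly}(n)$ calls to $U_P$ and $O_f$, since each power of $D_b$ uses one walk step and one tilt evaluation. Applying it to $|T_\eta\rangle$ gives the state $|0\rangle C_{b,M}|T_\eta\rangle+|\perp\rangle$, where $|\perp\rangle$ is flagged by the block-encoding ancilla.

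\textbf{Step 3: coherent marking.} We compute $g_{J,r}(x)$ coherently by quantum maximum finding over $B_J(x,r)$. This takes $\operatorname{poly}(n)|B_J(x,r)|^{1/2}$ queries, and $|B_J(x,r)|=|B_J(x^\star,r)|=|S_r|$ because the restricted ball size does not depend on the center. The success probability is boosted to $1-2^{-\operatorname{poly}(n)}$ by repetition and a median, and the garbage is uncomputed so that the marking acts as a near-ideal reflection about $\{x:g_{J,r}(x)=1\}=S_r$ (together with the ancilla flag $0$). We use the identity $\{x:g_{J,r}(x)=1\}=S_r$: one inclusion holds by the definition of $S_r$, and the converse holds for any $x$ whose ball contains some maximizer. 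If there are several maximizers, the marked set only grows, which only increases the amplitude. Summing the three pieces, the cost of $\mathcal A$ is $\operatorname{poly}(n)\bigl(2^{n/2}/|T_\eta|^{1/2}+|S_r|^{1/2}\bigr)$.

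\textbf{Marked amplitude and amplification.} The marked amplitude is $\|\Pi_{S_r}C_{b,M}|T_\eta\rangle\|$. Since $|S_r\rangle$ is a unit vector in the range of $\Pi_{S_r}$, this is at least $|\langle S_r|C_{b,M}|T_\eta\rangle|$, and hypothesis~\eqref{eq:optimization-amplitude-hypothesis} bounds that below by $a:=p(n)^{-1}2^{\kappa n/2}|S_r|^{1/2}/|T_\eta|^{1/2}$. Fixed-point amplitude amplification with this lower bound and constant $\delta$ needs $O(a^{-1})$ applications of $\mathcal A$ and $\mathcal A^\dagger$. Multiplying by the cost of $\mathcal A$ gives exactly~\eqref{eq:optimization-runtime}. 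After measuring, we obtain a center $x\in S_r$ with constant probability. A final maximum finding over $B_J(x,r)$ then returns $x^\star$ with bounded error, at the additive cost $\operatorname{poly}(n)|S_r|^{1/2}$, which the bound already absorbs.

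\textbf{Main obstacle.} The delicate part is error accounting inside the amplification loop. Coherent maximum finding is probabilistic, so the marking oracle is only approximately a reflection. Its error is multiplied by the $O(a^{-1})$ number of iterations, which can be exponentially large. The first approach is to drive the per-call error down to $2^{-\Omega(n)}\cdot a$ through $O(n)$-fold repetition with majority voting, which costs only a polynomial overhead. A standard hybrid argument then bounds the total deviation by (number of iterations) $\times$ (per-call error), which is $o(1)$.
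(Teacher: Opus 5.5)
Your proposal is correct and follows the paper's proof. You lower-bound the marked amplitude by $|\langle S_r|C_{b,M}|T_\eta\rangle|$ because $|S_r\rangle$ lies in the range of $\Pi_{S_r}$. You use that the marked set $\{x:g_{J,r}(x)=1\}$ contains $S_r$. Equality there would need a unique maximizer, but only containment is used, and any measured marked center still has a maximizer in its ball. You then multiply the $O(a^{-1})$ fixed-point iterations by the per-iteration cost $\operatorname{poly}(n)\bigl(2^{n/2}/|T_\eta|^{1/2}+|S_r|^{1/2}\bigr)$. Your error accounting matches what the paper defers to Section~\ref{sec:implementation-details}.

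One justification in Step~2 is inaccurate, though it does not affect the final bound. A power of $D_b$ does not cost one walk step plus one tilt evaluation. The direct linear combination of the encodings of $D$ and $G$ encodes $(D+bG)/(1+b)$, not $(D+bG)/E_b$. Since every row of $(P+bG)^m$ sums to at most $(1+b)^m$, that normalization caps the amplitude at $\sqrt{|S_r|/|T_\eta|}$, i.e.\ no gain. Reaching $(D+bG)/E_b$ requires singular-value amplification at $\operatorname{poly}(n)$ cost per power, which is still absorbed into~\eqref{eq:optimization-runtime}.
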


\begin{proof}
Let $\Pi_{S_r}$ project onto the computational-basis states in $S_r$. Because $|S_r\rangle$ is a unit vector in its range, the hypothesis gives
\begin{align}
    \|\Pi_{S_r}C_{b,M}|T_\eta\rangle\|
    &\geq
      |\langle S_r|C_{b,M}|T_\eta\rangle|\\
    &\geq \frac{1}{p(n)}2^{\kappa n/2}
       \frac{|S_r|^{1/2}}{|T_\eta|^{1/2}}.
    \label{eq:optimization-good-amplitude}
\end{align}
By definition of $S_r$, every $x\in S_r$ satisfies $g_{J,r}(x)=1$. The marking step in Algorithm~\ref{alg:coherent-short-path-filter} therefore marks a subspace containing the component supported on $S_r$. By \eqref{eq:optimization-good-amplitude}, fixed-point amplitude amplification requires at most
\begin{equation}
    \operatorname{poly}(n)2^{-\kappa n/2}
    \frac{|T_\eta|^{1/2}}{|S_r|^{1/2}}
\end{equation}
applications of the complete preparation unitary and its inverse. On the other hand, preparing $|T_\eta\rangle$ from the uniform state costs
\begin{equation}
    \operatorname{poly}(n)\frac{2^{n/2}}{|T_\eta|^{1/2}}
\end{equation}
queries to the binary oracle for $f$. Every restricted ball $B_J(x,r)$ has cardinality $|S_r|$, so quantum maximum finding costs $\operatorname{poly}(n)|S_r|^{1/2}$ queries to the binary oracle for $f$. Applying $C_{b,M}$ has polynomial cost because $M=\operatorname{poly}(n)$. Multiplying these costs by the number of amplification iterations gives~\eqref{eq:optimization-runtime}. One final maximum search over $B_J(x,r)$ returns a maximizer and is absorbed into the same bound.
\end{proof}

\subsection{Application to \texorpdfstring{\textsc{MAX-E$k$-LIN2}}{MAX-Ek-LIN2}}
\label{subsec:eklin2-application}

A weighted \textsc{MAX-E$k$-LIN2} instance consists of $N$ parity equations
\begin{equation}
    \bigoplus_{i\in A_j}x_i=b_j,
    \qquad j\in[N],
\end{equation}
where $x\in\{0,1\}^n$, each $A_j\subseteq[n]$ has cardinality $k$, each $b_j\in\{0,1\}$, and equation $j$ has weight $w_j>0$. Here $\oplus$ denotes addition modulo two. The problem is to find an assignment maximizing the total weight of the satisfied equations,
\begin{equation}
    F(x)=\sum_{j=1}^N w_j
        \mathbf{1}_{\{\bigoplus_{i\in A_j}x_i=b_j\}}.
\end{equation}
The identity
\begin{equation}
    \mathbf{1}_{\{\bigoplus_{i\in A_j}x_i=b_j\}}
    =\frac{1+(-1)^{b_j+\sum_{i\in A_j}x_i}}{2}
\end{equation}
shows that maximizing $F$ is equivalent to maximizing the centered objective
\begin{equation}
    H(x)=\sum_{j=1}^Nw_j(-1)^{b_j+\sum_{i\in A_j}x_i}
\end{equation}
Thus $H$ is a homogeneous degree-$k$ Fourier polynomial on $\{0,1\}^n$. Its expectation under the uniform distribution is zero, which will be convenient for the analysis. Let $W=\sum_{j=1}^Nw_j$. We also define the quantity 
\begin{equation}
\Delta=\frac{kH_{\max}}{n d_{\max}}
\end{equation}
for convenience to characterize the problem runtime in the following theorem.

\begin{theorem}[Tilted-walk runtime for weighted \textup{\textsc{MAX-E$k$-LIN2}}]
\label{thm:tilted-walk-eklin2}
Fix $k\geq2$ and assume that $H_{\max}>0$, $\Delta=\Theta(1)$, and $1-\mathbb{E}[G]=\Theta(1)$. For weighted \textup{\textsc{MAX-E$k$-LIN2}}, implement Algorithm~\ref{alg:coherent-short-path-filter} with $D=X/n$, $G=\frac12\left(1+(H+W)/(H_{\max}+W)\right)$, write $\mu=\mathbb{E}[G]$, and choose
\begin{equation}
    \eta=\frac{1-\mu}{2},
    \qquad
    \rho=\frac{1-\mu}{4},
    \qquad
    r=\left\lfloor\frac{\Delta n}{16k}\right\rfloor,
    \label{eq:eklin2-parameters}
\end{equation}
Define the pseudo-Lipschitz constant
\begin{equation}
    L_A:=\frac{4k(1-\mu)}{n\Delta}.
    \label{eq:eklin2-L}
\end{equation}
and
\begin{equation}
    b^{*}=\min\left\{1,
    \frac{
    \sqrt{1+
    \dfrac{2\Delta^2(1-\rho-L_A)
    \left(3(1-\mu)/4-L_A\right)}{k^2(1-\mu)^2}}
    -1}{1-\rho-L_A}
    \right\}.
    \label{eq:eklin2-b}
\end{equation}
Then the algorithm finds an assignment maximizing the total satisfied weight with probability at least $2/3$ using
\begin{equation}
    Q\leq 2^{(1-\gamma_A-\kappa_A)n/2+o(n)}
      =2^{(1-c_A^{\mathrm{cl}}-\kappa_A)n/2+o(n)}
    \label{eq:tilted-walk-eklin2-runtime}
\end{equation}
queries, where $c_A^{\mathrm{cl}}=\gamma_A$ is the exponent of the classical conditioning-and-search algorithm in~\cite{gall2026dequantizingshortpathquantumalgorithms} for this parameter choice and
\begin{equation}
    \gamma_A=\frac{\Delta^2}{8k^2\log 2},
    \qquad
    \kappa_A=
    \frac{2b^{*}}{3nL_A\log 2}
    \left(\frac{3(1-\mu)}{4}-L_A\right)
    \log\left(\frac{1-\rho/2}{1-\rho}\right).
    \label{eq:eklin2-gains}
\end{equation}

\end{theorem}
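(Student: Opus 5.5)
The proof has three parts. First, verify the hypotheses of Theorem~\ref{thm:normalized-amplitude-gain} for $D=X/n$, the normalized $G$, and the parameters in~\eqref{eq:eklin2-parameters}. Then invoke that theorem to get the amplitude hypothesis~\eqref{eq:optimization-amplitude-hypothesis} of Theorem~\ref{thm:optimization-from-amplitude}. Finally, bound $|T_\eta|$ and $|S_r|$ to convert~\eqref{eq:optimization-runtime} into the stated exponent.

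\textbf{Regularity.} Flipping bit $i$ changes $H$ by at most $2d_i\le 2d_{\max}$. Since $R_c=H_{\max}+W$, this changes $G$ by at most
\[
d_{\max}/(H_{\max}+W)\le d_{\max}/H_{\max}=k/(n\Delta).
\]
I would not go through Lemma~\ref{lem:normalization-regularity}, because it gives a different $L$. Instead I would bound pseudo-Lipschitzness directly. We have $G(x)-(PG)(x)\le \frac1n\sum_i|\Delta_iG|$, and $\sum_i d_i=kW$. This gives
\[
G-PG\le \frac{2kW}{n(H_{\max}+W)}.
\]
Since $1-\mu=\frac{H_{\max}}{2(H_{\max}+W)}$ (using $\mathbb E H=0$), this equals $\frac{4k(1-\mu)}{n}\cdot\frac{W}{H_{\max}}$. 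I expect to convert this into $L_A G$ using $G\ge1/2$ and a bound on $W/H_{\max}$ via $\Delta$; checking the exact constant $L_A=4k(1-\mu)/(n\Delta)$ is a bookkeeping step.

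For quadratic variation I would use $\mathcal L_2(G)\le k/(n\Delta)$. With $\omega=1/n$ this gives $8\omega^2/\mathcal L_2(G)^2=8\Delta^2/k^2$. Substituting $1-\rho-\mu-L_A=3(1-\mu)/4-L_A$ should reproduce the formula~\eqref{eq:eklin2-b} for $b^*$, and the precise constants fix the factor $2/(1-\mu)^2$. Since $\Delta,1-\mu=\Theta(1)$, we get $\eta-\rho-L_A=(1-\mu)/4-O(1/n)>0$ and $1-\rho-\mu-L_A>0$.

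\textbf{Successful set.} Let $S=S_r=B(x^\star,r)$ with $r=\lfloor \Delta n/(16k)\rfloor$. Each flip lowers $G$ by at most $k/(n\Delta)$, so for $x\in S$,
\[
G(x)\ge 1-r\,k/(n\Delta)\ge 1-1/16.
\]
To obtain $G\ge1-\rho/2=1-(1-\mu)/8$, I need the sharper per-flip bound $d_{\max}/(2(H_{\max}+W))$ together with $(1-\mu)$ expressed through $H_{\max}/(H_{\max}+W)$. This gives a per-flip drop of at most $\frac{2(1-\mu)d_{\max}}{H_{\max}}=\frac{2k(1-\mu)}{n\Delta}$, so $r$ flips cost at most $(1-\mu)/8$, as needed. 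Then Theorem~\ref{thm:normalized-amplitude-gain} with $T=T_\eta$ gives gain $2^{\kappa n/2}$. Here $\mathbb E_{\pi_S}\log(G/(1-\rho))\ge\log\frac{1-\rho/2}{1-\rho}$, which yields $\kappa\ge\kappa_A$, and $M=O(1/L_A)=O(n)$. The prefactor $(\eta-\rho-L)/(\eta(M+1))$ is $1/\mathrm{poly}(n)$.

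\textbf{Runtime.} The threshold set $T_\eta=\{G\ge 1-\eta\}$ corresponds to $H\ge H_{\max}/2$, since $\eta=(1-\mu)/2$ and the map from $H$ to $G$ is affine. McDiarmid's inequality with bounded differences $2d_i$ gives
\[
|T_\eta|\le 2^n\exp\!\left(-\frac{H_{\max}^2}{8\sum_i d_i^2}\right),
\]
and $\sum_i d_i^2\le d_{\max}kW$. I expect the main obstacle here: recovering exactly $\gamma_A=\Delta^2/(8k^2\log2)$ requires $\sum_i d_i^2\le n d_{\max}^2$, which gives $\exp(-n\Delta^2/(8k^2))$, matching $\gamma_A$. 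For the ball, $|S_r|=2^{h(\Delta/16k)n-o(n)}$. I must check the entropy comparison $h(\Delta/(16k))\ge\gamma_A$, using $\Delta\le k$ (which holds since $H_{\max}\le W\le nd_{\max}/k$) and $h(p)\ge 2p$ for small $p$ so that entropy dominates the quadratic. With that, the runtime~\eqref{eq:optimization-runtime} becomes
\[
2^{-\kappa_A n/2}\bigl(\sqrt{2^n/|S_r|}+\sqrt{|T_\eta|}\bigr)\le 2^{(1-\gamma_A-\kappa_A)n/2+o(n)}.
\]
Finally, $c_A^{\mathrm{cl}}=\gamma_A$ by citing Le Gall--Tamaki's analysis with the same threshold, and bounded error $2/3$ follows from fixed-point amplification.
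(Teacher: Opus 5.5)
Your plan follows the paper's proof essentially step for step. The shared steps are: verify the hypotheses of Theorem~\ref{thm:normalized-amplitude-gain} with $\omega=1/n$; use the per-flip drop $d_{\max}/R=2k(1-\mu)/(n\Delta)$, with $R=H_{\max}+W$, to place the radius-$r$ ball above level $1-\rho/2$; bound $|T_\eta|$ by McDiarmid with $\sum_i d_i^2\le nd_{\max}^2$; compare entropies using $h(p)\ge 2p$; and conclude with Theorem~\ref{thm:optimization-from-amplitude}. The argument is correct once the constants are made consistent, and a few points need tidying:
\begin{itemize}
\item Lemma~\ref{lem:normalization-regularity}, applied with $f=H/d_{\max}$ and $\mathcal L_2(f)=2$, gives exactly $\mathcal L_2(G)=d_{\max}/R$ and $L=2d_{\max}/R=L_A$. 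So it does not yield a different $L$. This sharp value, not your looser $\mathcal L_2(G)\le k/(n\Delta)$, is what reproduces the stated $b^*$. It also gives $G-PG\le L_A/2\le L_AG$ directly, which sidesteps the stray factor $2$ in your bound on $G-PG$.
\item Your justification $H_{\max}\le W\le nd_{\max}/k$ actually gives $\Delta\le 1$, not merely $\Delta\le k$. You need $\Delta\le 1$ to get $\Delta/(8k)\ge\gamma_A$.
\item The paper does not just cite $c_A^{\mathrm{cl}}=\gamma_A$; it verifies it by showing $h\bigl((1-2^{-1/k})/2\bigr)\ge 1/(2k)>\gamma_A$. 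You should include that check rather than rely on citation alone.
\end{itemize}
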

\begin{proof}
For each variable $i$, let
\begin{equation}
        d_i=\sum_{j:i\in A_j}w_j,
        \qquad
        d_{\max}=\max_i d_i,
        \qquad
        R=H_{\max}+W.
\end{equation}

Since $-W\leq H(x)\leq H_{\max}$, the shifted and normalized objective satisfies $1/2\leq G\leq1$ and $\max_xG(x)=1$. Moreover,
\begin{equation}
    \mu=\mathbb{E}[G]=\frac12\left(1+\frac{W}{R}\right),
    \qquad
    (1-\mu)R=\frac{H_{\max}}{2}.
\end{equation}
Flipping coordinate $i$ changes $H$ by at most $2d_i$, and hence
\begin{equation}
    |H(x)-H(x^{(i)})|\leq2d_i.
\end{equation}
For this application, we use the extensive objective $f=H/d_{\max}$, which has the same maximizers as $H$, and take $A_c=W/d_{\max}$ and $R_c=R/d_{\max}$. The one-step quadratic variation of $f$ is at most
\begin{equation}
    \frac{4}{nd_{\max}^2}\sum_{i=1}^n d_i^2
    \leq4.
\end{equation}
We therefore take
\begin{equation}
    \mathcal L_2(f):=2=O(1),
\end{equation}
so Condition~\ref{cond:quadratic-variation} holds. Moreover, $H_{\max}/d_{\max}=n\Delta/k=\Theta(n)$, so the maximum of the chosen objective $f=H/d_{\max}$ is $\Theta(n)$. Lemma~\ref{lem:normalization-regularity} then gives
\begin{equation}
    \mathcal{L}_2(G)^2=\frac{d_{\max}^2}{R^2}.
    \label{eq:eklin2-l2-bound}
\end{equation}
By the same lemma, $G$ is pseudo-Lipschitz with constant
\begin{equation}
    L=2\mathcal L_2(G)=\frac{2d_{\max}}{R}=L_A.
\end{equation}
The last equality follows directly from
\begin{equation}
    \frac{4k(1-\mu)}{n\Delta}
    =\frac{4k}{n\Delta}\frac{H_{\max}}{2R}
    =\frac{2d_{\max}}{R},
\end{equation}
where we used $\Delta=kH_{\max}/(nd_{\max})$.
For the normalization $D=X/n$, the hypercube log-Sobolev constant in \cref{thm:energy-bound} is $\omega=1/n$.

For the value of $\eta$ stated in the theorem,
\begin{equation}
        1-\eta-\mathbb{E}[G]
        =\frac{1-\mathbb{E}[G]}{2}
        =\frac{H_{\max}}{4R}.
\end{equation}
The threshold condition can be written entirely in terms of $H$:
\begin{align}
    G(x)\geq1-\eta
    &\iff
    \frac12\left(1+\frac{H(x)+W}{R}\right)
       \geq1-\frac{1-\mu}{2}\\
    &\iff H(x)\geq\frac{H_{\max}}2.
\end{align}
Thus
\begin{equation}
        T_\eta
        =\left\{x:H(x)\geq\frac{H_{\max}}{2}\right\}.
\end{equation}
Under a uniformly random assignment, $\mathbb E[H]=0$, and changing coordinate $i$ changes $H$ by at most $2d_i$. McDiarmid's inequality with deviation $H_{\max}/2$ therefore gives
\begin{equation}
    \Pr\!\left[H\geq\frac{H_{\max}}2\right]
    \leq
    \exp\!\left(
       -\frac{2(H_{\max}/2)^2}{\sum_i(2d_i)^2}
    \right)
    =\exp\!\left(-\frac{H_{\max}^2}{8\sum_i d_i^2}\right).
\end{equation}
Multiplying by $2^n$ and using $\sum_i d_i^2\leq nd_{\max}^2$ yields
\begin{align}
        |T_\eta|
        &\leq2^n\exp\left(-\frac{H_{\max}^2}{8\sum_i d_i^2}\right)\notag\\
        &\leq\exp\left(
                n\log2-\frac{n\Delta^2}{8k^2}
        \right),
        \label{eq:eklin2-threshold-bound}
\end{align}
where the last inequality uses $H_{\max}/d_{\max}=n\Delta/k$. Equivalently,
\begin{equation}
    |T_\eta|\leq2^{(1-\gamma_A)n},
    \qquad
    \gamma_A=\frac{\Delta^2}{8k^2\log2}.
\end{equation}

If $H_{\max}=0$, then $H$ is identically zero and every assignment is optimal. Assume henceforth that $H_{\max}>0$. Since $\sum_i d_i=kW$ and $H_{\max}\leq W$, we have $d_{\max}\geq kW/n$ and therefore $0<\Delta\leq1$. Let
\begin{equation}
        S_r=\{x\in\{0,1\}^n:d_H(x,x^\star)\leq r\},
\end{equation}
where $r$ is given by~\eqref{eq:eklin2-parameters}. Along any path of at most $r$ bit flips from $x^\star$, the value of $H$ decreases by at most $2rd_{\max}$. Hence, for every $x\in S_r$,
\begin{equation}
    G(x)\geq1-\frac{rd_{\max}}{R}
    \geq1-\frac{H_{\max}}{16R}
    =1-\frac{1-\mu}{8}=1-\frac{\rho}{2}.
    \label{eq:eklin2-inner-threshold}
\end{equation}
In particular, $S_r\subseteq T_\eta$. The standard Hamming-ball estimate gives
\begin{equation}
    |S_r|=2^{h(\Delta/(16k))n+o(n)}.
    \label{eq:eklin2-successful-set-bound}
\end{equation}
Indeed, $r/n=\Delta/(16k)+O(1/n)$ and $\Delta/(16k)<1/2$, so
\begin{equation}
    \sum_{j=0}^r\binom nj
    =2^{h(\Delta/(16k))n+o(n)}.
\end{equation}
The inclusion $S_r\subseteq T_\eta$ follows because $G(x)\geq1-\rho/2$ on $S_r$ and $\rho/2<\eta$.

Since $\omega=1/n$,~\eqref{eq:eklin2-l2-bound} and the definition of $\Delta$ imply
\begin{equation}
    \frac{\mathcal{L}_2(G)^2}{8\omega^2}
    =\frac{k^2(1-\mu)^2}{2\Delta^2}.
\end{equation}
We now verify all hypotheses of \cref{thm:normalized-amplitude-gain}. Since $1-\mu=\Theta(1)$ and $L_A=O(1/n)$,
\begin{align}
    \eta-\rho-L_A
       &=\frac{1-\mu}{4}-L_A>0,\\
    1-\rho-\mu-L_A
       &=\frac{3(1-\mu)}4-L_A>0
\end{align}
for all sufficiently large $n$. The displayed value \eqref{eq:eklin2-b} is obtained from the theorem's formula for $b^{*}$ by substituting $L=L_A$, $\omega=1/n$, and
\begin{equation}
    \frac{\mathcal L_2(G)^2}{8\omega^2}
    =\frac{k^2(1-\mu)^2}{2\Delta^2}.
\end{equation}
Moreover, $x^\star\in S_r$ and $G(x^\star)=1$, so the value of $M$ in Theorem~\ref{thm:normalized-amplitude-gain} is
\begin{equation}
    M=\left\lceil\frac{\log(1/(1-\eta))}{L_A}\right\rceil=O(n).
\end{equation}
For every $x\in S_r$, \eqref{eq:eklin2-inner-threshold} gives
\begin{equation}
    m(x)\geq\frac{1}{L_A}
    \log\left(\frac{1-\rho/2}{1-\rho}\right)
    \qquad (x\in S_r).
\end{equation}
Equivalently,
\begin{equation}
    \mathbb E_{x\sim\pi_{S_r}}
       \left[\log\left(\frac{G(x)}{1-\rho}\right)\right]
    \geq
    \log\left(\frac{1-\rho/2}{1-\rho}\right).
\end{equation}
Substitution in the definition of $\kappa$ in \cref{thm:normalized-amplitude-gain} gives exactly $\kappa_A$ in \eqref{eq:eklin2-gains}. The remaining prefactor in that theorem is inverse polynomial because $\eta-\rho-L_A=\Theta(1)$ and $M=O(n)$. Therefore its amplitude bound has the form required in \eqref{eq:optimization-amplitude-hypothesis}, with $S_r$, $\kappa=\kappa_A$, and a polynomial $p$.

Finally, $h(t)\geq2t$ on $[0,1/2]$, so
\begin{equation}
    h\left(\frac{\Delta}{16k}\right)
    \geq\frac{\Delta}{8k}
    \geq\frac{\Delta^2}{8k^2\log2}=\gamma_A.
\end{equation}
Set $s_A=h(\Delta/(16k))$. Substituting $|T_\eta|\leq2^{(1-\gamma_A)n}$ and $|S_r|=2^{s_A n+o(n)}$ into \eqref{eq:optimization-runtime} gives two contributions. The threshold-state preparation contribution is
\begin{equation}
    2^{-\kappa_A n/2}
    \frac{|T_\eta|^{1/2}}{|S_r|^{1/2}}
    \frac{2^{n/2}}{|T_\eta|^{1/2}}
    =2^{(1-s_A-\kappa_A)n/2+o(n)},
\end{equation}
whereas the local-search contribution is
\begin{equation}
    2^{-\kappa_A n/2}|T_\eta|^{1/2}
    \leq2^{(1-\gamma_A-\kappa_A)n/2}.
\end{equation}
Since $s_A\geq\gamma_A$, both terms are at most $2^{(1-\gamma_A-\kappa_A)n/2+o(n)}$, proving the first equality in \eqref{eq:tilted-walk-eklin2-runtime}.

For the same relative threshold $H\geq H_{\max}/2$, the minimization convention of Le Gall and Tamaki has threshold parameter $1/2$. Their successful-set exponent is
\begin{equation}
    h\left(\frac{1-2^{-1/k}}2\right).
\end{equation}
Since $1-2^{-1/k}\geq1/(2k)$ and $h(t)\geq2t$ on $[0,1/2]$, this exponent is at least $1/(2k)$. Also $0<\Delta\leq1$ implies $\gamma_A<1/(2k)$. Hence their classical exponent is
\begin{equation}
    c_A^{\mathrm{cl}}
    =\min\left\{\gamma_A,
       h\left(\frac{1-2^{-1/k}}2\right)\right\}
    =\gamma_A.
\end{equation}
Finally, the assumptions imply that $b^{*}$, $nL_A$, and every other factor in $\kappa_A$ are bounded above and below by positive constants for sufficiently large $n$. Thus $\kappa_A>0$ is independent of $n$, and the extra factor $2^{-\kappa_A n/2}$ is a super-quadratic improvement over the classical conditioning-and-search runtime.
\end{proof}

\subsection{Application to Max-k-CSP}
\label{subsec:maxkcsp-application}
A weighted Max-k-CSP instance on variables $x_1,\dots, x_n \in \{0,1\}$ consists of $N$ constraints indexed by $j\in [N]$. The $j$-th constraint $C_j$ is specified by a set of variables of size $k_j\leq k$, a positive weight $w_j$, and a predicate $P_j: \{0,1\}^{k_j}\to \{0,1\}$. 

Let $s_j\in [1, 2^{k_j}-1]$ be the number of configurations of the $k_j$ variables that make $P_j=1$. We define the centered contribution of clause $j$ by 
\begin{equation}
C_j(x) = \begin{cases}
    w_j & \text{if } P_j(x)=1\\
    -\frac{s_j}{2^{k_j}-s_j}w_j &  \text{if } P_j(x)=0
\end{cases}
\end{equation}
so that
\begin{equation}
    \mathbb{E}[C_j]
    =\frac{s_j}{2^{k_j}}w_j
    -\frac{2^{k_j}-s_j}{2^{k_j}}
        \frac{s_j}{2^{k_j}-s_j}w_j
    =0.
\end{equation}
Then the centered objective is
\begin{equation}
H(x) = \sum_{j=1}^N C_j(x)
\end{equation}
with $\mathbb{E}[H]=0$ and $H_{\max}=\max_{x\in\{0,1\}^n}H(x)\geq0$. We define $W=\sum_{j=1}^N w_j$ and fix $x^\star$ such that $H(x^\star)=H_{\max}$. For the conditioning-and-search baseline, we apply the minimization convention to $-H$ and write $H_{\min}:=-H_{\max}$. For each variable $i\in[n]$, let
\begin{equation}
d_i = \sum_{j:i\in \text{vars}(j)}w_j
\end{equation}
be its weighted degree and let
\begin{equation}
    \Sigma=\sum_{i=1}^n d_i=\sum_{j=1}^N k_jw_j,
    \qquad
    d_{\mathrm{avg}}=\frac{\Sigma}{n},
    \qquad
    \nu=\frac{n\sum_{i=1}^n d_i^2}{\Sigma^2}\geq1,
\end{equation}
where the inequality follows from Cauchy--Schwarz. Define the change factor of constraint $j$ and its maximum by
\begin{equation}
    \Gamma_j
    =1+\frac{s_j}{2^{k_j}-s_j}
    =\frac{2^{k_j}}{2^{k_j}-s_j},
    \qquad
    \Gamma_{\max}=\max_{j\in[N]}\Gamma_j.
\end{equation}

Define the set of light coordinates by
\begin{equation}
    L_{\mathrm{light}}
    =\left\{i\in[n]:d_i\leq2d_{\mathrm{avg}}\right\}.
\end{equation}
The threshold slack determines the radius
\begin{equation}
        r_\eta^{\mathrm{lip}}
            =\left\lfloor
                \frac{\eta|H_{\min}|}
                         {2\Gamma_{\max}d_{\mathrm{avg}}}
                \right\rfloor
            =\left\lfloor\frac{\theta_\eta n}{2}\right\rfloor,
        \qquad
        \theta_\eta
            =\frac{\eta|H_{\min}|}{\Gamma_{\max}\Sigma}.
        \label{eq:condition-search-lip-radius}
\end{equation}
For $L\subseteq[n]$, define the restricted Hamming ball
\begin{equation}
    B_L(x,r)=\left\{y:d_H(x,y)\leq r,\
        \{i:x_i\neq y_i\}\subseteq L\right\}.
\end{equation}
Set
\begin{equation}
    S_\eta^{\mathrm{lip}}
        =B_{L_{\mathrm{light}}}(x^\star,r_\eta^{\mathrm{lip}}),
    \qquad
    \mathcal{N}_\eta(x)
        =B_{L_{\mathrm{light}}}(x,r_\eta^{\mathrm{lip}}).
\end{equation}
Thus $S_\eta^{\mathrm{lip}}\subseteq T_\eta$.  For every $x\in S_\eta^{\mathrm{lip}}$, the definition of the restricted ball gives $x^\star\in\mathcal{N}_\eta(x)$.  Hence $S_\eta^{\mathrm{lip}}$ is a successful set.  Counting assignments on at least $n/2$ light coordinates gives
\begin{equation}
        |S_\eta^{\mathrm{lip}}|
            \geq2^{\frac{1}{2}h(\theta_\eta)n-o(n)}.
\end{equation}
Every neighborhood $\mathcal{N}_\eta(x)$ has the same cardinality as $S_\eta^{\mathrm{lip}}$.  The successful-set and neighborhood-search exponents are therefore $\kappa_\eta=\frac{1}{2}h(\theta_\eta)$ \cite[Proposition~3.4 and Theorem~3.5]{gall2026dequantizingshortpathquantumalgorithms}.

The conditioning exponent follows from a separate concentration argument. Under a uniformly random assignment, the centered objective has mean zero and bounded-difference parameter at most $\Gamma_{\max}d_i$ in coordinate $i$. Since $\sum_i d_i^2=\nu\Sigma^2/n$, McDiarmid's inequality yields
\begin{equation}
        |T_\eta|
            \leq2^n\exp\!\left(
                -\frac{2(1-\eta)^2|H_{\min}|^2n}
                            {\Gamma_{\max}^2\nu\Sigma^2}
            \right)
            =2^{(1-\gamma_\eta)n},
        \label{eq:condition-search-mcdiarmid}
\end{equation}
which is Proposition~4.4 of Ref.~\cite{gall2026dequantizingshortpathquantumalgorithms}.

\begin{theorem}[Conditioning-and-search for weighted \textsc{MAX-$k$-CSP}]
\label{thm:conditioning-search-maxkcsp}
Assume that the optimum value $H_{\min}$ is known, and fix $\eta\in(0,1)$.  Conditioning-and-search finds an optimum of any weighted \textup{\textsc{MAX-$k$-CSP}} instance with high probability in time
\begin{equation}
    2^{(1-c_B^{\mathrm{cl}})n+o(n)},
    \label{eq:condition-search-csp-runtime}
\end{equation}
where
\begin{align}
    \gamma_\eta
      &=\frac{2(1-\eta)^2}{\ln 2}\,
                \frac{1}{\Gamma_{\max}^2\nu}
        \left(\frac{|H_{\min}|}{\Sigma}\right)^2,
    &
    \theta_\eta
            &=\frac{\eta|H_{\min}|}{\Gamma_{\max}\Sigma},
    \label{eq:condition-search-csp-parameters}\\
    c_B^{\mathrm{cl}}
      &=\min\!\left\{\gamma_\eta,\frac{1}{2}h(\theta_\eta)\right\}.
\end{align}
\end{theorem}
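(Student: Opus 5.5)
The plan is to instantiate the abstract conditioning-and-search principle~\eqref{eq:condition-search-abstract-cost}, in the rejection-sampling form~\eqref{eq:condition-search-rejection-cost}, with the threshold set $T_\eta$ from~\eqref{eq:condition-search-threshold} and the restricted neighborhoods $\mathcal N_\eta(x)=B_{L_{\mathrm{light}}}(x,r_\eta^{\mathrm{lip}})$. Three bounds are then needed. First, an upper bound $|T_\eta|\le 2^{(1-\gamma_\eta)n}$. Second, a lower bound $|S_\eta^{\mathrm{lip}}|\ge 2^{\frac12 h(\theta_\eta)n-o(n)}$. Third, $\tau_{\mathrm{search}}=O^*(|S_\eta^{\mathrm{lip}}|)$. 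Substituting these into $2^n/|S|+(|T|/|S|)\tau_{\mathrm{search}}$ gives
\[
2^{(1-\frac12 h(\theta_\eta))n+o(n)}+2^{(1-\gamma_\eta)n+o(n)}\le 2^{(1-c_B^{\mathrm{cl}})n+o(n)}.
\]
Knowledge of $H_{\min}$ is used to test membership in $T_\eta$ and to recognize an optimum. Markov's inequality, together with a constant number of repetitions, converts the expected-time bound into a high-probability one.

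\textbf{Step 1: the threshold set.} For $|T_\eta|$, view $H$ as a function of $n$ independent uniform bits with mean zero. Flipping $x_i$ changes only the clauses containing $i$. Each such clause $C_j$ moves between $w_j$ and $-\frac{s_j}{2^{k_j}-s_j}w_j$, so it changes by at most $\Gamma_jw_j\le\Gamma_{\max}w_j$. The bounded difference in coordinate $i$ is therefore at most $\Gamma_{\max}d_i$. Apply McDiarmid's inequality to the deviation $H\ge (1-\eta)|H_{\min}|$, which is $T_\eta$ in the maximization convention. Then use $\sum_i d_i^2=\nu\Sigma^2/n$. Together these give~\eqref{eq:condition-search-mcdiarmid}, and converting $\exp$ to base $2$ yields the stated $\gamma_\eta$.

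\textbf{Step 2: the successful set.} First show $|L_{\mathrm{light}}|\ge n/2$ by Markov's inequality: fewer than $n/2$ coordinates can have $d_i>2d_{\mathrm{avg}}$, since $\sum_i d_i=\Sigma$.

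Next show $S_\eta^{\mathrm{lip}}\subseteq T_\eta$. Each flip of a light coordinate decreases $H$ by at most $\Gamma_{\max}\cdot 2d_{\mathrm{avg}}$. Hence at most $r_\eta^{\mathrm{lip}}$ such flips lower $H$ from $H_{\max}=|H_{\min}|$ by at most $\eta|H_{\min}|$, and every point of the ball stays in $T_\eta$.

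By symmetry of restricted Hamming distance, $x^\star\in\mathcal N_\eta(x)$ for every $x\in S_\eta^{\mathrm{lip}}$, so $S_\eta^{\mathrm{lip}}$ is a successful set. Its size is at least $\binom{\lceil n/2\rceil}{\lfloor\theta_\eta n/2\rfloor}$, assuming $\theta_\eta\le 1$; if it is larger, truncate at radius $n/4$, which only helps. Stirling then gives $2^{\frac12 h(\theta_\eta)n-o(n)}$.

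\textbf{Step 3: neighborhood search.} Every $\mathcal N_\eta(x)$ has the same cardinality as $S_\eta^{\mathrm{lip}}$, because $L_{\mathrm{light}}$ does not depend on $x$. Exhaustive enumeration of $\mathcal N_\eta(x)$ with polynomial-time evaluation of $H$ therefore gives $\tau_{\mathrm{search}}=O^*(|S_\eta^{\mathrm{lip}}|)$. The algorithm checks whether some point of the neighborhood attains $H_{\max}$.

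\textbf{Main obstacle.} The main difficulty is in Step 2. The radius bound must be exact, which requires the per-flip change factor $\Gamma_{\max}$ and the light-degree bound $2d_{\mathrm{avg}}$ to match the definition of $r_\eta^{\mathrm{lip}}$. The entropy estimate must also hold uniformly in $n$, including the edge case where $\theta_\eta$ is close to or above $1/2$ on the $n/2$ light coordinates. Everything else is a citation-level combination of the estimates above.
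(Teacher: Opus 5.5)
Your proposal is correct and follows essentially the same route as the paper. The paper cites Le Gall--Tamaki (Theorem~4.5) and assembles the same pieces you do: the rejection-sampling cost $2^n/|S|+(|T_\eta|/|S|)\tau_{\mathrm{search}}$, the McDiarmid bound with per-coordinate difference $\Gamma_{\max}d_i$ for $|T_\eta|$, the light-coordinate ball of radius $\lfloor\theta_\eta n/2\rfloor$ for $|S_\eta^{\mathrm{lip}}|$, and equal-size neighborhoods for $\tau_{\mathrm{search}}$. The edge case you flag cannot occur: $H_{\max}\le W$ and $\Gamma_jk_j\ge2$ together give $\theta_\eta\le\eta/2<1/2$.
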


Theorem~\ref{thm:conditioning-search-maxkcsp} is Theorem~4.5 of Ref.~\cite{gall2026dequantizingshortpathquantumalgorithms}.

\begin{corollary}[Explicit weighted \textsc{MAX-$k$-CSP} exponent]
\label{cor:conditioning-search-maxkcsp-explicit}
Choose $\eta=1/2$ in Theorem~\ref{thm:conditioning-search-maxkcsp}. Using $\Sigma\leq kW$, the runtime exponent satisfies
\begin{equation}
    c_B^{\mathrm{cl}}
    \geq \frac{1}{2\ln 2}\,
    \frac{1}{\Gamma_{\max}^2k^2\nu}
       \left(\frac{|H_{\min}|}{W}\right)^2
    \geq \frac{1}{2\ln 2}\,
    \frac{1}{2^{2k}k^2\nu}
       \left(\frac{|H_{\min}|}{W}\right)^2.
\end{equation}
For an unweighted instance, $W=N$. Defining $\Delta_{\mathrm{CSP}}=|H_{\min}|/N$ gives
\begin{equation}
    c_B^{\mathrm{cl}}
    \geq 0.7213\,
    \frac{\Delta_{\mathrm{CSP}}^2}{2^{2k}k^2\nu},
    \label{eq:condition-search-csp-corollary}
\end{equation}
and conditioning-and-search runs in time $2^{(1-c_B^{\mathrm{cl}})n+o(n)}$ with high probability.
\end{corollary}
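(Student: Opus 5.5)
This follows from Theorem~\ref{thm:conditioning-search-maxkcsp} once the two exponents are lower-bounded at $\eta=1/2$. Since $c_B^{\mathrm{cl}}=\min\{\gamma_{1/2},\tfrac12 h(\theta_{1/2})\}$, I bound each term from below by the first displayed quantity, and then specialize.

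For the conditioning exponent, substituting $\eta=1/2$ gives $(1-\eta)^2=1/4$, so
\[
\gamma_{1/2}=\frac{1}{2\ln 2}\,\frac{1}{\Gamma_{\max}^2\nu}\left(\frac{|H_{\min}|}{\Sigma}\right)^2 .
\]
Using $\Sigma=\sum_j k_jw_j\leq kW$ then yields
\[
\gamma_{1/2}\geq\frac{1}{2\ln 2}\,\frac{1}{\Gamma_{\max}^2k^2\nu}\left(\frac{|H_{\min}|}{W}\right)^2 .
\]

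For the successful-set exponent, write $t=\theta_{1/2}=|H_{\min}|/(2\Gamma_{\max}\Sigma)$. First I check that $t\leq 1/2$, so that the bound $h(t)\geq 2t$ on $[0,1/2]$ applies. We have $|H_{\min}|=H_{\max}\leq W\leq\Sigma$, since each clause contributes at most $w_j$ and $k_j\geq1$. Together with $\Gamma_{\max}\geq1$ this gives $t\leq1/2$. Hence
\[
\tfrac12 h(t)\geq t=\frac{|H_{\min}|}{2\Gamma_{\max}\Sigma}.
\]
It remains to show $t\geq\gamma_{1/2}$. This reduces to
\[
\frac{1}{2\Gamma_{\max}}\geq\frac{1}{2\ln2}\,\frac{|H_{\min}|}{\Gamma_{\max}^2\nu\Sigma},
\]
i.e.\ $|H_{\min}|/\Sigma\leq \Gamma_{\max}\nu\ln 2$. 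This is the step I expect to need the most care. It is nonetheless true: we have $|H_{\min}|/\Sigma\leq 1$, $\nu\geq1$, and $\Gamma_{\max}\geq 2^{k_j}/(2^{k_j}-1)\geq 2^{k}/(2^k-1)$ for the maximizing clause, which is at least $1$. That is not quite enough in general, since $\ln2<1$. If needed, I would sharpen $|H_{\min}|/\Sigma\leq 1/k_{\min}$, or, for $k\geq2$, use $\Gamma_{\max}\ln 2\geq \tfrac43\ln2<1$, which still fails. In that case I would instead use the sharper inequality $h(t)\geq 2t\log_2(1/t)\cdot$ (or simply $h(t)\geq t\log_2(e/t)$ type bounds), which give $\tfrac12h(t)\geq t/\ln 2\cdot\tfrac{\ln 2}{1}$ comfortably dominating $\gamma_{1/2}$, because $\gamma_{1/2}\leq t^2\cdot 2/\ln2$ and $t\leq1/2$ makes $t^2\cdot 2/\ln 2\leq t/\ln 2\leq \tfrac12h(t)$ via $h(t)\geq 2t/\ln2\cdot\ln(1/t)\cdot\ln2$. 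Indeed, $\gamma_{1/2}=\tfrac{2}{\ln2}t^2/\nu\leq\tfrac{2t^2}{\ln 2}$, and $\tfrac12 h(t)\geq \tfrac12\cdot 2t=t\geq 2t^2/\ln2$ whenever $t\leq\ln2/2$. Otherwise $t\in(\ln 2/2,1/2]$, and there $h(t)\geq h(0.34)>0.92\geq 2\cdot\tfrac{2}{\ln 2}\cdot\tfrac14$ suffices. Either way the minimum is $\gamma_{1/2}$, and the first inequality follows.

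The second inequality uses $\Gamma_j=2^{k_j}/(2^{k_j}-s_j)\leq 2^{k_j}\leq 2^k$, since $s_j\leq 2^{k_j}-1$. For unweighted instances, $W=N$ and $1/(2\ln2)\approx0.7213$, which gives \eqref{eq:condition-search-csp-corollary}. The runtime claim is then Theorem~\ref{thm:conditioning-search-maxkcsp}, because a larger exponent only shrinks the bound.
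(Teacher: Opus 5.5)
There is a genuine gap in your treatment of the successful-set exponent. Your handling of $\gamma_{1/2}$, of $\Gamma_{\max}\le 2^k$, of the unweighted specialization, and of the first case $t\le \ln 2/2$ is fine. (The paper gives no proof of its own; it imports this corollary from Le Gall and Tamaki's Corollary~4.7.)

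The error is in your second case, $t\in(\ln2/2,1/2]$. There you need $h(t)\ge 4t^2/(\nu\ln 2)$. Your chain $h(0.34)>0.92\ge 2\cdot\frac{2}{\ln 2}\cdot\frac14$ is false, because the right-hand side equals $1/\ln 2\approx 1.44$, which exceeds $\max h=1$. No sharper numerical estimate can repair this case. At $t=1/2$ and $\nu=1$ one has $\tfrac12 h(t)=\tfrac12<\tfrac{1}{2\ln 2}=\gamma_{1/2}$. So in that regime $\tfrac12 h(\theta_{1/2})$ really would be the smaller term, and your conclusion ``the minimum is $\gamma_{1/2}$'' would fail. (Your intermediate claim $h(t)\ge 2t\ln(1/t)$ is also false for small $t$, e.g.\ $t=0.01$.)

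The missing idea is that this regime never occurs. Showing this requires using the structure of $\Gamma_j$, not just $\Gamma_{\max}\ge 1$. For every constraint, $\Gamma_j k_j\ge 2$:
\begin{itemize}
\item if $k_j=1$, then $s_j=1$ is forced, so $\Gamma_j=2$;
\item if $k_j\ge 2$, then $\Gamma_j>1$.
\end{itemize}
Hence $\Gamma_{\max}\Sigma\ge\sum_j\Gamma_jk_jw_j\ge 2W\ge 2|H_{\min}|$. This gives $t=\theta_{1/2}\le 1/4<\ln 2/2$, so your first case always applies: $\tfrac12 h(t)\ge t\ge 2t^2/\ln 2\ge\gamma_{1/2}$. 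This is essentially your abandoned idea $|H_{\min}|/\Sigma\le 1/k_{\min}$ combined with $\Gamma_{\max}\ge 2$ when $k_{\min}=1$.

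Alternatively, you do not need the minimum to equal $\gamma_{1/2}$ at all. It suffices that $\tfrac12 h(\theta_{1/2})\ge \theta_{1/2}\ge |H_{\min}|/(2\Gamma_{\max}kW)$ dominates the target bound. That reduces to $\Gamma_{\max}k\nu\ln 2\ge |H_{\min}|/W$, which follows from $\Gamma_{\max}k\ge 2$ (immediate if $k\ge 2$; if $k=1$ all constraints are unary with $\Gamma_j=2$) together with $2\ln 2>1\ge |H_{\min}|/W$.
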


Corollary~\ref{cor:conditioning-search-maxkcsp-explicit} is Corollary~4.7 of Ref.~\cite{gall2026dequantizingshortpathquantumalgorithms}; its unweighted specialization is Theorem~1.2 in that reference.  A constant lower bound on $\Delta_{\mathrm{CSP}}$ and a constant upper bound on $\nu$ give a constant $c_B^{\mathrm{cl}}>0$.  For \textsc{MAX-E3-SAT}, the condition $\Delta_{\mathrm{CSP}}=\Omega(1)$ includes mildly regular instances in which an optimum satisfies at least a $(7/8+\delta)$ fraction of clauses for constant $\delta>0$~\cite{gall2026dequantizingshortpathquantumalgorithms}.

We now analyze the tilted-walk algorithm using the equivalent maximization convention.

The two possible values of $C_j$ differ by $\Gamma_jw_j$. To make the objective nonnegative, define
\begin{equation}
    B=\sum_{j=1}^N\frac{s_j}{2^{k_j}-s_j}w_j,
    \qquad
    R=H_{\max}+B,
    \qquad
    G(x)=\frac12\left(1+\frac{H(x)+B}{R}\right).
\end{equation}
Indeed,
\begin{equation}
    H(x)+B=\sum_{j=1}^N\Gamma_jw_jP_j(x),
\end{equation}
so $1/2\leq G(x)\leq1$ and $\max_xG(x)=1$. Moreover, if $\mu=\mathbb{E}[G]$, then
\begin{equation}
    {\mu=\frac12\left(1+\frac{B}{R}\right)},
    \qquad
    {(1-\mu)R=\frac{H_{\max}}{2}}.
\end{equation}
We apply the tilted-walk construction to this normalized objective:
\begin{equation}
    D_b=\frac{D+bG}{E_b},
    \qquad
    E_b=\lambda_{\max}(D+bG),
    \qquad
    C_{b,M}=\frac{1}{M+1}\sum_{\ell=0}^M D_b^\ell.
\end{equation}

{
Define the known local scale
\begin{equation}
    \Lambda_B:=\frac{\Gamma_{\max}\sqrt{\nu}\Sigma}{n}.
\end{equation}
For this application, we use the extensive objective $f=H/\Lambda_B$, which has the same maximizers as $H$, and take $A_c=B/\Lambda_B$ and $R_c=R/\Lambda_B$. Flipping coordinate $i$ changes $H$ by at most $\Gamma_{\max}d_i$. The one-step quadratic variation of $f$ is therefore at most
\begin{equation}
    \frac{\Gamma_{\max}^2}{n\Lambda_B^2}\sum_{i=1}^n d_i^2
    =1.
\end{equation}
We therefore take
\begin{equation}
    \mathcal L_2(f):=1=O(1),
\end{equation}
so Condition~\ref{cond:quadratic-variation} holds. For fixed $k$, the basic bounds on $B$, $\Sigma$, and $H_{\max}$ give $\Gamma_{\max}\sqrt{\nu}\Sigma/R=\Theta(1)$ under the theorem assumptions. Since $H_{\max}/R=2(1-\mu)=\Theta(1)$, it follows that
\begin{equation}
    \max_x f(x)=\frac{H_{\max}}{\Lambda_B}
    =\frac{nH_{\max}}{\Gamma_{\max}\sqrt\nu\Sigma}
    =\Theta(n).
\end{equation}
Lemma~\ref{lem:normalization-regularity} then gives
\begin{equation}
    \mathcal{L}_2(G)^2
    =\frac{\Gamma_{\max}^2\nu\Sigma^2}{4n^2R^2}.
\end{equation}
By the same lemma, $G$ is pseudo-Lipschitz with constant
\begin{equation}
    L=2\mathcal L_2(G)
    =\frac{\Gamma_{\max}\sqrt{\nu}\Sigma}{nR}=:L_B.
    \label{eq:maxkcsp-L}
\end{equation}
\par}
For the normalization $D=X/n$, the hypercube log-Sobolev constant is $\omega=1/n$.

\begin{theorem}[Tilted-walk runtime for weighted \textup{\textsc{MAX-$k$-CSP}}]
\label{thm:tilted-walk-maxkcsp}
Fix $k$ and assume that $H_{\max}>0$, $1-\mu=\Theta(1)$, and $\Gamma_{\max}^2\nu\Sigma^2/R^2=O(1)$. Choose
\begin{equation}
    \label{eq:maxkcsp-quantum-parameters}
\eta=\frac{1-\mu}{2},
\qquad
\rho=\frac{1-\mu}{4},
\qquad
r=\left\lfloor
\frac{H_{\max}n}{16\Gamma_{\max}\Sigma}
\right\rfloor,
\end{equation}
and
{
\begin{equation}
    b^{*}=\min\left\{1,
    \frac{
    \sqrt{1+
    \dfrac{32R^2(1-\rho-L_B)
    \left(3(1-\mu)/4-L_B\right)}
    {\Gamma_{\max}^2\nu\Sigma^2}}
    -1}{1-\rho-L_B}
    \right\}.
    \label{eq:maxkcsp-b}
\end{equation}
\par}
Define
\begin{align}
    \gamma_B
      &=\frac{H_{\max}^2}
    {2(\log2)\Gamma_{\max}^2\nu\Sigma^2},
    &
    \sigma_B
      &=\frac12 h\left(\frac{H_{\max}}
      {8\Gamma_{\max}\Sigma}\right),
    \label{eq:maxkcsp-exponents}\\
        {\kappa_B}
            &={\frac{2b^{*}}{3nL_B\log2}
            \left(\frac{3(1-\mu)}4-L_B\right)
            \log\left(\frac{1-\rho/2}{1-\rho}\right)}.
    \label{eq:maxkcsp-kappa}
\end{align}
Then Algorithm~\ref{alg:coherent-short-path-filter}, using restricted Hamming-ball search on the light coordinates, finds a maximizer with bounded error using
\begin{equation}
    Q\leq
    2^{\left(1-\min\{\gamma_B,\sigma_B\}-\kappa_B\right)n/2+o(n)}
    \label{eq:tilted-walk-maxkcsp-runtime}
\end{equation}
queries. If $\sigma_B\geq\gamma_B$, then $c_B^{\mathrm{cl}}=\gamma_B$ and
\begin{equation}
    Q\leq2^{(1-c_B^{\mathrm{cl}}-\kappa_B)n/2+o(n)},
    \label{eq:tilted-walk-maxkcsp-comparison}
\end{equation}
which is a super-quadratic speedup over conditioning-and-search.  The condition $\sigma_B\geq\gamma_B$ holds, in particular, for weighted exact-$k$ CSPs with $k\geq6$.
\end{theorem}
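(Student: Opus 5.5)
The plan follows the \textsc{MAX-E$k$-LIN2} proof (Theorem~\ref{thm:tilted-walk-eklin2}) step by step. We feed the regularity constants computed just before the statement into Theorem~\ref{thm:normalized-amplitude-gain}, and then into Theorem~\ref{thm:optimization-from-amplitude}.

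\textbf{Threshold set and its size.} Since $(1-\mu)R=H_{\max}/2$ and $\eta=(1-\mu)/2$, the condition $G(x)\geq1-\eta$ unwinds to $H(x)\geq H_{\max}/2$, so $T_\eta=\{x:H(x)\geq H_{\max}/2\}$. Under the uniform distribution $H$ has mean zero, and its bounded differences are $\Gamma_{\max}d_i$. McDiarmid's inequality at deviation $H_{\max}/2$, together with $\sum_id_i^2=\nu\Sigma^2/n$, then gives
\[
|T_\eta|\leq 2^n\exp\bigl(-nH_{\max}^2/(2\Gamma_{\max}^2\nu\Sigma^2)\bigr)=2^{(1-\gamma_B)n}.
\]
This matches~\eqref{eq:condition-search-mcdiarmid} with $(1-\eta)|H_{\min}|=H_{\max}/2$.

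\textbf{Successful set.} Take $S_r=B_{L_{\mathrm{light}}}(x^\star,r)$. Each flip of a light coordinate changes $H$ by at most $\Gamma_{\max}\cdot2d_{\mathrm{avg}}=2\Gamma_{\max}\Sigma/n$. Hence on $S_r$ we get $H\geq H_{\max}-2r\Gamma_{\max}\Sigma/n\geq\tfrac78H_{\max}$, which gives $G\geq1-H_{\max}/(16R)=1-(1-\mu)/8=1-\rho/2$. In particular $S_r\subseteq T_\eta$. At least $n/2$ coordinates are light, and $r\approx(n/2)\cdot H_{\max}/(8\Gamma_{\max}\Sigma)$. The ratio $H_{\max}/(8\Gamma_{\max}\Sigma)$ is below $1/2$ because $H_{\max}\leq\sum_j\Gamma_jw_j\leq\Gamma_{\max}\Sigma$. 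The Hamming-ball estimate on $n/2$ coordinates therefore gives $|S_r|\geq2^{\sigma_Bn-o(n)}$.

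\textbf{Amplitude gain.} Next we check the hypotheses of Theorem~\ref{thm:normalized-amplitude-gain}.
\begin{itemize}
\item From~\eqref{eq:maxkcsp-L} and the assumption $\Gamma_{\max}^2\nu\Sigma^2/R^2=O(1)$, we have $L_B=O(1/n)$.
\item $L_B=\Theta(1/n)$ additionally requires the matching lower bound, which holds because $H_{\max}\leq\Gamma_{\max}\Sigma$ and $\nu\geq1$ give $\Gamma_{\max}\sqrt\nu\Sigma/R\geq H_{\max}/R=2(1-\mu)$.
\item $\eta-\rho-L_B=(1-\mu)/4-L_B>0$ and $1-\rho-\mu-L_B=3(1-\mu)/4-L_B>0$ hold for large $n$.
\item With $\omega=1/n$ we get $\mathcal L_2(G)^2/(8\omega^2)=\Gamma_{\max}^2\nu\Sigma^2/(32R^2)$. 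Substituting this into the theorem's $b^*$ gives exactly~\eqref{eq:maxkcsp-b}.
\item $M=\lceil\log(1/(1-\eta))/L_B\rceil=O(n)$.
\item $G\geq1-\rho/2$ on $S_r$ bounds the log-expectation below by $\log\frac{1-\rho/2}{1-\rho}$, so $\kappa\geq\kappa_B$.
\end{itemize}
Both $b^*$ and $nL_B$ are $\Theta(1)$, so $\kappa_B$ is a positive constant. The prefactor $\frac{\eta-\rho-L_B}{\eta(M+1)}$ is inverse polynomial, which yields~\eqref{eq:optimization-amplitude-hypothesis}.

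\textbf{Runtime and comparison.} Plugging into~\eqref{eq:optimization-runtime} gives two terms, $2^{(1-\sigma_B-\kappa_B)n/2+o(n)}$ and $2^{-\kappa_Bn/2}|T_\eta|^{1/2}\leq2^{(1-\gamma_B-\kappa_B)n/2}$. Together these prove~\eqref{eq:tilted-walk-maxkcsp-runtime}.

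For the comparison, take $\eta=1/2$ in Theorem~\ref{thm:conditioning-search-maxkcsp}, which corresponds to the same threshold $H_{\max}/2$. This gives $\gamma_\eta=\gamma_B$ exactly, and $\tfrac12h(\theta_{1/2})=\tfrac12h(H_{\max}/(2\Gamma_{\max}\Sigma))\geq\sigma_B$ by monotonicity of $h$ on $[0,1/2]$. So when $\sigma_B\geq\gamma_B$ we get $c_B^{\mathrm{cl}}=\gamma_B$, and~\eqref{eq:tilted-walk-maxkcsp-comparison} follows.

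\textbf{Main obstacle: the exact-$k$, $k\geq6$ claim.} Write $t=H_{\max}/(8\Gamma_{\max}\Sigma)$. For exact-$k$ CSPs $\Sigma=kW$, and with $\nu\geq1$ this gives $\gamma_B\leq 32t^2/\log2$. Using $h(t)\geq2t$ we have $\sigma_B\geq t$, so it suffices that $t\leq\log2/32$. I would first try bounding $H_{\max}\leq\Gamma_{\max}W$, which gives $t\leq1/(8k)$, and check that $1/(8k)\leq\log 2/32$ for $k\geq6$. However $\log2/32\approx0.0217$, while $1/48\approx0.0208$ is fine but $1/40=0.025$ fails at $k=5$. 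The sharper inequality $h(t)\geq 2t\log_2(1/t)$ would likely be needed to close smaller $k$, and I expect this threshold bookkeeping, rather than the amplitude analysis, to be the delicate part.
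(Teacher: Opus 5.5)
Your proposal is correct and follows the paper's proof step by step. It uses the same McDiarmid bound on $T_\eta=\{H\ge H_{\max}/2\}$, the same light-coordinate ball with $G\ge 1-\rho/2$, the same substitution into Theorem~\ref{thm:normalized-amplitude-gain} and the runtime formula, and the same comparison at classical parameter $1/2$. For the exact-$k$ case the paper uses the identical chain $h(t)\ge 2t$, $\nu\ge 1$, $H_{\max}/(\Gamma_{\max}\Sigma)\le 1/k\le(\log 2)/4$. That chain is equivalent to your $1/(8k)\le\log 2/32$, so your argument already settles the stated $k\ge 6$ case, and the closing worry about smaller $k$ is moot. Your explicit lower bound $nL_B\ge 2(1-\mu)$ and the check $H_{\max}\le\Gamma_{\max}\Sigma$ are small additions that the paper only asserts. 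One correction to the aside: the inequality $h(t)\ge 2t\log_2(1/t)$ you suggest is false for small $t$; the valid bound is $h(t)\ge t\log_2(1/t)$.
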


\begin{proof}
At least $n/2$ coordinates belong to $L_{\mathrm{light}}$. {Indeed, if more than $n/2$ coordinates had degree greater than $2d_{\mathrm{avg}}=2\Sigma/n$, then their degrees alone would sum to more than $\Sigma$, a contradiction.\par} Define $S_r=B_{L_{\mathrm{light}}}(x^\star,r)$ and use $B_{L_{\mathrm{light}}}(x,r)$ as the local-search neighborhood of $x$. Each light-coordinate flip decreases $H$ by at most $2\Gamma_{\max}\Sigma/n$. {Since $G=\frac12(1+(H+B)/R)$, the corresponding decrease in $G$ is at most $\Gamma_{\max}\Sigma/(nR)$ per flip. Therefore every $x\in S_r$ satisfies\par}
\begin{equation}
    {
    G(x)\geq1-\frac{\Gamma_{\max}\Sigma r}{nR}
    \geq1-\frac{H_{\max}}{16R}
    =1-\frac{\rho}{2}}.
    \label{eq:maxkcsp-inner-threshold}
\end{equation}
In particular, $S_r\subseteq T_\eta$. Counting assignments on any $n/2$ light coordinates gives
\begin{equation}
    |S_r|\geq2^{\sigma_B n-o(n)}.
    \label{eq:maxkcsp-quantum-ball-size}
\end{equation}
{
To justify the exponent, choose a subset $J'\subseteq L_{\mathrm{light}}$ of size $N=\lfloor n/2\rfloor$. The restricted ball on $J'$ is contained in $S_r$, and
\begin{equation}
    \frac{r}{N}
    =\frac{H_{\max}}{8\Gamma_{\max}\Sigma}+O(1/n).
\end{equation}
Since this ratio lies in $[0,1/2]$, the standard entropy estimate gives
\begin{align}
    |S_r|
    &\geq\sum_{j=0}^r\binom Nj\\
    &\geq
    2^{Nh(H_{\max}/(8\Gamma_{\max}\Sigma))-o(n)}
    =2^{\sigma_Bn-o(n)}.
\end{align}
The inclusion $S_r\subseteq T_\eta$ follows from $G(x)\geq1-\rho/2$ and $\rho/2<\eta$.
\par}

{
As in the preceding application, the threshold can be expressed in terms of the centered objective:
\begin{equation}
    G(x)\geq1-\eta
    \iff H(x)\geq\frac{H_{\max}}2.
\end{equation}
Under a uniformly random assignment, $\mathbb E[H]=0$, and changing coordinate $i$ changes $H$ by at most $\Gamma_{\max}d_i$. McDiarmid's inequality therefore gives
\begin{equation}
    \Pr\!\left[H\geq\frac{H_{\max}}2\right]
    \leq
    \exp\!\left(
       -\frac{2(H_{\max}/2)^2}
              {\Gamma_{\max}^2\sum_i d_i^2}
    \right)
    =\exp\!\left(
       -\frac{H_{\max}^2n}
              {2\Gamma_{\max}^2\nu\Sigma^2}
    \right).
\end{equation}
Multiplying by $2^n$ and converting to base two yields
\begin{equation}
    |T_\eta|\leq2^{(1-\gamma_B)n},
    \qquad
    \gamma_B=\frac{H_{\max}^2}
       {2(\log2)\Gamma_{\max}^2\nu\Sigma^2}.
    \label{eq:maxkcsp-quantum-threshold-size}
\end{equation}
\par}

The bounds on $\mathcal L_2(G)$ and $\omega=1/n$ imply
\begin{equation}
    \frac{\mathcal L_2(G)^2}{8\omega^2}
    {=\frac{\Gamma_{\max}^2\nu\Sigma^2}{32R^2}}.
\end{equation}
{
We next verify the hypotheses of \cref{thm:normalized-amplitude-gain}. The normalization estimates above give $nL_B=\Gamma_{\max}\sqrt\nu\Sigma/R=\Theta(1)$, while $1-\mu=\Theta(1)$. Hence $L_B=\Theta(1/n)$ and, for all sufficiently large $n$,
\begin{align}
    \eta-\rho-L_B
       &=\frac{1-\mu}{4}-L_B>0,\\
    1-\rho-\mu-L_B
       &=\frac{3(1-\mu)}4-L_B>0.
\end{align}
Substituting $L=L_B$, $\omega=1/n$, and
\begin{equation}
    \frac{\mathcal L_2(G)^2}{8\omega^2}
    =\frac{\Gamma_{\max}^2\nu\Sigma^2}{32R^2}
\end{equation}
into the formula for $b^{*}$ in that theorem gives exactly \eqref{eq:maxkcsp-b}. Since $x^\star\in S_r$ and $G(x^\star)=1$, the theorem uses
\begin{equation}
    M=\left\lceil\frac{\log(1/(1-\eta))}{L_B}\right\rceil=O(n).
\end{equation}
By \eqref{eq:maxkcsp-inner-threshold},
\begin{equation}
    m(x)\geq\frac{1}{L_B}
    \log\left(\frac{1-\rho/2}{1-\rho}\right)
    \qquad(x\in S_r).
\end{equation}
Thus
\begin{equation}
    \mathbb E_{x\sim\pi_{S_r}}
       \left[\log\left(\frac{G(x)}{1-\rho}\right)\right]
    \geq
    \log\left(\frac{1-\rho/2}{1-\rho}\right),
\end{equation}
and substitution in the theorem's expression for $\kappa$ gives exactly $\kappa_B$ in~\eqref{eq:maxkcsp-kappa}. Since $\eta-\rho-L_B=\Theta(1)$ and $M=O(n)$, the remaining prefactor is inverse polynomial. Therefore the amplitude hypothesis \eqref{eq:optimization-amplitude-hypothesis} holds with $S_r$, $\kappa=\kappa_B$, and a polynomial $p$.
\par}

{
Substituting~\eqref{eq:maxkcsp-quantum-ball-size} and \eqref{eq:maxkcsp-quantum-threshold-size} into \eqref{eq:optimization-runtime}, the threshold-state preparation term is at most
\begin{equation}
    2^{(1-\sigma_B-\kappa_B)n/2+o(n)},
\end{equation}
and the local-search term is at most
\begin{equation}
    2^{(1-\gamma_B-\kappa_B)n/2}.
\end{equation}
Their sum is therefore
\begin{equation}
    2^{(1-\min\{\gamma_B,\sigma_B\}-\kappa_B)n/2+o(n)},
\end{equation}
which proves~\eqref{eq:tilted-walk-maxkcsp-runtime}.
\par}

In the minimization convention of Ref.~\cite{gall2026dequantizingshortpathquantumalgorithms}, the same threshold has parameter $1/2$ and $|H_{\min}|=H_{\max}$. Hence
\begin{equation}
    c_B^{\mathrm{cl}}
    =\min\left\{\gamma_B,
      \frac12h\left(\frac{H_{\max}}
      {2\Gamma_{\max}\Sigma}\right)\right\}.
\end{equation}
If $\sigma_B\geq\gamma_B$, the second entry is also at least $\gamma_B$, which proves~\eqref{eq:tilted-walk-maxkcsp-comparison}. {Indeed, the argument of the entropy function in the second entry is four times the argument defining $\sigma_B$, and both lie in $[0,1/2]$; monotonicity of $h$ therefore gives
\begin{equation}
    \frac12h\left(\frac{H_{\max}}
       {2\Gamma_{\max}\Sigma}\right)
    \geq\sigma_B\geq\gamma_B.
\end{equation}
Thus $c_B^{\mathrm{cl}}=\gamma_B$, and the quantum exponent contains the additional improvement $\kappa_B$.\par} Finally, for an exact-$k$ instance, $\Sigma=kW$ and $H_{\max}\leq W$. Writing $x=H_{\max}/(\Gamma_{\max}\Sigma)$ gives $x\leq1/k$. Since $h(t)\geq2t$, $\nu\geq1$, and $1/k\leq(\log2)/4$ for $k\geq6$,
\begin{equation}
    \sigma_B=\frac12h(x/8)\geq\frac{x}{8}
    \geq\frac{x^2}{2(\log2)\nu}=\gamma_B.
\end{equation}
{
The middle inequality is equivalent to $x\leq(\log2)\nu/4$, which follows from $x\leq1/k\leq(\log2)/4$ and $\nu\geq1$. This proves $\sigma_B\geq\gamma_B$ for weighted exact-$k$ instances with $k\geq6$. Finally, $nL_B=\Gamma_{\max}\sqrt\nu\Sigma/R=\Theta(1)$ under the theorem assumptions, so $b^{*}$ and every remaining factor in $\kappa_B$ are bounded above and below by positive constants for sufficiently large $n$. Hence $\kappa_B>0$ is independent of $n$, completing the claimed super-quadratic comparison.
\par}
\end{proof}

\section{Tilted Walks from Nonstationary Initial States}
\label{sec:warm-starts}
Quantum tilted walks can start from an application-specific state that is not stationary for the available mixer. This section proves a lower bound on the marked amplitude obtained from a normalized initial state with nonnegative amplitudes while retaining the original mixer $D$. The hypotheses measure the minimum fraction of the initial amplitude preserved pointwise by $D$ and the decrease of the objective under the transition kernel induced by $D$ and the initial state. Theorem~\ref{thm:warm-start-gain} assumes neither a log-Sobolev inequality nor a spectral-gap bound, although each application must separately control the normalization $E_b$.

Theorem~\ref{thm:warm-start-gain} complements Theorem~\ref{thm:normalized-amplitude-gain}. The former applies to initial states that need not be conditioned stationary states, while the latter can apply to threshold states whose one-step retention vanishes. After proving the new amplitude bound, we express its hypotheses for stationary distributions conditioned on a set, relate pointwise amplitude preservation to the top-eigenvalue gap of the mixer, and construct an example in which the initial stationary overlap and both relevant gaps are exponentially small. The example shows that the tilted-walk amplitude bound can remain useful when the analysis of the generalized short-path algorithm by Chakrabarti et al.~\cite{chakrabarti2025generalizedshortpathalgorithms} cannot certify efficient ground-state reflection.

\subsection{How the Mixer Acts on the Initial State}

The amplitude bound requires two one-step estimates. First, we compare the initial amplitudes with their image under $D$ to measure the smallest fraction retained at any point in the support. Second, we use the same comparison to define a transition kernel that describes how the objective values appearing in the path expansion change under a mixer step. This subsection defines both quantities and states the objective condition used in the theorem.

Let
\begin{equation}
    |\varphi\rangle
    =\sum_{x\in\mathcal X}\varphi(x)|x\rangle
\end{equation}
be a normalized state with $\varphi(x)\geq0$, and define its support by
\begin{equation}
    \Omega_\varphi=\{x:\varphi(x)>0\}.
\end{equation}
We compare $D\varphi$ with $\varphi$ pointwise on this support. Define
\begin{equation}
    \lambda_\varphi
    =\min_{x\in\Omega_\varphi}
      \frac{(D\varphi)(x)}{\varphi(x)}.
    \label{eq:warm-compatibility}
\end{equation}
Equivalently, $\lambda_\varphi$ is the largest constant such that $D\varphi\geq\lambda_\varphi\varphi$ on $\Omega_\varphi$. Since $D$ is entrywise nonnegative and has spectral norm one,
\begin{equation}
    0\leq\lambda_\varphi
    \leq\langle\varphi|D|\varphi\rangle
    \leq1.
\end{equation}
Thus $\lambda_\varphi$ is the minimum fraction of the initial amplitude retained under one application of the mixer. The stationary state satisfies $D\sqrt\pi=\sqrt\pi$ and hence $\lambda_{\sqrt\pi}=1$. We assume below that $\lambda_\varphi>0$.

For any function $h$ on $\Omega_\varphi$ and any $x\in\Omega_\varphi$, one has
\begin{equation}
    \sum_{y\in\Omega_\varphi}
       D_{x,y}\varphi(y)h(y)
    =(D\varphi)(x)(K_\varphi h)(x),
    \label{eq:warm-reweighted-identity}
\end{equation}
where
\begin{equation}
    K_\varphi(x,y)
    =\frac{D_{x,y}\varphi(y)}{(D\varphi)(x)},
    \qquad x\in\Omega_\varphi.
    \label{eq:warm-reweighted-kernel}
\end{equation}
For each $x\in\Omega_\varphi$, the entries in this row are nonnegative and sum to one. Moreover, $K_\varphi(x,y)=0$ when $y\notin\Omega_\varphi$. Thus $K_\varphi$ is a stochastic kernel on $\Omega_\varphi$. It depends on both $D$ and $|\varphi\rangle$ and need not equal the original Markov chain $P$.

Choose $L_\varphi\in[0,1]$ such that
\begin{equation}
    (K_\varphi G)(x)\geq(1-L_\varphi)G(x)
    \qquad\text{for every }x\in\Omega_\varphi.
    \label{eq:warm-score-regularity}
\end{equation}
This condition controls the decrease in the objective under one step of $K_\varphi$. For every integer $t\geq0$ and every $x\in\Omega_\varphi$,
\begin{equation}
    (K_\varphi^tG)(x)
    \geq(1-L_\varphi)^tG(x).
\end{equation}
The condition is imposed only on the support of the initial state.

\subsection{Amplitude Gain from a Nonnegative Initial State}

The one-step estimates above imply a product lower bound for every power of $D+bG$. We first prove this bound pointwise for each configuration. We then obtain lower bounds on the marked amplitude for both the single power $D_b^M$ and the average $C_{b,M}$.

\begin{theorem}[Tilted-walk gain from a nonnegative initial state]
\label{thm:warm-start-gain}
Assume that $\lambda_\varphi>0$ and that~\eqref{eq:warm-score-regularity} holds. Let $b\geq0$, let $M\geq0$ be an integer, and let $S\subseteq\Omega_\varphi$ be nonempty. Let $g\in[0,1]$ and suppose $G(x)\geq g$ for every $x\in S$. Then, for every integer $m\geq0$ and every $x\in\mathcal X$,
\begin{equation}
    \langle x|(D+bG)^m|\varphi\rangle
    \geq
    \varphi(x)
    \prod_{t=0}^{m-1}
         \left(\lambda_\varphi+b(1-L_\varphi)^tG(x)\right).
    \label{eq:warm-pointwise-gain}
\end{equation}
Consequently, if $\Pi_S=\sum_{x\in S}|x\rangle\!\langle x|$, then
\begin{equation}
    \left\|\Pi_SC_{b,M}|\varphi\rangle\right\|
    \geq
    \frac{\left\|\Pi_S|\varphi\rangle\right\|}{M+1}
    \prod_{t=0}^{M-1}
         \frac{\lambda_\varphi+b(1-L_\varphi)^tg}{E_b}.
    \label{eq:warm-marked-norm-gain}
\end{equation}
The same pointwise argument also implies
\begin{equation}
    \left\|\Pi_SD_b^M|\varphi\rangle\right\|
    \geq
    \left\|\Pi_S|\varphi\rangle\right\|
    \prod_{t=0}^{M-1}
         \frac{\lambda_\varphi+b(1-L_\varphi)^tg}{E_b}.
    \label{eq:warm-single-power-gain}
\end{equation}
\end{theorem}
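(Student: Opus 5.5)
The plan is to prove the pointwise bound~\eqref{eq:warm-pointwise-gain} first, in a strengthened form suited to induction, and then read off the marked-amplitude bounds from it.

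\textbf{Inductive hypothesis.} The natural hypothesis is
\begin{equation}
    (D+bG)^m\varphi\;\geq\;\varphi\cdot\Phi_m(G)
    \qquad\text{pointwise on }\Omega_\varphi,
\end{equation}
where $\Phi_m(u):=\prod_{t=0}^{m-1}\bigl(\lambda_\varphi+b(1-L_\varphi)^tu\bigr)$. Off $\Omega_\varphi$ the claimed bound is trivial: the right side vanishes, while the left side is nonnegative because every entry of $D+bG$ and of $\varphi$ is nonnegative.

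\textbf{Inductive step.} Write $q=1-L_\varphi$ and use
\begin{equation}
    (D+bG)^{m+1}\varphi=(D+bG)\,(D+bG)^m\varphi .
\end{equation}
\begin{itemize}[leftmargin=*]
    \item \emph{Diagonal part.} By the hypothesis, the $bG$ term contributes at least $bG(x)\varphi(x)\Phi_m(G(x))$.
    \item \emph{Mixer part.} Since $D$ is entrywise nonnegative, the $D$ term contributes at least $\sum_y D_{x,y}\varphi(y)\Phi_m(G(y))$. By~\eqref{eq:warm-reweighted-identity} this equals $(D\varphi)(x)\,(K_\varphi\Phi_m(G))(x)$.
    \item \emph{Push the kernel through $\Phi_m$.} $\Phi_m$ is a product of affine increasing functions with nonnegative coefficients, so it is a polynomial with nonnegative coefficients. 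It is therefore convex and nondecreasing on $[0,1]$. Jensen plus iterating~\eqref{eq:warm-score-regularity} only give $K_\varphi\Phi_m(G)\geq\Phi_m(K_\varphi G)\geq\Phi_m(qG)$. Combined with $(D\varphi)(x)\geq\lambda_\varphi\varphi(x)$, this yields the mixer contribution $\geq\lambda_\varphi\varphi(x)\Phi_m(qG(x))$.
\end{itemize}
Adding the two parts gives
\begin{equation}
    \varphi(x)\bigl[\lambda_\varphi\Phi_m(qG(x))+bG(x)\Phi_m(G(x))\bigr].
\end{equation}
This is not obviously $\geq\varphi(x)\Phi_{m+1}(G(x))$, because $\Phi_{m+1}(u)=\Phi_m(u)\,(\lambda_\varphi+bq^mu)$ uses the decay factor on the \emph{last} factor.

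\textbf{Fixing the order of factors.} The way out is to peel the operator on the right instead: $(D+bG)^{m+1}\varphi=(D+bG)^m\bigl[(D+bG)\varphi\bigr]$. This needs a hypothesis that holds for every nonnegative input, not just $\varphi$. So I would instead prove, by induction on $m$, the stronger statement
\begin{equation}
    (D+bG)^m\bigl(\varphi\,h(G)\bigr)\;\geq\;\varphi\,\Phi_m^{h}(G)
\end{equation}
for every nonnegative, nondecreasing, convex $h$. Here $\Phi_m^h$ is obtained by composing the one-step map
\begin{equation}
    h\;\mapsto\;\lambda_\varphi\,h(q\,\cdot)+b\,\cdot\,h(\cdot),
\end{equation}
which again preserves nonnegativity, monotonicity and convexity. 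Unrolling from $h\equiv1$ produces a sum over words in which the $t$-th $bG$ factor picks up $q^{(\#\text{ of }D\text{ steps before it})}$, exactly as in the proof of Theorem~\ref{thm:amplitude-gain}. The sum
\begin{equation}
    \sum_s b^s\lambda_\varphi^{m-s}K_{m,s}(L_\varphi)G^s
\end{equation}
is then compared with the claimed product $\prod_t(\lambda_\varphi+bq^tG)$ via the $q$-binomial identity underlying Lemma~\ref{lem:inversion-count}. The identity $\prod_{t=0}^{m-1}(1+q^tz)=\sum_s q^{s(s-1)/2}\binom{m}{s}_q z^s$, together with $K_{m,s}=\binom{m}{s}_q$, gives the needed coefficientwise inequality. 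I expect this bookkeeping of which factor receives which power of $q$ to be the main obstacle.

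\textbf{Consequences.} Given the pointwise bound, set $x\in S$ and use $G(x)\geq g$ with monotonicity in $G(x)$; dividing by $E_b^m$ gives
\begin{equation}
    \langle x|D_b^m|\varphi\rangle\geq\varphi(x)\prod_{t<m}\frac{\lambda_\varphi+bq^tg}{E_b}.
\end{equation}
Summing squares over $x\in S$ yields~\eqref{eq:warm-single-power-gain}. For the average $C_{b,M}$, all terms $D_b^m\varphi$ are entrywise nonnegative, so $\Pi_SC_{b,M}\varphi\geq\frac{1}{M+1}\Pi_SD_b^M\varphi$ entrywise. Taking norms gives~\eqref{eq:warm-marked-norm-gain}.
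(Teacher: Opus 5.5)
Your proof is correct, but the detour is unnecessary. Your first attempt is exactly the paper's induction, and the obstacle you flagged does not exist. Peel off the $t=0$ factor rather than the last one: reindexing gives $\Phi_{m+1}(u)=(\lambda_\varphi+bu)\,\Phi_m(qu)$. Since $q\le1$ and $\Phi_m$ is nondecreasing, $bG(x)\Phi_m(G(x))\ge bG(x)\Phi_m(qG(x))$. So the quantity $\lambda_\varphi\Phi_m(qG(x))+bG(x)\Phi_m(G(x))$ you reached is already at least $(\lambda_\varphi+bG(x))\Phi_m(qG(x))=\Phi_{m+1}(G(x))$. That one line is the paper's entire inductive step.

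Your replacement route also works. Write $\mathcal T h(u)=\lambda_\varphi h(qu)+bu\,h(u)$.
\begin{itemize}
\item The class of nonnegative, nondecreasing, convex functions on $[0,1]$ is $\mathcal T$-invariant.
\item The iterate is $\mathcal T^m1(u)=\sum_s\lambda_\varphi^{m-s}b^sK_{m,s}(L_\varphi)u^s$. The inversion generating function is the same whether you count $D$ steps before or after each $bG$, so the orientation bookkeeping you worried about is harmless.
\item The Gaussian binomial theorem then gives coefficientwise domination of $\Phi_m$, because $q^{s(s-1)/2}\le1$.
\end{itemize}
Quantifying over all $h$ is more than you need. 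Carrying $\mathcal T^m1$, itself a polynomial with nonnegative coefficients, through the left-peeled induction suffices.

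What your route buys is a stronger pointwise bound, $\langle x|(D+bG)^m|\varphi\rangle\ge\varphi(x)\sum_s\lambda_\varphi^{m-s}b^sK_{m,s}(L_\varphi)G(x)^s$. This is the exact analogue of Theorem~\ref{thm:amplitude-gain}, and it strictly improves on the product when $b>0$, $L_\varphi>0$, and $m\ge2$. Via the $K_{m,s}$ estimate in the proof of Theorem~\ref{thm:normalized-amplitude-gain}, it is at least $\bigl(\lambda_\varphi+bG(x)\tfrac{1-q^m}{mL_\varphi}\bigr)^m$, which is the AM--GM upper bound of the paper's product. The paper's route yields only the product form, but does so in two lines with no $q$-series identities.
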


\begin{proof}
Let $F_m=(D+bG)^m\varphi$ and define
\begin{equation}
    q=1-L_\varphi,
    \qquad
    \Phi_m(u)=\prod_{t=0}^{m-1}
      \left(\lambda_\varphi+bq^tu\right).
\end{equation}
The polynomial $\Phi_m$ has nonnegative coefficients and is therefore nonnegative, nondecreasing, and convex on $[0,1]$. We prove by induction that
\begin{equation}
    F_m(x)\geq\varphi(x)\Phi_m(G(x))
    \label{eq:warm-induction}
\end{equation}
for every $x$. The claim is an equality at $m=0$. Suppose it holds at $m$. For $x\in\Omega_\varphi$, positivity and Jensen's inequality imply the first two bounds below. Equation~\eqref{eq:warm-score-regularity}, monotonicity of $\Phi_m$, and $D\varphi\geq\lambda_\varphi\varphi$ imply the third:
\begin{align}
    (DF_m)(x)
    &\geq(D\varphi)(x)
       (K_\varphi\Phi_m(G))(x)\\
    &\geq(D\varphi)(x)
       \Phi_m((K_\varphi G)(x))\\
    &\geq\lambda_\varphi\varphi(x)
       \Phi_m(qG(x)).
\end{align}
The induction hypothesis, followed by monotonicity of $\Phi_m$ and $q\leq1$, also implies
\begin{equation}
    bG(x)F_m(x)
    \geq bG(x)\varphi(x)\Phi_m(G(x))
    \geq bG(x)\varphi(x)\Phi_m(qG(x)).
\end{equation}
Adding these inequalities yields
\begin{equation}
    F_{m+1}(x)
    \geq\varphi(x)
       (\lambda_\varphi+bG(x))\Phi_m(qG(x))
    =\varphi(x)\Phi_{m+1}(G(x)).
\end{equation}
For $x\notin\Omega_\varphi$, the right-hand side of Equation~\eqref{eq:warm-induction} is zero and the claim follows from positivity. This proves Equation~\eqref{eq:warm-pointwise-gain}.

Applying Equation~\eqref{eq:warm-pointwise-gain} with $m=M$ and $G(x)\geq g$ for $x\in S$, then squaring and summing over $S$, proves~\eqref{eq:warm-single-power-gain}.

All entries of $D+bG$ are nonnegative. For every $x\in S$, we may therefore discard the terms of $C_{b,M}$ with powers $0,\ldots,M-1$ and retain only its $M$-th term. Equation~\eqref{eq:warm-pointwise-gain} and the bound $G(x)\geq g$ then imply
\begin{equation}
    \langle x|C_{b,M}|\varphi\rangle
    \geq
    \frac{\varphi(x)}{M+1}
    \prod_{t=0}^{M-1}
    \frac{\lambda_\varphi+b(1-L_\varphi)^tg}{E_b}.
\end{equation}
Squaring this inequality and summing over $x\in S$ proves Equation~\eqref{eq:warm-marked-norm-gain}.
\end{proof}

An exponential gain follows when every factor in Equation~\eqref{eq:warm-marked-norm-gain} is bounded below by $1+\delta$ for a constant $\delta>0$. For example, fix constants $\ell,\tau,\delta>0$ and suppose that, for all sufficiently large $n$,
\begin{equation}
    L_\varphi\leq\frac{\ell}{n},
    \qquad
    M=\lfloor\tau n\rfloor,
    \qquad
    \frac{\lambda_\varphi+bge^{-2\ell\tau}}{E_b}
       \geq1+\delta,
    \label{eq:warm-positive-rate}
\end{equation}
Then
\begin{equation}
    \left\|\Pi_SC_{b,M}|\varphi\rangle\right\|
    \geq
    2^{\Gamma n-o(n)}
    \left\|\Pi_S|\varphi\rangle\right\|,
    \qquad
    \Gamma=\tau\log_2(1+\delta)>0.
    \label{eq:warm-exponential-rate}
\end{equation}
Indeed, $(1-L_\varphi)^t\geq e^{-2\ell\tau}$ for $0\leq t<M$ and all sufficiently large $n$. The product in~\eqref{eq:warm-marked-norm-gain} is therefore at least $(1+\delta)^M$, while $1/(M+1)=2^{-o(n)}$.
Condition~\eqref{eq:warm-positive-rate} is a sufficient positive-rate condition. Verifying it requires an upper bound on $E_b$, but no lower bound on the spectral gap of the tilted Hamiltonian. Equation~\eqref{eq:warm-single-power-gain} yields the same exponential rate for $D_b^M$ without the factor $1/(M+1)$.

If an application supplies a polynomial-size list of pairs $(b,M)$ and a common running-time bound sufficient for at least one pair that attains the required marked-amplitude lower bound, Section~\ref{sec:implementation-details} proves that testing all pairs adds only polynomial overhead.

\subsection{Conditioned Stationary States}

The quantities $\lambda_\varphi$ and $K_\varphi$ have a direct interpretation when the initial state is the coherent encoding of $\pi$ conditioned on a set $A$. The next proposition expresses them using the original transition matrix $P$. These formulas identify the one-step retention condition and the required bound on objective values under the conditioned transition.

\begin{proposition}[Conditioned stationary states]
\label{prop:conditioned-warm-start}
Let $A\subseteq\mathcal X$ be nonempty, define $\pi(A)=\sum_{x\in A}\pi(x)$, and let
\begin{equation}
    |A\rangle
    =\frac{1}{\sqrt{\pi(A)}}
      \sum_{x\in A}\sqrt{\pi(x)}\,|x\rangle.
\end{equation}
Let $\lambda_A$ and $K_A$ denote the quantities in~\eqref{eq:warm-compatibility} and~\eqref{eq:warm-reweighted-kernel} for the initial state $|A\rangle$. Assume $P(x,A)>0$ for every $x\in A$. Then, for every $x\in A$,
\begin{align}
    \frac{\langle x|D|A\rangle}{\langle x|A\rangle}
       &=P(x,A),\\
    K_A(x,y)
       &=\frac{P(x,y)}{P(x,A)}\mathbf1_{\{y\in A\}}.
\end{align}
Consequently,
\begin{equation}
    \lambda_A=\min_{x\in A}P(x,A).
    \label{eq:conditioned-warm-compatibility}
\end{equation}
\end{proposition}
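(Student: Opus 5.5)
This is a direct computation from the definitions, using the reversibility identity $D_{x,y}=\sqrt{\pi(x)/\pi(y)}\,P(x,y)$ and the explicit amplitudes $\varphi(y)=\langle y|A\rangle=\sqrt{\pi(y)/\pi(A)}\,\mathbf 1_{\{y\in A\}}$. We first compute $(D\varphi)(x)$ for $x\in A$, then read off the ratio and the kernel, and finally take the minimum.

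\textbf{Step 1: the image under $D$.} For any $x\in\mathcal X$,
\begin{equation}
    \langle x|D|A\rangle
    =\sum_{y\in A}\sqrt{\frac{\pi(x)}{\pi(y)}}P(x,y)\sqrt{\frac{\pi(y)}{\pi(A)}}
    =\sqrt{\frac{\pi(x)}{\pi(A)}}\,P(x,A).
\end{equation}
The square-root factors of $\pi(y)$ cancel term by term, exactly as in the telescoping step in the proof of Theorem~\ref{thm:amplitude-gain}.

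\textbf{Step 2: the ratio.} For $x\in A$ we have $\langle x|A\rangle=\sqrt{\pi(x)/\pi(A)}>0$. We assume throughout that $\pi(x)>0$, which holds since $\pi$ is the stationary distribution of a chain whose discriminant is defined. Dividing Step 1 by this amplitude gives $\langle x|D|A\rangle/\langle x|A\rangle=P(x,A)$. Since $\Omega_\varphi=A$, the definition~\eqref{eq:warm-compatibility} immediately yields $\lambda_A=\min_{x\in A}P(x,A)$.

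\textbf{Step 3: the kernel.} Substitute into~\eqref{eq:warm-reweighted-kernel}. The numerator is
\begin{equation}
    D_{x,y}\varphi(y)=\sqrt{\frac{\pi(x)}{\pi(A)}}\,P(x,y)\mathbf 1_{\{y\in A\}}.
\end{equation}
The denominator is $(D\varphi)(x)=\sqrt{\pi(x)/\pi(A)}\,P(x,A)$, which is positive by the hypothesis $P(x,A)>0$. The common prefactor cancels, giving $K_A(x,y)=P(x,y)\mathbf 1_{\{y\in A\}}/P(x,A)$.

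There is no genuine obstacle here. The only points needing care are that the hypothesis $P(x,A)>0$ is exactly what makes $K_A$ well defined, and that $\lambda_A>0$ follows automatically from it.
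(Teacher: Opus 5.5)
Your proposal is correct and follows the same route as the paper: compute $\langle x|D|A\rangle=\sqrt{\pi(x)/\pi(A)}\,P(x,A)$ using the reversibility form of $D_{x,y}$, then read the ratio, the kernel, and $\lambda_A$ off the definitions. Your explicit Step 3 and the remark that $P(x,A)>0$ makes $K_A$ well defined simply spell out what the paper compresses into ``the identities follow from the definitions.''
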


\begin{proof}
For $x\in A$, reversibility implies
\begin{align}
    \langle x|D|A\rangle
    &=\sum_{y\in A}
      \sqrt{\frac{\pi(x)}{\pi(y)}}P(x,y)
      \frac{\sqrt{\pi(y)}}{\sqrt{\pi(A)}}\\
    &=\frac{\sqrt{\pi(x)}}{\sqrt{\pi(A)}}P(x,A).
\end{align}
The identities follow from Equations~\eqref{eq:warm-compatibility} and \eqref{eq:warm-reweighted-kernel}.
\end{proof}

The proposition expresses the quantities in both hypotheses of Theorem~\ref{thm:warm-start-gain} using the original chain. The parameter $\lambda_A$ is the minimum probability of remaining in $A$ for one step, while $K_A$ is the transition kernel $P$ conditioned to remain in $A$ for that step. In particular, the objective condition~\eqref{eq:warm-score-regularity} becomes
\begin{equation}
    \frac{\sum_{y\in A}P(x,y)G(y)}{P(x,A)}
    \geq(1-L_A)G(x)
    \qquad\text{for every }x\in A.
    \label{eq:conditioned-warm-score-regularity}
\end{equation}
Thus the tilted walk can retain the original mixer while starting from $|A\rangle$. Applying the generalized short-path construction of Chakrabarti et al.~\cite{chakrabarti2025generalizedshortpathalgorithms} with $|A\rangle$ as its initial ground state would instead require a reversible chain with stationary distribution $\pi(\cdot\mid A)$, together with the spectral and regularity bounds required by that construction.

The two amplitude-gain theorems cover different cases. For the conditioned state $|A\rangle$, Theorem~\ref{thm:warm-start-gain} requires $P(x,A)>0$ throughout $A$ and is strongest when $\min_{x\in A}P(x,A)$ is close to one. Theorem~\ref{thm:normalized-amplitude-gain} can apply to a threshold state even when some configurations leave its support with probability one, because its proof controls paths that enter the threshold set rather than requiring uniform retention within it.

\subsection{Stationary Overlap and the Top-Eigenvalue Gap}

The next lemma shows that small stationary overlap and $\lambda_\varphi$ close to one can coexist only when the top-eigenvalue gap of $D$ is small.

\begin{lemma}[Stationary overlap and top-eigenvalue gap]
\label{lem:warm-gap-tradeoff}
Let $|\varphi\rangle=\sum_x\varphi(x)|x\rangle$ be normalized with $\varphi(x)\geq0$, and let $|\sqrt\pi\rangle=\sum_x\sqrt{\pi(x)}|x\rangle$. Suppose every eigenvalue of $D$ on $|\sqrt\pi\rangle^\perp$ is at most $1-\delta$ for some $\delta>0$. Then
\begin{equation}
    1-|\langle\sqrt\pi|\varphi\rangle|^2
    \leq\frac{1-\lambda_\varphi}{\delta}.
    \label{eq:warm-gap-tradeoff}
\end{equation}
\end{lemma}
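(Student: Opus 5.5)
The plan is to compare two bounds on the Rayleigh quotient $\langle\varphi|D|\varphi\rangle$: a lower bound coming from the pointwise inequality defining $\lambda_\varphi$, and an upper bound coming from the spectral decomposition of $D$ together with the assumed gap.

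\textbf{Lower bound.} Since $\varphi\geq0$ and $(D\varphi)(x)\geq\lambda_\varphi\varphi(x)$ on $\Omega_\varphi$, I would sum over the support:
\begin{equation}
    \langle\varphi|D|\varphi\rangle=\sum_{x\in\Omega_\varphi}\varphi(x)(D\varphi)(x)\geq\lambda_\varphi\sum_x\varphi(x)^2=\lambda_\varphi.
\end{equation}
This was already noted after~\eqref{eq:warm-compatibility}.

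\textbf{Upper bound.} Write $\alpha:=\langle\sqrt\pi|\varphi\rangle$ and decompose $|\varphi\rangle=\alpha|\sqrt\pi\rangle+|\varphi^\perp\rangle$ with $|\varphi^\perp\rangle\perp|\sqrt\pi\rangle$, so that $\||\varphi^\perp\rangle\|^2=1-|\alpha|^2$. Because $D$ is real symmetric and $|\sqrt\pi\rangle$ is an eigenvector with eigenvalue $1$, the orthogonal complement $|\sqrt\pi\rangle^\perp$ is invariant under $D$. The cross terms therefore vanish, giving
\begin{equation}
    \langle\varphi|D|\varphi\rangle=|\alpha|^2+\langle\varphi^\perp|D|\varphi^\perp\rangle\leq|\alpha|^2+(1-\delta)(1-|\alpha|^2)=1-\delta(1-|\alpha|^2).
\end{equation}
The hypothesis only bounds eigenvalues on the complement from above, and that is exactly the direction needed here. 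Negative eigenvalues cause no issue, since only an upper bound is used.

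\textbf{Combining.} Putting the two bounds together gives $\lambda_\varphi\leq1-\delta(1-|\alpha|^2)$, which rearranges to~\eqref{eq:warm-gap-tradeoff}.

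There is no real obstacle. The only point needing care is that the invariance of $|\sqrt\pi\rangle^\perp$ under $D$ makes the cross terms vanish; this follows from the symmetry of $D$.
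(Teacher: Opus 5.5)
Your proof is correct and follows the paper's argument. Both lower-bound $\langle\varphi|D|\varphi\rangle$ by $\lambda_\varphi$ using the pointwise inequality and the nonnegativity of $\varphi$, then upper-bound it by $1-\delta\bigl(1-|\langle\sqrt\pi|\varphi\rangle|^2\bigr)$ using the spectral assumption, and rearrange. The paper leaves implicit why the cross terms vanish; you spell it out correctly: $|\sqrt\pi\rangle^\perp$ is $D$-invariant because $D$ is symmetric and $|\sqrt\pi\rangle$ is an eigenvector.
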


\begin{proof}
The pointwise inequality $D\varphi\geq\lambda_\varphi\varphi$ on $\Omega_\varphi$ and the nonnegativity of $\varphi$ imply
\begin{equation}
    \langle\varphi|D|\varphi\rangle
    \geq\lambda_\varphi.
\end{equation}
Let $a=\langle\sqrt\pi|\varphi\rangle$. The spectral assumption also implies
\begin{equation}
    \langle\varphi|D|\varphi\rangle
    \leq |a|^2+(1-|a|^2)(1-\delta)
    =1-\delta(1-|a|^2).
\end{equation}
Combining the two inequalities and rearranging proves~\eqref{eq:warm-gap-tradeoff}.
\end{proof}

For example, suppose
\begin{equation}
    |\langle\sqrt\pi|\varphi\rangle|^2\leq\alpha<1,
    \qquad
    \lambda_\varphi\geq1-\varepsilon.
\end{equation}
Then Equation~\eqref{eq:warm-gap-tradeoff} implies
\begin{equation}
    \delta\leq\frac{\varepsilon}{1-\alpha}.
\end{equation}
Thus an initial state can have both small stationary overlap and little pointwise amplitude loss only when the top-eigenvalue gap is correspondingly small. This is a necessary condition, not a construction of such a state.

\subsection{A Concrete Comparison with the Generalized Short-Path Algorithm}

The following example realizes the regime identified by Lemma~\ref{lem:warm-gap-tradeoff}. The sector register creates an exponentially small top-eigenvalue gap and allows the initial state to remain concentrated in one sector. The work register carries a Hamming-weight objective that changes by only $O(1/n)$ under one lazy hypercube step. The work-register transitions will imply $L_{\varphi_n}=O(1/n)$ in~\eqref{eq:warm-score-regularity}, while the sector register will make both the spectral gap of the mixer and the spectral gap of the tilted Hamiltonian exponentially small.

Let
\begin{equation}
    \mathcal X_n
    =\{0,1\}^n_{\mathrm{sector}}
       \times\{0,1\}^n_{\mathrm{work}}.
\end{equation}
Write $|+\rangle=2^{-n/2}\sum_{z\in\{0,1\}^n}|z\rangle$. On the sector register, define
\begin{equation}
    R_n=(1-\epsilon_n)I+\epsilon_n|+\rangle\!\langle+|,
    \qquad
    \epsilon_n=2^{-n},
\end{equation}
so that a uniform resampling step occurs with probability $\epsilon_n$. On the work register, define the lazy hypercube walk
\begin{equation}
    D_{\mathrm{hc}}
    =\frac12I+\frac1{2n}\sum_{i=1}^nX_i.
\end{equation}
Both $R_n$ and $D_{\mathrm{hc}}$ are symmetric stochastic matrices. Set $\tau_n=1/n$ and combine the two walks as
\begin{equation}
    D_n
    =\tau_nR_n\otimes I
      +(1-\tau_n)I\otimes D_{\mathrm{hc}}.
    \label{eq:metastable-mixer}
\end{equation}
The matrix $D_n$ is therefore both the transition matrix and the discriminant of a reversible chain with uniform stationary distribution $\pi_n$. The spectral gaps associated with the sector and work registers are $\tau_n\epsilon_n$ and $(1-\tau_n)/n$, respectively. Hence the top-eigenvalue gap of $D_n$ is
\begin{equation}
    \delta_n=\tau_n\epsilon_n=\frac{2^{-n}}{n}.
\end{equation}

Instead of the stationary state, consider
\begin{equation}
    |\varphi_n\rangle
    =|0^n\rangle_{\mathrm{sector}}|+\rangle_{\mathrm{work}}.
\end{equation}
It is confined to one sector and uniform on the work register. Its squared overlap with the stationary state is
\begin{equation}
    |\langle\sqrt{\pi_n}|\varphi_n\rangle|^2=2^{-n}.
\end{equation}
Since $\langle0^n|R_n|0^n\rangle=1-\epsilon_n+\epsilon_n2^{-n}$, the ratio in~\eqref{eq:warm-compatibility} is the same at every point in the support and equals
\begin{equation}
    \lambda_{\varphi_n}
    =1-\tau_n\epsilon_n(1-2^{-n})
    =1-O(2^{-n}/n).
\end{equation}
Thus the initial state has exponentially small stationary overlap, while $\lambda_{\varphi_n}$ differs from one by only $O(2^{-n}/n)$.
In fact, the state saturates~\eqref{eq:warm-gap-tradeoff}:
\begin{equation}
    \frac{1-\lambda_{\varphi_n}}{\delta_n}
    =1-2^{-n}
    =1-|\langle\sqrt{\pi_n}|\varphi_n\rangle|^2.
    \label{eq:metastable-gap-saturation}
\end{equation}

Define the normalized work-register objective
\begin{equation}
    \widetilde G_n(w)=\frac12\left(1+\frac{|w|}{n}\right)
\end{equation}
and let $G_n=I\otimes\widetilde G_n$ be the objective on $\mathcal X_n$. This linear Hamming-weight objective changes by only $O(1/n)$ under one lazy hypercube step, which will yield $L_{\varphi_n}=O(1/n)$. The successful set is
\begin{equation}
    S_n=\{0^n\}_{\mathrm{sector}}
        \times\{w:|w|=3n/4\},
\end{equation}
where $4$ divides $n$.

\begin{proposition}[Exponential amplitude gain when the mixer and tilted Hamiltonian have exponentially small spectral gaps]
\label{prop:metastable-warm-separation}
For every sufficiently large $n$ divisible by $4$, let $b=1/2$ and $M=\lfloor n/20\rfloor$. There is a constant $\Gamma_0>0$ such that
\begin{equation}
    \left\|\Pi_{S_n}C_{b,M}|\varphi_n\rangle\right\|
    \geq
    2^{\Gamma_0n-o(n)}
    \left\|\Pi_{S_n}|\varphi_n\rangle\right\|.
    \label{eq:metastable-warm-gain}
\end{equation}
The spectral gap of the mixer $D_n$ is $2^{-n}/n$, and the spectral gap of the tilted Hamiltonian $D_n+bG_n$ is at most the same quantity. Thus the hypothesis that the spectral gap of the tilted Hamiltonian is inverse polynomial in the analysis of the generalized short-path algorithm by Chakrabarti et al.~\cite{chakrabarti2025generalizedshortpathalgorithms} fails, while the tilted-walk amplitude bound remains exponential.
\end{proposition}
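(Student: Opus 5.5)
The plan is to obtain~\eqref{eq:metastable-warm-gain} from Theorem~\ref{thm:warm-start-gain}, in the explicit positive-rate form~\eqref{eq:warm-positive-rate}. This needs four ingredients: $\lambda_{\varphi_n}$ (already computed above), a bound on $L_{\varphi_n}$, a lower bound $g$ for $G_n$ on $S_n$, and an explicit upper bound on $E_b$. The spectral-gap claims will then follow from the tensor-product structure of $D_n+bG_n$.

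\emph{Step 1: the kernel $K_{\varphi_n}$ and $L_{\varphi_n}$.} For $x=(0^n,w)$ in the support, the vector $D_n\varphi_n$ restricted to sector $0^n$ equals $\lambda_{\varphi_n}\varphi_n$, because $D_{\mathrm{hc}}|+\rangle=|+\rangle$. Hence $K_{\varphi_n}$ restricted to the occupied sector is
\begin{equation*}
K_{\varphi_n}\bigl((0^n,w),(0^n,w')\bigr)=\frac{\tau_n(1-\epsilon_n+\epsilon_n2^{-n})\,\mathbf 1_{\{w=w'\}}+(1-\tau_n)D_{\mathrm{hc}}(w,w')}{\lambda_{\varphi_n}}.
\end{equation*}
This is a convex combination of the identity and $D_{\mathrm{hc}}$. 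Since $G_n$ depends only on $w$, the identity part preserves $G$ exactly. For the lazy hypercube step, $|w|$ changes by $\pm1$ with total probability $1/2$, so $\widetilde G_n$ changes by at most $1/(2n)$ with probability at most $1/2$. This gives $D_{\mathrm{hc}}\widetilde G_n\geq \widetilde G_n-1/(4n)\geq(1-1/(2n))\widetilde G_n$, using $\widetilde G_n\geq1/2$. Therefore $L_{\varphi_n}\leq 1/(2n)$. On $S_n$ we have $g=\tfrac12(1+3/4)=7/8$. Also $\|\Pi_{S_n}\varphi_n\|>0$, so the bound is nontrivial.

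\emph{Step 2: an exact bound on $E_b$.} Write
\begin{equation*}
D_n+bG_n=\tau_nR_n\otimes I+I\otimes\bigl((1-\tau_n)D_{\mathrm{hc}}+b\widetilde G_n\bigr).
\end{equation*}
Since the two summands commute, $E_b=\tau_n+\lambda_{\max}((1-\tau_n)D_{\mathrm{hc}}+b\widetilde G_n)\leq \tau_n+\lambda_{\max}(D_{\mathrm{hc}}+b\widetilde G_n)$, where the inequality uses $D_{\mathrm{hc}}\succeq0$. The work operator is a sum of single-qubit terms plus a constant:
\begin{equation*}
D_{\mathrm{hc}}+b\widetilde G_n=\frac b2 I+\sum_{i=1}^n\left[\frac{1}{2n}(I+X_i)+\frac{b}{4n}(I-Z_i)\right].
\end{equation*}
Each bracket has top eigenvalue $\frac1{2n}\bigl(1+\frac b2+\sqrt{1+b^2/4}\bigr)$. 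At $b=1/2$ this yields
\begin{equation*}
E_b\leq \tfrac14+\tfrac12\bigl(\tfrac54+\sqrt{17/16}\bigr)+\tfrac1n\approx1.3904+\tfrac1n .
\end{equation*}

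\emph{Step 3: combining the bounds.} For $0\leq t<M=\lfloor n/20\rfloor$, $(1-L_{\varphi_n})^t\geq(1-1/(2n))^{n/20}\geq e^{-1/40}-o(1)$. Each factor in~\eqref{eq:warm-marked-norm-gain} is then at least
\begin{equation*}
\frac{1-O(2^{-n}/n)+\tfrac12\cdot\tfrac78 e^{-1/40}}{1.3904+1/n}\approx\frac{1.4267}{1.3904}>1+\delta
\end{equation*}
for a constant $\delta>0$. Arguing exactly as for~\eqref{eq:warm-exponential-rate} gives~\eqref{eq:warm-marked-norm-gain} with $\Gamma_0=\tfrac1{20}\log_2(1+\delta)$. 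This numerical margin is the main obstacle: it is only about $2.6\%$. I would therefore verify the single-qubit eigenvalue computation and the $e^{-1/40}$ decay factor carefully. If the margin failed, I would shrink $M$ to $\lfloor cn\rfloor$ with smaller $c$, since the decay factor tends to $1$ and $1+7/16>1.3904$.

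\emph{Step 4: spectral gaps.} The mixer gap $\delta_n=\tau_n\epsilon_n=2^{-n}/n$ was computed above. For the tilted operator, the commuting decomposition in Step~2 shows that its spectrum is $\{\tau_nr+\nu\}$, where $r$ ranges over $\operatorname{spec}(R_n)=\{1,1-\epsilon_n\}$ and $\nu$ over the spectrum of the work operator. Taking the top $\nu$ with $r=1$ and with $r=1-\epsilon_n$ exhibits two eigenvalues separated by exactly $\tau_n\epsilon_n$. Hence the gap of $D_n+bG_n$ is at most $2^{-n}/n$.
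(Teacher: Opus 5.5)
Your proposal is correct and follows the paper's proof essentially step for step. You write $K_{\varphi_n}$ as the convex combination $(1-a_n)I+a_nD_{\mathrm{hc}}$, bound $L_{\varphi_n}=O(1/n)$, diagonalize the independent one-qubit terms to get $E_{1/2}\le(7+\sqrt{17})/8+O(1/n)$, feed the resulting per-factor margin $\delta_0\approx0.026$ into Theorem~\ref{thm:warm-start-gain}, and read the gap bound off the tensor-product spectrum, exactly as the paper does. The differences are cosmetic: your $L_{\varphi_n}=1/(2n)$ is looser than the paper's $1/(4n)$, and you upper-bound $E_b$ via $D_{\mathrm{hc}}\succeq0$ rather than computing it exactly. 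Both choices give the same limiting margin, so your fallback of shrinking $M$ is never needed (and it would have changed the stated $M=\lfloor n/20\rfloor$).
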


\begin{proof}
Since $\lambda_{\varphi_n}\geq1-\tau_n$, define
\begin{equation}
    a_n:=\frac{1-\tau_n}{\lambda_{\varphi_n}}\leq1.
\end{equation}
After identifying the support of $|\varphi_n\rangle$ with the work register, the reweighted kernel is
\begin{equation}
    K_{\varphi_n}=(1-a_n)I+a_nD_{\mathrm{hc}}.
\end{equation}
For the objective on the work register, a direct calculation shows
\begin{equation}
    (D_{\mathrm{hc}}\widetilde G_n)(w)
    =\widetilde G_n(w)+\frac{1-2|w|/n}{4n}
    \geq\left(1-\frac1{4n}\right)\widetilde G_n(w).
\end{equation}
Consequently,
\begin{equation}
    (K_{\varphi_n}G_n)(s,w)
    \geq\left(1-\frac{a_n}{4n}\right)G_n(s,w)
    \geq\left(1-\frac1{4n}\right)G_n(s,w).
\end{equation}
Equation~\eqref{eq:warm-score-regularity} therefore holds with $L_{\varphi_n}=1/(4n)$. Moreover, $G_n=7/8$ throughout $S_n$.

The objective acts trivially on the sector register. On the work register,
\begin{equation}
    (1-\tau_n)D_{\mathrm{hc}}+b\widetilde G_n
    =\left(\frac{1-\tau_n}{2}+\frac{3b}{4}\right)I
     +\frac1n\sum_{i=1}^n
      \left(\frac{1-\tau_n}{2}X_i-\frac{b}{4}Z_i\right).
\end{equation}
The largest eigenvalue of $R_n$ is one. Diagonalizing the independent one-qubit terms above therefore yields
\begin{equation}
    E_b
    =\tau_n+\frac{1-\tau_n}{2}+\frac{3b}{4}
      +\frac12\sqrt{(1-\tau_n)^2+\frac{b^2}{4}}.
    \label{eq:metastable-normalization}
\end{equation}
For $b=1/2$, Equation~\eqref{eq:metastable-normalization} implies
\begin{equation}
    E_{1/2}=\frac{7+\sqrt{17}}{8}+O(1/n).
\end{equation}
Define
\begin{equation}
    \delta_0
    :=\frac{1+(7/16)e^{-1/40}}{(7+\sqrt{17})/8}-1>0.
\end{equation}
For $0\leq t<M$, we also have $(1-1/(4n))^t\geq e^{-1/40}$ for all sufficiently large $n$. Each factor in Equation~\eqref{eq:warm-marked-norm-gain} is therefore at least
\begin{equation}
    \frac{1+(7/16)e^{-1/40}-o(1)}{(7+\sqrt{17})/8+o(1)}
    =1+\delta_0-o(1).
\end{equation}
For all sufficiently large $n$, every factor is at least $1+\delta_0/2$. Since $M=\lfloor n/20\rfloor$, Theorem~\ref{thm:warm-start-gain} proves~\eqref{eq:metastable-warm-gain} with
\begin{equation}
    \Gamma_0
    =\frac1{20}\log_2\left(1+\frac{\delta_0}{2}\right)>0,
\end{equation}
after the inverse-polynomial factor $1/(M+1)$ is absorbed into the $2^{-o(n)}$ term.

It remains to bound the spectral gap of the tilted Hamiltonian. Write
\begin{equation}
    D_n+bG_n
    =\tau_nR_n\otimes I
        +I\otimes\bigl((1-\tau_n)D_{\mathrm{hc}}+b\widetilde G_n\bigr).
\end{equation}
Let $|u_b\rangle$ be a principal eigenvector of the work-register operator. Then $|+\rangle|u_b\rangle$ is a principal eigenvector of $D_n+bG_n$. If $|v\rangle$ is any sector state orthogonal to $|+\rangle$, then $R_n|v\rangle=(1-\epsilon_n)|v\rangle$, so $|v\rangle|u_b\rangle$ has eigenvalue $E_b-\tau_n\epsilon_n$. The gap between the two largest eigenvalues is consequently at most
\begin{equation}
    \tau_n\epsilon_n=\frac{2^{-n}}{n}.
\end{equation}
\end{proof}

We now compare the tilted operator in the proposition with the generalized short-path Hamiltonian. Set $H_n=-G_n$, so the optimum energy is $E^\star=-1$. Since $G_n\in[1/2,1]$, the filter with parameter $\eta=1/2$ satisfies
\begin{equation}
    g_{1/2}\left(\frac{H_n}{|E^\star|}\right)
    =g_{1/2}(-G_n)
    =I-2G_n.
\end{equation}
Writing $\beta$ for the generalized short-path tilt, its Hamiltonian is
\begin{equation}
    H_\beta^{\mathrm{sp}}
    =-D_n+\beta(I-2G_n).
\end{equation}
At $\beta=1/4$,
\begin{equation}
    D_n+\frac12G_n
    =-H_{1/4}^{\mathrm{sp}}+\frac14I.
    \label{eq:metastable-short-path-correspondence}
\end{equation}
Negation and addition of a scalar multiple of the identity preserve eigenvalue-gap magnitudes. The spectral gap of the tilted Hamiltonian $H_{1/4}^{\mathrm{sp}}$ is therefore at most $2^{-n}/n$.

The initial marked probability is
\begin{equation}
    \left\|\Pi_{S_n}|\varphi_n\rangle\right\|^2
    =\frac{\binom n{3n/4}}{2^n}.
\end{equation}
For the initial state $|\varphi_n\rangle$, Theorem~\ref{thm:warm-start-gain} certifies the exponential ratio in~\eqref{eq:metastable-warm-gain} using $M=\Theta(n)$. Equation~\eqref{eq:metastable-short-path-correspondence} identifies the corresponding generalized short-path Hamiltonian. Because both the spectral gap of the mixer and the spectral gap of the tilted Hamiltonian are exponentially small, the analysis by Chakrabarti et al.~\cite{chakrabarti2025generalizedshortpathalgorithms} cannot certify a polynomial-cost ground-state reflection. The example therefore compares the combined sufficient conditions of the two analyses; it does not prove that every generalized short-path implementation is slow. An element of $S_n$ can be written down directly by choosing the sector string $0^n$ and any work string of Hamming weight $3n/4$, so the proposition does not establish a quantum speedup for a hard optimization problem.

\section{Implementation Details}
\label{sec:implementation-details}
The goal of this section is to give implementation details about the algorithm by accounting for the errors introduced at each step of the implementation.
\subsection{Constructing the block encoding}
{
Fix an implementation precision $\epsilon_G>0$. We construct block encodings of $D$ and $G$ from the two access oracles in Equation~\eqref{eq:primitive-access-oracles}. For the discriminant, let
\begin{equation}
    U_D=U_P^\dagger\operatorname{SWAP}U_P.
\end{equation}
For every $x,y\in\mathcal X$,
\begin{equation}
    \langle x|\langle0|U_D|y\rangle|0\rangle
    =\sqrt{P(x,y)P(y,x)}
    =D_{x,y}.
\end{equation}
Thus $U_D$ is an exact unit-normalized block encoding of $D$ and uses one call each to $U_P$ and $U_P^\dagger$.

For the diagonal objective, query the exact oracle $O_f$ in Equation~\eqref{eq:primitive-access-oracles} and use reversible fixed-point arithmetic to compute
\begin{equation}
    G(x)=\frac12\left(1+\frac{f(x)+A_c}{R_c}\right)
\end{equation}
to an approximation $\widetilde G(x)$ satisfying
\begin{equation}
    |\widetilde G(x)-G(x)|\leq\epsilon_G.
\end{equation}
This computation and its inverse use a constant number of queries to $O_f$ and $O_f^\dagger$. A reversible controlled rotation then maps a signal qubit as
\begin{equation}
    |\widetilde G(x)\rangle|0\rangle
    \longmapsto
    |\widetilde G(x)\rangle
    \left(\widetilde G(x)|0\rangle
    +\sqrt{1-\widetilde G(x)^2}|1\rangle\right).
\end{equation}
Uncomputing the value register yields a $(1,p+1,O(\epsilon_G))$ block encoding of $G$. An LCU construction applied to the encodings of $D$ and $G$ then gives an $(1+b,a_D+p+O(1),b\epsilon_G)$ block encoding of $H_b=D+bG$.

Let $E_b^{\mathrm{ub}}\geq E_b$ be the known upper bound specified in Section~\ref{subsec:resolving-unknowns}, fix a slack parameter $\delta_{\mathrm{sv}}\in(0,1]$, and define
\begin{equation}
    \widehat E_b:=(1+\delta_{\mathrm{sv}})E_b^{\mathrm{ub}},
    \qquad
    D_b^{(\delta_{\mathrm{sv}})}
       :=\frac{D+bG}{\widehat E_b}.
    \label{eq:slack-normalized-Db}
\end{equation}
Then
\begin{equation}
    \bigl\|D_b^{(\delta_{\mathrm{sv}})}\bigr\|
    \leq\frac{1}{1+\delta_{\mathrm{sv}}}.
\end{equation}
The LCU circuit is an $(\alpha_b,a_D+p+O(1),b\epsilon_G/\widehat E_b)$ block encoding of $D_b^{(\delta_{\mathrm{sv}})}$, where
\begin{equation}
    \alpha_b:=\frac{1+b}{\widehat E_b}\leq2.
\end{equation}
Choose the arithmetic precision so that
\begin{equation}
    b\epsilon_G\leq\frac{\delta_{\mathrm{sv}}E_b^{\mathrm{ub}}}{4}.
\end{equation}
If $\alpha_b\leq1$, one additional controlled rotation attenuates the signal block by $\alpha_b$ and gives a unit-normalized block encoding. If $\alpha_b>1$, the singular values of the unamplified signal block are at most
\begin{equation}
    \frac{E_b+b\epsilon_G}{1+b}
    \leq
    \frac{(1+\delta_{\mathrm{sv}}/4)E_b^{\mathrm{ub}}}{1+b}
    =\frac{1+\delta_{\mathrm{sv}}/4}
      {(1+\delta_{\mathrm{sv}})\alpha_b}.
\end{equation}
The displayed bound leaves a relative margin $\Theta(\delta_{\mathrm{sv}})$. Uniform singular-value amplification produces a unit-normalized block encoding with degree
\begin{equation}
    d_{\mathrm{sv}}
    =O\!\left(
       \frac{\alpha_b}{\delta_{\mathrm{sv}}}
       \log\frac{\alpha_b}{\epsilon_b}
    \right)
    =O\!\left(
       \frac{1}{\delta_{\mathrm{sv}}}
       \log\frac{1}{\epsilon_b}
    \right).
    \label{eq:sv-amplification-degree}
\end{equation}
Choose the polynomial-approximation and remaining arithmetic errors so that their total is at most $\epsilon_b$. Robust singular-value transformation then gives a Hermitian $(1,a_b,\epsilon_b)$ block encoding $U_b$ of $D_b^{(\delta_{\mathrm{sv}})}$ using
\begin{equation}
    Q_b(\epsilon_b)
    =O\!\left(
       \frac{1}{\delta_{\mathrm{sv}}}
       \log\frac{1}{\epsilon_b}
    \right)
    \label{eq:Db-block-query-cost}
\end{equation}
calls to the LCU block encoding and its adjoint. Controlled versions and adjoints have the same asymptotic query cost~\cite{gilyen2019qsvt}.

For the remainder of this section, $D_b$ denotes the implemented contraction $D_b^{(\delta_{\mathrm{sv}})}$, and $C_{b,M}$ denotes the corresponding average of its powers. The comparison with the analytically normalized operator is given in Section~\ref{subsec:resolving-unknowns}.\par}

We start with the implementation of the block-encoding unitary $U_{C}$ for the operator $C_{b,M}$. In this subsection, we construct $U_C$ from a block encoding $U_b$ of $D_b$ by applying the polynomial
\begin{equation}
    p_M(x)=\frac{1}{M+1}\sum_{m=0}^{M}x^m,
    \label{eq:implementation-cbm}
\end{equation}
to the operator $D_b$ since $C_{b,M}=p_M(D_b)$. We recall that $(\alpha,a,\epsilon)$ block encoding of an operator $A$ is a unitary $U_A$ such that
\begin{equation}
    \left\|A-\alpha
    (\langle0^a|\otimes I)U_A(|0^a\rangle\otimes I)\right\|
    \leq\epsilon.
\end{equation}
{
For the fixed value of $b$ considered in this section, $D_b$ denotes the slack-normalized contraction in~\eqref{eq:slack-normalized-Db}; in particular, $\|D_b\|\leq(1+\delta_{\mathrm{sv}})^{-1}<1$. We use the Hermitian $(1,a_b,\epsilon_b)$ block encoding $U_b$ constructed above, meaning that
\begin{equation}
    \widetilde D_b
    =(\langle0^{a_b}|\otimes I)U_b(|0^{a_b}\rangle\otimes I)
    \label{eq:implementation-normalized-access}
\end{equation}
is Hermitian and satisfies $\|\widetilde D_b-D_b\|\leq\epsilon_b$. Since $U_b$ is unitary, we also have $\|\widetilde D_b\|\leq1$. We also assume that a controlled version of $U_b$ is available.\par}
Let $Q_b(\epsilon_b)$ denote the query cost of $U_b$. It satisfies the bound in~\eqref{eq:Db-block-query-cost}. We use the following consequence of quantum singular value transformation: given a block encoding of a Hermitian operator $A$ satisfying $\|A\|\leq1$ and a real polynomial $q$ containing only even powers or only odd powers and satisfying $|q(x)|\leq1$ for $x\in[-1,1]$, one can construct a block encoding of $q(A)$ using $O(\deg q)$ calls to the block encoding of $A$ and its inverse~\cite{gilyen2019qsvt}.

\begin{theorem}[Block encoding of $C_{b,M}$]
\label{thm:block-encoding-cbm}
Let $U_b$ be a $(1,a_b,\epsilon_b)$ block encoding of $D_b$. Given access to $U_b$ and $U_b^{\dagger}$, for every integer $M\geq 1$, there exists an algorithm that implements a  $(1,a_b+O(1),\epsilon_b M)$ block encoding of $C_{b,M}$, where the construction uses $O(M)$ calls to $U_b$ and $U_b^\dagger$. 
\end{theorem}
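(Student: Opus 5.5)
The plan is to realize $C_{b,M}=p_M(D_b)$ through quantum singular value transformation, after splitting $p_M$ into its even and odd parts so that the parity requirement of QSVT is met. Write $p_M=p_M^{\mathrm{even}}+p_M^{\mathrm{odd}}$, where
\begin{equation}
    p_M^{\mathrm{even}}(x)=\frac{1}{M+1}\sum_{m\ \mathrm{even}}x^m,
    \qquad
    p_M^{\mathrm{odd}}(x)=\frac{1}{M+1}\sum_{m\ \mathrm{odd}}x^m.
\end{equation}
On $[-1,1]$ each part is a real polynomial of definite parity and degree at most $M$. Each is also bounded by $\tfrac{1}{M+1}\lceil (M+1)/2\rceil\le 1$, since the triangle inequality applies term by term. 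The QSVT consequence quoted just before the theorem therefore gives block encodings of $p_M^{\mathrm{even}}(\widetilde D_b)$ and $p_M^{\mathrm{odd}}(\widetilde D_b)$. Each uses $O(M)$ calls to $U_b$ and $U_b^\dagger$, and here $\widetilde D_b$ is the Hermitian operator actually encoded by $U_b$, with $\|\widetilde D_b\|\le 1$.

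The two pieces are then combined by a linear combination of unitaries with one extra ancilla qubit prepared in $|+\rangle$. This yields a $(2,a_b+O(1),0)$ block encoding of $p_M(\widetilde D_b)$, i.e. an encoding exact up to the factor $2$. A cheaper alternative is a single QSVT for the indefinite-parity polynomial, using the standard trick of applying one extra ancilla qubit to combine the even and odd parts. I would check whether the factor $2$ can be removed in one of two ways. First, $|p_M|\le 1$ on $[-1,1]$ holds directly, since $|p_M(x)|\le \frac{1}{M+1}\sum_m|x|^m\le 1$. Second, one round of oblivious amplitude amplification can be applied to the LCU. Failing both, I would simply absorb the factor by applying QSVT to $p_M/2$ and doubling, or accept subnormalization $2$ and reinterpret the statement. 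The main obstacle is this normalization bookkeeping, that is, getting genuinely $\alpha=1$. My first attempt is the generalized QSVT for polynomials without definite parity when the polynomial is bounded by $1$, which yields the unit-normalized encoding directly with $O(M)$ queries.

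Finally the error term. It remains to bound $\|p_M(\widetilde D_b)-p_M(D_b)\|$, where both operators have norm at most $1$ and differ by at most $\epsilon_b$. For each power, the telescoping identity
\begin{equation}
    \widetilde D_b^m-D_b^m=\sum_{j=0}^{m-1}\widetilde D_b^{\,j}(\widetilde D_b-D_b)D_b^{m-1-j}
\end{equation}
gives $\|\widetilde D_b^m-D_b^m\|\le m\epsilon_b$. Averaging over $m=0,\dots,M$ then gives
\begin{equation}
    \|p_M(\widetilde D_b)-p_M(D_b)\|\le \frac{1}{M+1}\sum_{m=0}^{M} m\epsilon_b=\frac{M\epsilon_b}{2}\le M\epsilon_b.
\end{equation}
Combining this with the exact encoding of $p_M(\widetilde D_b)$ gives the claimed $(1,a_b+O(1),\epsilon_bM)$ block encoding, with $O(M)$ total calls to $U_b$ and $U_b^\dagger$.
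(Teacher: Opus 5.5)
Your telescoping error bound $\|p_M(\widetilde D_b)-p_M(D_b)\|\le M\epsilon_b/2$ and your $O(M)$ query count coincide with the paper's. However, the step you call the main obstacle is left open, and it is exactly the content of the claim $\alpha=1$. The construction you actually specify is an LCU of $p_M^{\mathrm{even}}(\widetilde D_b)$ and $p_M^{\mathrm{odd}}(\widetilde D_b)$ with the ancilla in $|+\rangle$. It encodes $\tfrac12 p_M(\widetilde D_b)$, so it is only a $(2,a_b+O(1),\cdot)$ encoding. None of your listed remedies closes this:
\begin{itemize}
\item Applying QSVT to $p_M/2$ and ``doubling'' is circular, since doubling the block is the very thing you cannot do for free.
\item Oblivious amplitude amplification needs the encoded operator to be (proportional to) a unitary, which $p_M(\widetilde D_b)$ is not. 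Uniform singular-value amplification would need a spectral margin the theorem does not assume.
\item Reinterpreting the statement does not prove it.
\item The ``generalized QSVT for polynomials without definite parity'' is not the QSVT consequence quoted before the theorem. The standard arbitrary-parity eigenvalue transformation of Gily\'en et al.\ requires $|P|\le 1/2$ on $[-1,1]$, which reproduces precisely the factor $2$. A unit-normalized indefinite-parity construction needs further machinery (for instance generalized QSP on a qubitized walk), which you neither state nor justify.
\end{itemize}
Your final sentence therefore invokes an ``exact encoding of $p_M(\widetilde D_b)$'' with unit normalization that has not been constructed.

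The missing idea is elementary and already sits in your bound on the parity parts. Let $N_0$ and $N_1$ count the even and odd exponents in $\{0,\dots,M\}$. Then $|p_M^{\mathrm{even}}|\le N_0/(M+1)$ and $|p_M^{\mathrm{odd}}|\le N_1/(M+1)$ on $[-1,1]$, and these bounds sum to $1$, not $2$. Set $\alpha_s=N_s/(M+1)$, $q_0=p_M^{\mathrm{even}}/\alpha_0$ and $q_1=p_M^{\mathrm{odd}}/\alpha_1$. Each $q_s$ is an average of $N_s$ monomials of one parity, so $|q_s|\le 1$ on $[-1,1]$, and definite-parity QSVT implements $q_s(\widetilde D_b)$ with $O(M)$ calls. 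Then prepare the extra ancilla in $\sqrt{\alpha_0}|0\rangle+\sqrt{\alpha_1}|1\rangle$ rather than $|+\rangle$. Since $p_M=\alpha_0q_0+\alpha_1q_1$ is a convex combination, the LCU block is exactly $p_M(\widetilde D_b)$ with subnormalization $\alpha_0+\alpha_1=1$. This is the paper's construction, and combined with your error estimate it completes the proof.
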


\begin{proof}
For $x\in[-1,1]$, it holds that 
\begin{equation}
    |p_M(x)|
    \leq\frac{1}{M+1}\sum_{m=0}^{M}|x|^m
    \leq1.
\end{equation}
Thus $p_M$ is a degree-$M$ real polynomial bounded by $1$ on the spectrum of $\widetilde D_b$. Since $D_b$ is Hermitian, singular value transformation and eigenvalue transformation coincide.  Quantum singular transformation in~\cite{gilyen2019qsvt} requires a polynomial containing only even powers or only odd powers, so we separate these two parts. For $s\in\{0,1\}$, let $N_s$ be the number of integers $m\in\{0,\ldots,M\}$ with $m\equiv s\pmod 2$, and define
\begin{equation}
    q_s(x)=\frac{1}{N_s}
    \sum_{\substack{0\leq m\leq M\\m\equiv s\ ({\rm mod}\ 2)}}x^m,
    \qquad
    \alpha_s=\frac{N_s}{M+1}.
    \label{eq:implementation-parity-decomposition}
\end{equation}
Then $q_0$ is even, $q_1$ is odd, $|q_s(x)|\leq1$ for $x\in[-1,1]$, and
\begin{equation}
    p_M(x)=\alpha_0q_0(x)+\alpha_1q_1(x),
    \qquad \alpha_0+\alpha_1=1.
\end{equation}
Apply polynomial eigenvalue transformation separately to $q_0$ and $q_1$, obtaining block encodings of $q_0(\widetilde D_b)$ and $q_1(\widetilde D_b)$. Prepare one ancilla qubit in $\sqrt{\alpha_0}|0\rangle+\sqrt{\alpha_1}|1\rangle$, apply the first block encoding when the ancilla is $|0\rangle$ and the second when it is $|1\rangle$, and then reverse the ancilla preparation. The encoded operator is
\begin{equation}
    \alpha_0q_0(\widetilde D_b)+\alpha_1q_1(\widetilde D_b)
    =p_M(\widetilde D_b).
\end{equation}
This uses one additional ancilla and $O(M)$ calls to $U_b$ and $U_b^\dagger$. Choose the two polynomial transformations so that their combined operator-norm error is at most $\frac{M\epsilon_b }{2}$.

It remains to compare $p_M(\widetilde D_b)$ with $p_M(D_b)$. Since $\|\widetilde D_b\|\leq1$, {$\|D_b\|\leq1$}, and $\|\widetilde D_b-D_b\|\leq\epsilon_b$, for every $m\geq1$,
\begin{align}
    \|\widetilde D_b^m-D_b^m\|
    &\leq\sum_{\ell=0}^{m-1}
      \|\widetilde D_b^{m-1-\ell}
      (\widetilde D_b-D_b)D_b^\ell\|\\
    &\leq m\epsilon_b.
\end{align}
Consequently,
\begin{align}
    \|p_M(\widetilde D_b)-p_M(D_b)\|
    &\leq\frac{1}{M+1}\sum_{m=0}^M m\epsilon_b
      =\frac{M\epsilon_b}{2}.
\end{align}
Adding the two sources of error proves the theorem.
\end{proof}

\subsection{Implementation of the Quantum Conditioning-and-Search}

Apply amplitude amplification to the uniform state $H^{\otimes n}|0^n\rangle$, marking the basis states in $T_\eta$. Let $U_{T_\eta}$ denote the resulting state-preparation unitary. Then
\begin{equation}
    U_{T_\eta}|0^n\rangle=|T_\eta\rangle,
\end{equation}
and $U_{T_\eta}^\dagger$ is implemented by reversing the same circuit. The number of queries to $O_f$ required to implement $U_{T_\eta}$ is
\begin{equation}
    Q_{T_\eta}=O\!\left(\frac{2^{n/2}}{|T_\eta|^{1/2}}\right)
\end{equation}
up to logarithmic factors. Let $U_C$ be the $(1,a_C,\epsilon_C)$ block encoding of $C_{b,M}$ from Theorem~\ref{thm:block-encoding-cbm}. Define
\begin{equation}
    \widehat C_{b,M}
    =(\langle0^{a_C}|\otimes I)U_C(|0^{a_C}\rangle\otimes I).
\end{equation}
By the definition of a block encoding and Theorem~\ref{thm:block-encoding-cbm} we have
\begin{equation}
    \|\widehat C_{b,M}-C_{b,M}\|\leq\epsilon_C = M\epsilon_b.
    \label{eq:implementation-cbm-error}
\end{equation}
First, apply $U_C$ to $|0^{a_C}\rangle|T_\eta\rangle$. The component for which the $a_C$ block-encoding ancillas are all zero is $|0^{a_C}\rangle\widehat C_{b,M}|T_\eta\rangle$.

Next, for each computational-basis state $|x\rangle$, we coherently compute
\begin{equation}
    g_r(x)=\max_{z:d_H(x,z)\leq r}G(z).
\end{equation}
For the restricted-ball application in Section~\ref{subsec:maxkcsp-application}, this maximization and the cardinality estimates below are instead taken over $B_{L_{\mathrm{light}}}(x,r)$; the implementation is otherwise unchanged. By definition of $S_r$, we have $g_r(x)=1$ for every $x\in S_r$. If $\Pi_{S_r}$ projects onto the span of the computational-basis states inside $S_r$, the amplitude on states satisfying the conditions that i) the $a_C$ block-encoding ancillas are all zero and ii) $g_r(x)=1$ is at least
\begin{equation}
    a_S=\|\Pi_{S_r}\widehat C_{b,M}|T_\eta\rangle\|
    \geq|\langle S_r|\widehat C_{b,M}|T_\eta\rangle|.
    \label{eq:implementation-good-amplitude}
\end{equation}
Since $|S_r\rangle$ and $|T_\eta\rangle$ are normalized, Equation \eqref{eq:implementation-cbm-error} implies
\begin{equation}
    |\langle S_r|\widehat C_{b,M}|T_\eta\rangle|
    \geq |\langle S_r|C_{b,M}|T_\eta\rangle|-\epsilon_C.
    \label{eq:implementation-amplitude-error}
\end{equation}

Let $a>0$ satisfy
\begin{equation}
    a\leq |\langle S_r|C_{b,M}|T_\eta\rangle|-\epsilon_C.
    \label{eq:implementation-amplitude-lower-bound}
\end{equation}
Fixed-point amplitude amplification applies to the unitary that prepares $|T_\eta\rangle$, applies $U_C$, and computes $g_r(x)$. It produces a state satisfying the two marking conditions with failure probability at most $\delta_{\mathrm{alg}}$ after
\begin{equation}
    O\!\left(a^{-1}\log\frac{1}{\delta_{\mathrm{alg}}}\right)
\end{equation}
iterations.

A radius-$r$ Hamming ball contains $|S_r|$ points. Quantum maximum finding computes $g_r(x)$ using
\begin{equation}
    O\!\left(
       |S_r|^{1/2}
       \log\frac{1}{\epsilon_{\mathrm{mark}}}
    \right)
\end{equation}
queries to $O_f$, where $\epsilon_{\mathrm{mark}}$ bounds the operator-norm difference between the implemented computation of $g_r(x)$ and an exact reversible computation of $g_r(x)$. By Theorem~\ref{thm:block-encoding-cbm}, applying $U_C$ costs $O(MQ_b(\epsilon_b))$ queries. The total query cost is therefore
\begin{equation}
    O\!\left(
       a^{-1}\log\frac{1}{\delta_{\mathrm{alg}}}
       \left(
          Q_{T_\eta}+M Q_b(\epsilon_b)
             +|S_r|^{1/2}
             \log\frac{1}{\epsilon_{\mathrm{mark}}}
       \right)
    \right).
    \label{eq:implementation-total-cost}
\end{equation}
After measuring a marked center $x$, one final application of quantum maximum finding to its radius-$r$ Hamming ball returns an optimum. This final cost is no larger than the corresponding term in \eqref{eq:implementation-total-cost}.

Finally, let $a_0$ denote the lower bound on $|\langle S_r|C_{b,M}|T_\eta\rangle|$ in \eqref{eq:implementation-amplitude-lower-bound}. Choose $\epsilon_C\leq a_0/2$. Equation \eqref{eq:implementation-amplitude-lower-bound} then allows the choice
\begin{equation}
    a=\frac{a_0}{2}.
\end{equation}
Preparing $|T_\eta\rangle$ from the uniform state uses $Q_{T_\eta}=\widetilde{O}(2^{n/2}/|T_\eta|^{1/2})$ queries to $O_f$. 

Before establishing the final runtime, we need to fix the error parameters. The error $M\epsilon_b$ includes both errors in the proof of Theorem~\ref{thm:block-encoding-cbm}: the two polynomial transformations have combined error at most $M\epsilon_b/2$, and replacing $D_b$ by $\widetilde D_b$ contributes at most $M\epsilon_b/2$.

Fix a target failure probability $\delta_{\mathrm{alg}}\in(0,1/2)$. Run fixed-point amplitude amplification with ideal failure probability at most $\delta_{\mathrm{alg}}/2$. Let $K$ be the total number of calls to the computation of $g_r(x)$ and its inverse. Since the marked amplitude is at least $a_0/2$,
\begin{equation}
    K=O\!\left(
        a_0^{-1}\log\frac{1}{\delta_{\mathrm{alg}}}
    \right).
\end{equation}
Choose
\begin{equation}
    \epsilon_{\mathrm{mark}}
    \leq\frac{\delta_{\mathrm{alg}}}{4K}.
\end{equation}
The triangle inequality then bounds the error accumulated over these calls by $K\epsilon_{\mathrm{mark}}\leq\delta_{\mathrm{alg}}/4$. Hence the total failure probability is at most $\delta_{\mathrm{alg}}$.

Assume that $M=\operatorname{poly}(n)$, $a_0\geq2^{-\operatorname{poly}(n)}$, and $\delta_{\mathrm{alg}}\geq2^{-\operatorname{poly}(n)}$. Then $\log(1/\epsilon_b)$ and $\log(1/\epsilon_{\mathrm{mark}})$ are polynomial in $n$. The two degree-$M$ polynomial transformations use $O(M)$ calls to $U_b$ and $U_b^\dagger$. If their single-qubit rotations are synthesized over a fixed gate set, accuracy $O(\epsilon_b)$ for each rotation gives combined operator-norm error $O(M\epsilon_b)$ and requires $O(M\log(1/\epsilon_b))$ elementary gates. Together with the assumed bound on $Q_b(\epsilon_b)$ and the logarithmic dependence of quantum maximum finding on $1/\epsilon_{\mathrm{mark}}$, these precision choices therefore add only polynomial overhead.

More generally, let $\mathcal P$ be a polynomial-size set of candidate pairs $(b,M)$ supplied by the application analysis, and let $T_{\max}$ be a common running-time bound. Run the algorithm separately for every pair for at most $T_{\max}$ steps, evaluate the objective value of every returned solution, and return the best one. Suppose one pair $(b^\star,M^\star)\in\mathcal P$ satisfies the marked-amplitude lower bound needed to succeed within $T_{\max}$. Repeating each candidate run $O(\bigl(\log(1/\delta_{\mathrm{alg}})\bigr))$ times makes the probability that the certified pair never returns an optimum at most $\delta_{\mathrm{alg}}$. No success guarantee is required for the other pairs because they terminate at the common cutoff. Whenever the certified pair returns an optimum, selecting the largest objective value among all returned solutions also returns an optimum. The total runtime is $O(|\mathcal P|T_{\max}\log(1/\delta_{\mathrm{alg}}))$. In the application theorems proved in this paper, $b^\star$ and $M^\star$ are given explicitly, so this enumeration is optional.

Substituting these expressions into \eqref{eq:implementation-total-cost} gives \eqref{eq:optimization-runtime}.

\subsection{Resolving unknown parameters}
\label{subsec:resolving-unknowns}

{
The implementation does not require $H_{\max}$ or $E_b$ as input. We remove the two quantities separately.

\paragraph{Replacing $E_b$ by a known normalization.}
Let $M_{\max}=\operatorname{poly}(n)$ be a common upper bound on every power used by the application algorithms, and set
\begin{equation}
    \delta_{\mathrm{sv}}:=\frac{1}{M_{\max}+1}.
    \label{eq:sv-slack-choice}
\end{equation}
Let $\overline\mu_G$ and $\overline{\mathcal L}_2(G)$ be known upper bounds on $\mathbb E_\pi[G]$ and $\mathcal L_2(G)$, respectively, and define
\begin{equation}
    E_b^{\mathrm{ub}}
    :=1+b\overline\mu_G
      +\frac{b^2\overline{\mathcal L}_2(G)^2}{8\omega^2}.
    \label{eq:known-energy-normalization}
\end{equation}
Theorem~\ref{thm:energy-bound} implies $E_b\leq E_b^{\mathrm{ub}}$. Define the slack normalization
\begin{equation}
    \widehat E_b:=(1+\delta_{\mathrm{sv}})E_b^{\mathrm{ub}}.
\end{equation}
The implemented operators are
\begin{equation}
    D_b^{(\delta_{\mathrm{sv}})}
       :=\frac{D+bG}{\widehat E_b},
    \qquad
    C_{b,M}^{(\delta_{\mathrm{sv}})}
       :=\frac{1}{M+1}\sum_{m=0}^M
          (D_b^{(\delta_{\mathrm{sv}})})^m
    \label{eq:known-normalized-operators}
\end{equation}
and satisfy $\|D_b^{(\delta_{\mathrm{sv}})}\| \leq(1+\delta_{\mathrm{sv}})^{-1}$.

To compare the implemented amplitude with the analytical bound, define
\begin{equation}
    D_b^{\mathrm{ub}}:=\frac{D+bG}{E_b^{\mathrm{ub}}},
    \qquad
    C_{b,M}^{\mathrm{ub}}
       :=\frac{1}{M+1}\sum_{m=0}^M(D_b^{\mathrm{ub}})^m.
\end{equation}
The amplitude proof already replaces every denominator $E_b^m$ by $(E_b^{\mathrm{ub}})^m$, so it applies directly to $C_{b,M}^{\mathrm{ub}}$. Moreover,
\begin{equation}
    C_{b,M}^{(\delta_{\mathrm{sv}})}
    =\frac{1}{M+1}\sum_{m=0}^M
       (1+\delta_{\mathrm{sv}})^{-m}(D_b^{\mathrm{ub}})^m.
\end{equation}
The matrices $D+bG$ and $(D_b^{\mathrm{ub}})^m$ are entrywise nonnegative, and the set states have nonnegative amplitudes. Consequently,
\begin{equation}
    \left\langle S\left|C_{b,M}^{(\delta_{\mathrm{sv}})}\right|T\right\rangle
    \geq
    (1+\delta_{\mathrm{sv}})^{-M}
    \left\langle S\left|C_{b,M}^{\mathrm{ub}}\right|T\right\rangle.
    \label{eq:slack-amplitude-loss}
\end{equation}
For every $M\leq M_{\max}$, the choice \eqref{eq:sv-slack-choice} gives
\begin{equation}
    (1+\delta_{\mathrm{sv}})^{-M}
    =\exp\!\left(-M\log(1+\delta_{\mathrm{sv}})\right)
    \geq\exp(-M\delta_{\mathrm{sv}})
    \geq e^{-1}.
    \label{eq:slack-amplitude-constant}
\end{equation}
Thus the slack normalization decreases the marked amplitude by at most a constant factor and leaves every exponential runtime bound unchanged. More generally, any inverse-polynomial slack satisfying $M\delta_{\mathrm{sv}}=o(n)$ incurs only a factor $2^{-o(n)}$.

Finally, substituting~\eqref{eq:sv-slack-choice} into \eqref{eq:sv-amplification-degree} gives
\begin{equation}
    d_{\mathrm{sv}}
    =O\!\left(
       M_{\max}\log\frac{1}{\epsilon_b}
    \right)=\operatorname{poly}(n),
\end{equation}
because $M_{\max}=\operatorname{poly}(n)$ and $\log(1/\epsilon_b)=\operatorname{poly}(n)$. Hence the spectral slack both justifies uniform singular-value amplification and preserves the claimed polynomial implementation overhead.

For both applications, all terms in \eqref{eq:known-energy-normalization} are explicit. The walk is the hypercube walk, so $\omega=1/n$. The mean $\mathbb E[G]$ and the bounds on $\mathcal L_2(G)$ are given in the corresponding application proofs. No largest-eigenvalue estimation is performed.

\paragraph{Recovering $H_{\max}$.}
We now describe an outer procedure that determines $H_{\max}$ without exhaustive search. {In both applications, $f$ is a known positive rescaling of the centered objective $H$, so the oracle $O_f$ equivalently evaluates $H$ exactly in binary.\par} Let $\delta_H$ be the spacing of its fixed-point representation, and let $W$ be the known total constraint weight. In both applications,
\begin{equation}
    \mathbb E[H]=0,
    \qquad
    0\leq H_{\max}\leq W,
    \qquad
    \log\frac{W}{\delta_H}=\operatorname{poly}(n).
    \label{eq:objective-range-and-spacing}
\end{equation}
If $H_{\max}=0$, then $H$ is identically zero and every assignment is optimal. Assume below that $H_{\max}>0$.

Using the common upper bound $M_{\max}$ fixed above, choose
\begin{equation}
    \xi:=\frac{1}{n^2(M_{\max}+1)^2}.
\end{equation}
Starting with $U_0=W$, form adjacent intervals
\begin{equation}
    U_j:=\frac{W}{(1+\xi)^j},
    \qquad
    L_j:=\frac{U_j}{1+\xi},
    \label{eq:optimum-candidate-grid}
\end{equation}
until $L_j<\delta_H$. The number of intervals is
\begin{equation}
    O\!\left(\xi^{-1}\log\frac{W}{\delta_H}\right)
    =\operatorname{poly}(n).
\end{equation}
Let $j^\star$ be the first index for which $L_j\leq H_{\max}$. Then
\begin{equation}
    L_{j^\star}\leq H_{\max}\leq U_{j^\star}.
    \label{eq:correct-optimum-bracket}
\end{equation}

The candidate routine processes the intervals in decreasing order. Let $A=W$ for \textup{\textsc{MAX-E$k$-LIN2}} and let $A=B$ for \textup{\textsc{MAX-$k$-CSP}}, where $B$ is the shift defined in Section~\ref{subsec:maxkcsp-application}. For interval $j$, use
\begin{equation}
    {G_j(x):=\frac12\left(1+\min\!\left\{1,\frac{H(x)+A}{U_j+A}\right\}\right)},
    \qquad
    T_j:=\{x:H(x)\geq L_j/2\}.
    \label{eq:candidate-objective-threshold}
\end{equation}
This clipping makes every candidate circuit well-defined. For every interval up to and including $j^\star$, one has $U_j\geq H_{\max}$, so the clipping is inactive. For later intervals, clipping can only decrease the mean and cannot increase any one-step difference. Thus a known upper bound on the mean and the required quadratic-variation bound are obtained from the application proofs by replacing $H_{\max}+A$ with $U_j+A$. Consequently, \eqref{eq:known-energy-normalization} gives a valid known normalization for every candidate routine.

The local-search radius is chosen conservatively from the lower endpoint:
\begin{equation}
    r_j:=
    \begin{cases}
      \left\lfloor L_j/(16d_{\max})\right\rfloor,
          & \textup{\textsc{MAX-E$k$-LIN2}},\\[2mm]
      \left\lfloor L_jn/(16\Gamma_{\max}\Sigma)\right\rfloor,
          & \textup{\textsc{MAX-$k$-CSP}}.
    \end{cases}
    \label{eq:candidate-local-radius}
\end{equation}
At the correct interval, the corresponding ball around $x^\star$ remains inside $T_j$. The proof is the same as in Section~\ref{sec:tilted-walk-analysis}: moving at most $r_j$ steps decreases $H$ by at most $L_j/8$, while $H_{\max}\geq L_j$.

All remaining parameters are computed from $L_j$ and $U_j$. For example, the two levels used in the amplitude theorem can be chosen as
\begin{equation}
    {1-\eta_j:=\frac12\left(1+\frac{A+L_j/2}{A+U_j}\right)},
    \qquad
    {1-\rho_j:=\frac12\left(1+\frac{A+3L_j/4}{A+U_j}\right)}.
    \label{eq:candidate-levels}
\end{equation}
The candidate routine computes $b$ and $M$ from the bounds associated with $L_j$ and $U_j$ using the formulas in Theorem~\ref{thm:normalized-amplitude-gain} and the corresponding application proof. At $j=j^\star$, the ratio $U_j/L_j=1+\xi$. Repeating the application proofs with \eqref{eq:candidate-objective-threshold}--\eqref{eq:candidate-levels} changes each constant in the amplitude exponent by $O(\xi)$ and changes each selected power by at most $O(1+M_{\max}\xi)$. Therefore the marked-amplitude lower bound at the correct interval differs from the known-$H_{\max}$ bound by at most a factor $2^{o(n)}$. In particular, the candidate routine has the same exponential running-time bound.

Each candidate routine is run for its candidate-dependent time bound and then terminated. Threshold-state preparation uses the standard doubling version of amplitude amplification: if no point of $T_j$ is found within this budget, the routine returns \textsc{null} and proceeds to the next interval. It therefore terminates even when $T_j$ is empty. For the intervals preceding $j^\star$, the candidate lower endpoint is larger, so the successful-ball and amplitude exponents give a running-time bound no larger than the bound at $j^\star$. Hence all stages up to the correct interval cost at most a polynomial factor times the correct-stage runtime.

It remains to identify the correct interval without knowing $H_{\max}$. For a center $x$, compute the exact local value
\begin{equation}
    h_j(x):=\max_{z\in B_J(x,r_j)}H(z).
\end{equation}
First test whether $h_j(x)\geq L_j$ by amplitude detection. For every $j<j^\star$, one has $L_j>H_{\max}$, so the marked amplitude is zero. At $j=j^\star$, every successful center satisfies $h_j(x)=H_{\max}\geq L_j$, and the marked amplitude has the lower bound just described. Amplitude detection therefore distinguishes the two cases within the candidate running time.

Once the first nonempty interval is found, use the same test with thresholds inside $[L_j,U_j]$. If a threshold $v$ satisfies $v\leq H_{\max}$, all successful centers are marked; if $v>H_{\max}$, no center is marked. Since $H$ has an exact binary representation, a bitwise binary search using $O(\log(W/\delta_H))$ amplitude tests recovers $H_{\max}$ exactly. The algorithm then constructs the exact normalization of $G$, runs the main tilted-walk algorithm once, and verifies the returned assignment by evaluating $H$.

The candidate grid and the bitwise search both have polynomial size. Assigning inverse-polynomial failure probability to each test and taking a union bound gives bounded total error. Their combined overhead is polynomial, while the $O(\xi)$ perturbation contributes only $2^{o(n)}$. Thus removing prior knowledge of $H_{\max}$ does not change any of the exponential runtime bounds.\par}

\section{Conclusion}
\label{sec:conclusion}
We introduced quantum tilted walks as a framework for transferring amplitude between a chosen initial state and a chosen successful set. At zero bias, $C_{b,M}$ reduces to an average of ordinary walk powers. When a sufficiently large power approximates the projector onto the ground state of the tilted Hamiltonian, the construction recovers generalized short-path algorithms. Quantum tilted walks instead bound the successful-set amplitude produced by $C_{b,M}$ directly, so the transformation is not required to approximate this projector and the analysis does not require a lower bound on the spectral gap of the tilted Hamiltonian. Its performance is determined by a positive path gain and an upper normalization certificate for the tilted operator.

For the threshold states arising in conditioning-and-search, $C_{b,M}$ gives an additional exponential amplitude gain. We used this gain to obtain runtimes
\begin{equation}
    2^{(1-c^{\mathrm{cl}}-\kappa)n/2+o(n)}
\end{equation}
for weighted \textsc{MAX-E$k$-LIN2} and weighted \textsc{MAX-$k$-CSP}, where $2^{(1-c^{\mathrm{cl}})n+o(n)}$ is the corresponding classical conditioning-and-search runtime and $\kappa>0$ under the stated assumptions. Thus tilted walks improve on the ordinary quadratic quantum speedup of the classical algorithm.

The framework also decouples the initial state from the mixer. Theorem~\ref{thm:warm-start-gain} replaces stationarity by the pointwise factor $\lambda_\varphi$ and the lower bound~\eqref{eq:warm-score-regularity} on $K_\varphi G$. The example in Section~\ref{sec:warm-starts} shows that these conditions can hold even when both the spectral gap of the mixer and the spectral gap of the tilted Hamiltonian are exponentially small. It separates the tilted-walk guarantee from the generalized short-path guarantee proved by Chakrabarti et al.~\cite{chakrabarti2025generalizedshortpathalgorithms} for the same mixer; its target is easy to construct, so it is not a hard optimization result.

\section*{AI Usage Disclosure}
\phantomsection\label{sec:ai_use}
The main ideas, methods, and proofs in this paper are due to the authors. Large Language Models have been used to help with writing and grammar, and to audit the technical material for rigor and completeness. All outputs generated by AI have been separately checked by the authors, who are responsible for the correctness of all statements in the draft.

\section*{Acknowledgment}
We thank Rob Otter and Ruslan Shaydulin for executive support and valuable feedback on this project. We thank Dylan Herman for reviewing the earlier drafts of this work. We also acknowledge our colleagues at the Global Technology Applied Research Center of JPMorganChase.

\section*{Disclaimer}
This paper was prepared for informational purposes by the Global Technology Applied Research center of JPMorgan Chase \& Co. This paper is not a product of the Research Department of JPMorgan Chase \& Co. or its affiliates. Neither JPMorgan Chase \& Co. nor any of its affiliates makes any explicit or implied representation or warranty and none of them accept any liability in connection with this paper, including, without limitation, with respect to the completeness, accuracy, or reliability of the information contained herein and the potential legal, compliance, tax, or accounting effects thereof. This document is not intended as investment research or investment advice, or as a recommendation, offer, or solicitation for the purchase or sale of any security, financial instrument, financial product or service, or to be used in any way for evaluating the merits of participating in any transaction.

\printbibliography

\end{document}